\newcommand{\eqref}[1]{(\ref{#1})}
\renewcommand{\phi}{\varphi}
\newcommand{\N}{\mathbb{N}}
\newcommand{\R}{\mathbb{R}}
\newcommand{\C}{\mathbb{C}}
\newcommand{\E}{\mathbb{E}}
\renewcommand{\P}{\mathbb{P}}
\newcommand{\U}{\mathcal{U}}
\newcommand{\D}{\mathcal{D}}
\newcommand{\M}{\mathcal{M}}
\newcommand{\F}{\mathcal{F}}
\newcommand{\mcC}{\mathcal{C}}
\renewcommand{\S}{\mathcal{S}}
\newcommand{\V}{\mathcal{V}}
\newcommand{\ol}{\overline}
\newcommand{\gammat}{\tilde{\gamma}}
\newcommand{\fhat}{\hat{f}}
\newcommand{\isom}{\simeq}
\newcommand{\trace}{\operatorname{Tr}}
\newcommand{\tracenorm}{\operatorname{tr}}
\newcommand{\Wg}{\operatorname{Wg}}
\newcommand{\Mob}{\operatorname{Mob}}
\newcommand{\Rem}{\operatorname{Rem}}
\newcommand{\I}{\mathrm{I}}
\newcommand{\id}{\operatorname{id}}
\renewcommand{\Re}{\operatorname{Re}}
\renewcommand{\Im}{\operatorname{Im}}
\newcommand{\iy}{\infty}
\newtheorem{theorem}{Theorem}[section]
\newtheorem{proposition}[theorem]{Proposition}
\newtheorem{lemma}[theorem]{Lemma}
\newtheorem{conjecture}[theorem]{Conjecture}
\newtheorem{corollary}[theorem]{Corollary}
\begin{document}
\begin{frontmatter}

\title{Gaussianization and eigenvalue statistics for random quantum
channels (III)}
\runtitle{Random quantum channels III}

\begin{aug}
\author[A]{\fnms{Beno\^{i}t} \snm{Collins}\corref{}\ead[label=e1]{bcollins@uottawa.ca}\thanksref{t1,t2}}
\and
\author[B]{\fnms{Ion} \snm{Nechita}\ead[label=e2]{inechita@uottawa.ca}\thanksref{t2}}
\pdfauthor{Benoit Collins, Ion Nechita}
\runauthor{B. Collins and I. Nechita}
\affiliation{University of Ottawa}
\thankstext{t1}{Supported in part by ANR GranMa and ANR Galoisint.}
\thankstext{t2}{Supported in part by NSERC
Grant RGPIN/341303-2007.}
\address[A]{D\'epartement de Math\'ematique\\
\quad et Statistique\\
Universit\'e d'Ottawa\\
585 King Edward\\
Ottawa, Ontario K1N6N5\\
Canada\\
and\\
CNRS\\
Institut Camille Jordan\\
Universit\'e Lyon 1\\
43 Bd du 11 Novembre 1918\\
69622 Villeurbanne\\
France\\
\printead{e1}} 
\address[B]{D\'epartement de Math\'ematique\\
\quad et Statistique\\
Universit\'e d'Ottawa\\
585 King Edward\\
Ottawa, Ontario K1N6N5\\
Canada\\
\printead{e2}}
\end{aug}

\received{\smonth{4} \syear{2010}}

%
\begin{abstract}
In this paper, we present applications of the calculus developed in
Collins and Nechita
[\textit{Comm. Math. Phys.} \textbf{297} (2010) 345--370] and obtain
an exact
formula for the moments of random quantum channels whose
input is a pure state thanks to Gaussianization methods.
Our main application is an in-depth study of the random matrix model
introduced by Hayden and Winter [\textit{Comm. Math. Phys.} \textbf
{284} (2008) 263--280]
and used recently by Brandao and Horodecki [\textit{Open Syst. Inf.
Dyn.} \textbf{17} ({2010}) 31--52] and
Fukuda and King [\textit{J. Math. Phys.} \textbf{51} ({2010}) 042201]
to refine the Hastings counterexample to the
additivity conjecture in quantum information theory. This model is
exotic from the point of
view of random matrix theory as its eigenvalues obey two different
scalings simultaneously.
We study its asymptotic behavior and obtain an asymptotic expansion for
its von Neumann entropy.
\end{abstract}

%
\begin{keyword}[class=AMS]
\kwd[Primary ]{15A52}
\kwd[; secondary ]{94A17}
\kwd{94A40}.
\end{keyword}
\begin{keyword}
\kwd{Random matrices}
\kwd{Weingarten calculus}
\kwd{quantum information theory}
\kwd{random quantum channel}.
\end{keyword}

\end{frontmatter}

\section{Introduction}

In the paper~\cite{collins-nechita-1} we developed a calculus
permitting the computation of any moments
of random quantum channels.
It has already proven useful in understanding the random matrix models involved
in the additivity violation theorems and in
improving lower bounds of dimensions needed to obtain
violation of the additivity of entropy estimates (developed in \cite
{collins-nechita-1,collins-nechita-2,collins-nechita-4}), as well as
in the study of random quantum states associated with graphs \cite
{collins-nechita-zyczkowski}.
In the present work we study two more applications of our calculus, to
new random matrix models
introduced for quantum information theoretic purposes.

The first application is of theoretical interest and of a nonasymptotic
nature: we extend our calculus
to Gaussian matrices and show that it yields explicit formulas for the
moments of Wishart matrices and of outputs\vadjust{\goodbreak} of random quantum
channels. The formulas are of a purely combinatorial nature and make it
possible to bypass Weingarten calculus, whose
asymptotic estimates can be involved.
For this we use a ``Gaussianization'' method.

The second application is an extended
study of the random matrix model that was introduced by Hayden and
Winter in~\cite{hayden-winter}
and used recently in   \mbox{\cite
{fukuda-king,fukuda-king-moser,brandao-horodecki}} to refine the results
of Hastings~\cite{hastings}.
As a motivation, let us recall the quantum information theoretic
context of this random matrix. A~\textit{quantum channel} is a linear completely positive
trace-preserving map $\Phi\dvtx \M_n(\C)\to\M_k(\C)$.
A~\textit{density matrix} is a self-adjoint positive semidefinite
matrix with trace $1$.
Let $\Delta_k = \{x \in\R_+^k \mid\sum_{i=1}^k x_i = 1\}$ be the
$(k-1)$-dimensional probability simplex.
The \textit{Shannon entropy} of $x$ is defined to~be
\[
H(x) = -\sum_{i=1}^k x_i \log x_i.
\]
These definitions are extended to density matrices by functional calculus:
\[
H(\rho) = -\trace\rho\log\rho.
\]
For a quantum channel $\Phi\dvtx \M_n(\C) \to\M_k(\C)$, its \textit
{minimum output entropy} is defined by
\[
H_{\min}(\Phi) = \mathop{\mathop{\min}_{\rho\in\M_n(\C) }}_{
\rho\geq
0, \trace{\rho}=1} H(\Phi(\rho)).
\]
The additivity conjecture for minimum output entropies is arguably one
of the most important in quantum information theory, and it
can be stated as follows.
\begin{conjecture}
For all quantum channels $\Phi_1$ and $\Phi_2$, we have
%
%
\begin{equation}\label{eq:additivity_H1}
H_{\min}(\Phi_1 \otimes\Phi_2) = H_{\min}(\Phi_1) + H_{\min
}(\Phi_2).
\end{equation}
\end{conjecture}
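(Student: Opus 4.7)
The inequality $H_{\min}(\Phi_1 \otimes \Phi_2) \leq H_{\min}(\Phi_1) + H_{\min}(\Phi_2)$ is immediate: if $\rho_i$ minimizes the output entropy of $\Phi_i$, then feeding $\rho_1 \otimes \rho_2$ into $\Phi_1 \otimes \Phi_2$ produces $\Phi_1(\rho_1) \otimes \Phi_2(\rho_2)$, whose entropy is the sum of the factor entropies. The content of \eqref{eq:additivity_H1} is therefore the reverse inequality, asserting that product inputs already achieve the tensor minimum. My plan would be to fix an optimizer $\rho^\star \in \M_{n_1 n_2}(\C)$ (which by compactness and concavity of $\rho \mapsto H((\Phi_1 \otimes \Phi_2)(\rho))$'s opposite may be taken pure) and try to show $H((\Phi_1 \otimes \Phi_2)(\ket{\psi}\bra{\psi})) \geq H_{\min}(\Phi_1) + H_{\min}(\Phi_2)$ for every unit vector $\ket{\psi} \in \C^{n_1} \otimes \C^{n_2}$.

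A first natural attempt is via concavity: decompose $\rho^\star$ as a convex combination of product states and use $H((\Phi_1 \otimes \Phi_2)(\cdot))$'s concavity together with additivity on product inputs. The obstruction is that the Schmidt decomposition expresses $\ket{\psi}$ as a \emph{coherent superposition}, not a mixture, of product vectors. A second attempt is via the complementary channels $\Phi_i^c$: since $H_{\min}(\Phi_i) = H(\Phi_i^c(\ket{\phi_i}\bra{\phi_i}))$ for the optimal input $\ket{\phi_i}$, one could invoke strong subadditivity on the joint Stinespring dilation to force the two environments to absorb at least $H_{\min}(\Phi_1) + H_{\min}(\Phi_2)$ of entropy. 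A third, analytic route is to prove additivity of the minimum output $p$-Renyi entropy for $p$ in a neighborhood of $1$ and pass to the limit $p \to 1^+$, reducing a von Neumann statement to a spectral majorization problem.

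\textbf{Main obstacle.} The crux of each approach is to rule out constructive/destructive interference between the Kraus operators of $\Phi_1 \otimes \Phi_2$ acting on a highly entangled input. Such interference can concentrate the output spectrum in ways no product input achieves, and quantifying this is precisely where every candidate proof stalls. As the abstract of the paper indicates, the conjecture is in fact \emph{false}: Hastings \cite{hastings} produced a random-channel counterexample, and the refinements of Brandao--Horodecki and Fukuda--King sharpen the magnitude of the violation. Consequently, any attempted proof of \eqref{eq:additivity_H1} must fail, and the purpose of the ensuing analysis in the paper is not to establish additivity but to study, through the gaussianization calculus of \cite{collins-nechita-1}, precisely the random matrix model that witnesses its breakdown and to quantify the associated entropy deficit.
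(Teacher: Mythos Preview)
Your analysis is correct and matches the paper's treatment exactly: the statement is labeled a \emph{Conjecture}, not a theorem, and the paper does not attempt to prove it. Immediately after stating it, the paper records that it was disproved by Hastings \cite{hastings} via the choice $\Phi_1 = \overline{\Phi_2}$, and the rest of the paper is devoted to analyzing the random matrix model underlying that counterexample. Your observation that the easy direction $H_{\min}(\Phi_1 \otimes \Phi_2) \leq H_{\min}(\Phi_1) + H_{\min}(\Phi_2)$ holds, and that the reverse inequality is the false one, is accurate; your diagnosis that interference effects on entangled inputs are the mechanism of failure is exactly the Hayden--Winter/Hastings insight the paper builds on.
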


This conjecture was disproven by Hastings in~\cite{hastings} as follows.
\begin{theorem}
There exists a counterexample to the conjecture for the choice $\Phi
_1=\overline{\Phi_2}$.
\end{theorem}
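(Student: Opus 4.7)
The plan is to produce $\Phi_1 = \Phi$ and $\Phi_2 = \overline{\Phi}$ probabilistically via the Stinespring dilation: pick a Haar-random isometry $V: \C^n \to \C^k \otimes \C^d$ and set $\Phi(\rho) = (\id_k \otimes \trace)(V\rho V^*)$, so that $\overline{\Phi}$ corresponds to $\overline{V}$. Since $H_{\min}(\overline{\Phi}) = H_{\min}(\Phi)$, it suffices to choose $n, k, d$ such that, with positive probability over $V$,
\[
 H_{\min}(\Phi \otimes \overline{\Phi}) \;<\; 2\, H_{\min}(\Phi).
\]
The proof then splits into two complementary estimates, plus a delicate matching of the constants.

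For the upper bound on $H_{\min}(\Phi \otimes \overline{\Phi})$, I would feed in the maximally entangled input $\ket{\Psi} \in \C^n \otimes \C^n$. A short computation based on $(V \otimes \overline{V})\ket{\Psi}$ shows that the output decomposes as $\gamma\, \ketbra{\Psi^+}{\Psi^+} + (1-\gamma)\sigma$, where $\ket{\Psi^+}$ is the maximally entangled state on $\C^k \otimes \C^k$, $\sigma$ is orthogonal to it, and $\gamma$ is of order $1/d$. Since a generic bipartite output would only have eigenvalues of order $1/k^2$, this spectral gap translates into $H_{\min}(\Phi \otimes \overline{\Phi}) \leq 2\log k - c$ for an explicit positive $c = c(k/d)$.

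For the lower bound on $H_{\min}(\Phi)$, I would follow the net-plus-concentration strategy of Hayden and Winter. For any fixed input $\ket{\phi}$, the map $V \mapsto H(\Phi(\ketbra{\phi}{\phi}))$ is Lipschitz on the unitary group, so L\'evy's concentration lemma gives Gaussian concentration around its mean, which can be computed by the moment calculus of \cite{collins-nechita-1} (equivalently, Page's formula) to be $\log k - O(k/d)$. I would then take an $\eta$-net of the unit sphere of $\C^n$ of cardinality $\exp(O(n \log(1/\eta)))$, apply the concentration bound to every net point via a union bound, and extend to the full sphere by Lipschitz continuity in the input. For $kd$ sufficiently larger than $n$ (up to logarithmic factors), this yields $H_{\min}(\Phi) \geq \log k - \delta$ with probability tending to $1$.

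The main obstacle is the quantitative balance: both the gain $c$ in the upper bound and the deficit $\delta$ in the lower bound are of order $k/d$, so the strict inequality $2\delta < c$ that actually produces the counterexample requires a careful comparison of leading constants rather than just orders of magnitude. Hastings' key insight is precisely that this comparison falls in the right direction provided $d$ is taken sufficiently larger than $k$, with $n$ of order $kd$; once this numerical inequality is checked, any sample $V$ in the positive-probability event on which both estimates hold produces the claimed counterexample.
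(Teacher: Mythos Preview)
The paper does not prove this theorem at all: it is stated in the introduction as Hastings' result and attributed to \cite{hastings}, with no argument given beyond the one-line remark that the choice $\Phi_1=\overline{\Phi_2}$ forces a large eigenvalue for the output of the Bell state. The rest of the paper takes this theorem as motivation and instead studies the spectral statistics of the output matrix $Z$ in detail (Theorems \ref{thm:asympt-Z}, \ref{thm:asmpt-QZQ}, \ref{thm:asymptotics-Z}); it never revisits the additivity violation itself. So there is no ``paper's own proof'' to compare your sketch against.

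That said, your outline is broadly the right shape for Hastings' argument, but one step is too optimistic. The upper bound on $H_{\min}(\Phi\otimes\overline{\Phi})$ via the maximally entangled input is exactly the Hayden--Winter mechanism and is fine. The lower bound on $H_{\min}(\Phi)$, however, cannot be obtained by the ``net-plus-L\'evy'' concentration argument you describe: that strategy is precisely what Hayden and Winter used for the $p$-R\'enyi entropies with $p>1$, and it is well known to fail at $p=1$ because the Lipschitz constant of $V\mapsto H(\Phi(\ketbra{\phi}{\phi}))$ picks up a $\log k$ factor, which destroys the union bound over a net of size $\exp(\Theta(n))$. Hastings' actual lower bound is substantially more delicate --- it does not control $H_{\min}(\Phi)$ directly by concentration of the entropy, but rather via a probabilistic decomposition of the output state and a separate analysis of ``rare'' large Schmidt coefficients (what later expositions call the tube lemma). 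Your last paragraph acknowledges a quantitative tension but frames it as a comparison of leading constants; the genuine obstruction is structural, and getting past it is the content of \cite{hastings}.
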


In the proof of~\cite{hastings}, one reason why $\Phi_1=\overline
{\Phi_2}$ yields a counterexample is that it ensures that the largest
eigenvalue
of outputs of well-chosen inputs---Bell states---is much bigger than
the other eigenvalues.
The counterexamples to the additivity conjecture obtained thus far use
a random matrix model which we redefine in Section~\ref{subsec:conjugate}
and call $Z_n$.
The main result of this paper is as follows (the dimension ratio $c$ is
a fixed positive constant).

\begin{theorem}
As $n \to\iy$, $k \sim cn$, the eigenvalues $\lambda_1 \geq\cdots
\geq\lambda_{n^2}$ of $Z_n$ are such that:
\begin{itemize}
\item
in probability, $cn\lambda_1\to1$;\vadjust{\goodbreak}
\item
almost surely,
$\frac{1}{n^2-1}\sum_{i=2}^{n^2}\delta_{c^2n^2\lambda_i}$ converges
to a Marchenko--Pastur distribution of parameter $c^{2}$;
\item
almost surely,
\[
H(Z_n) =
\cases{\displaystyle
2\log n - \frac{1}{2c^2}+o(1), & \quad   if   $c\geq1$,\cr\displaystyle
2\log(cn) - \frac{c^2}{2}+o(1), & \quad   if   $0<c<1$,
}
\]
as $n\to\infty$, where $H$ is the von Neumann entropy.
\end{itemize}
\end{theorem}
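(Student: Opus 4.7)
The plan is to separate $Z_n$ into a rank-one ``Bell'' component and a Marchenko-Pastur-type bulk, and to extract the entropy from both. Recall that $Z_n = (\Phi \otimes \ol\Phi)(\ketbra{\Bell}{\Bell})$, where $\Phi$ is the random channel built from a Haar isometry $V$. From the construction of Section~\ref{subsec:conjugate}, one can single out a distinguished unit vector $|\chi\rangle\in \C^n\otimes\C^n$ (essentially the image of the Bell state under the natural doubling) and set $\alpha = \langle\chi|Z_n|\chi\rangle$; this scalar has expected size $1/(cn)$, and $Z_n = \alpha\ketbra{\chi}{\chi} + R_n$ with $R_n$ positive and essentially orthogonal to $|\chi\rangle$. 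The first item of the theorem then reduces to a concentration statement for $\alpha$: the Weingarten calculus of \cite{collins-nechita-1} gives $\E\,\alpha = 1/(cn) + O(n^{-2})$ and $\mathrm{Var}(\alpha) = O(n^{-3})$, and Chebyshev yields $cn\lambda_1(Z_n)\to 1$ in probability.

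For the bulk, apply the gaussianization method developed in the preceding sections to the rescaled matrix $c^2 n^2 R_n$. Expanding the Haar-isometry moments as sums over pair partitions and isolating the non-crossing pairings as the dominant contributions yields
\begin{equation*}
\frac{1}{n^2-1}\,\E\,\trace(c^2 n^2 R_n)^p \longrightarrow \int x^p\, d\Pi_{c^2}(x), \qquad n\to\iy,
\end{equation*}
where $\Pi_{c^2}$ is the Marchenko-Pastur law of parameter $c^2$. The same expansion yields a variance bound $O(n^{-2})$ for the normalized trace. A Borel-Cantelli argument upgrades these to almost sure moment convergence, and hence to almost sure weak convergence of $\frac{1}{n^2-1}\sum_{i\geq 2}\delta_{c^2 n^2\lambda_i}$ to $\Pi_{c^2}$, which is the second item.

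For the entropy, split $H(Z_n) = -\lambda_1\log\lambda_1 - \sum_{i\geq 2}\lambda_i\log\lambda_i$. The first term is $O(n^{-1}\log n)$ by the first item, hence negligible. Substituting $\mu_i = c^2 n^2\lambda_i$ and using $\sum_{i\geq 2}\lambda_i = 1-\lambda_1$,
\begin{equation*}
-\sum_{i\geq 2}\lambda_i\log\lambda_i = (1-\lambda_1)\log(c^2 n^2) - \frac{n^2-1}{c^2 n^2}\cdot\frac{1}{n^2-1}\sum_{i\geq 2}\mu_i\log\mu_i.
\end{equation*}
The first term is $2\log(cn) + o(1)$. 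For the second, the previous step together with a uniform-integrability argument for $x\log x$ gives almost sure convergence of the normalized sum to $\int x\log x\, d\Pi_{c^2}(x)$. Explicit evaluation of this Marchenko-Pastur integral produces distinct closed forms in the regimes $c\leq 1$ and $c\geq 1$, the dichotomy reflecting the atom at $0$ present in $\Pi_{c^2}$ when $c>1$; combining with the leading $2\log(cn) = 2\log n + 2\log c$ and simplifying produces the stated formulae.

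The main technical obstacle is the uniform-integrability step. Moment convergence alone is not enough to yield convergence of $\int x\log x$ against the empirical law. Upper-edge control comes from high-moment bounds on $\trace R_n^p$ with $p$ growing with $n$. Near the lower edge, although the support of $\Pi_{c^2}$ stays away from zero (apart from an atom at zero when $c>1$, which contributes trivially), one still has to rule out stray small eigenvalues of $R_n$. This typically requires a refined Weingarten-type estimate at subleading order, or a direct least-singular-value argument.
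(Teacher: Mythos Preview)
Your outline has the right architecture---separate the large eigenvalue from a Marchenko--Pastur bulk, then assemble the entropy---but several of the steps, as written, do not go through, and the tools you invoke are not the ones that actually work here.

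\medskip

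\textbf{Upper bound on $\lambda_1$.} Concentration of $\alpha=\langle\chi|Z_n|\chi\rangle$ around $1/(cn)$ only gives $\lambda_1\geq\alpha$, hence the lower half of $cn\lambda_1\to 1$. You never control $\lambda_1$ from above. The paper gets this from a moment computation: it proves $\E\trace\bigl((cnZ)^p\bigr)=1+O(n^{-1})$ for all $p\geq 3$ (Theorem~\ref{thm:asympt-Z}), and then uses the deterministic Hayden--Winter bound $cn\lambda_1\geq 1$ together with $cn\lambda_1\leq(cn\lambda_1)^3\leq\trace\bigl((cnZ)^3\bigr)$ to squeeze $\E[cn\lambda_1]=1+O(n^{-1})$. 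Your Chebyshev argument on $\alpha$ does not supply this.

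\medskip

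\textbf{The bulk and gaussianization.} Gaussianization, as developed earlier in the paper, applies to a \emph{single} channel with a rank-one input; it does not apply to the conjugate bi-channel $\Phi\otimes\ol\Phi$ acting on the Bell state. The paper does not gaussianize here: it works with $QZQ$ where $Q=\I-E$, expands $\trace\bigl((QZQ)^p\bigr)$ as a signed sum over choice functions $f\in\{\I,E\}^p$, and then proves an \emph{exact, non-asymptotic} cancellation: for every permutation $\alpha$ in the set $\V$ (those with $\alpha\delta$ having a fixed point---precisely the permutations responsible for the $1/(cn)$ eigenvalue in Theorem~\ref{thm:asympt-Z}), the sum over $f$ vanishes identically. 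Only after this cancellation does the remaining sum have the right order to produce Marchenko--Pastur moments. This combinatorial cancellation is the heart of the argument and is absent from your sketch. Your object $R_n=Z_n-\alpha\ketbra{\chi}{\chi}$ is also not positive in general (since $\ket{\chi}$ is not an eigenvector of $Z_n$), which makes your moment computation for $R_n$ ill-posed; the paper's $QZQ$ is automatically positive, and Cauchy interlacing then transfers its spectrum to $\lambda_2,\ldots,\lambda_{n^2}$.

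\medskip

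\textbf{Uniform integrability.} You flag this as the main obstacle and leave it open, worrying in particular about small eigenvalues. In fact the lower edge is harmless: $x\mapsto x\log x$ extends continuously to $0$, so stray small eigenvalues contribute nothing. The paper handles the whole issue cleanly with Theorem~\ref{strengthened-weak}: convergence in \emph{all} moments to a compactly supported limit implies convergence of $\int f\,d\nu_n$ for any continuous $f$ of polynomial growth, via a Chebyshev-type tail bound. No growing-$p$ moment estimates or least-singular-value arguments are needed.
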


The interest of this result is that
it yields improvements to the results of
\cite{fukuda-king,fukuda-king-moser,brandao-horodecki,hastings}, as
the only data that
these papers were using was a lower bound on the largest eigenvalue of $Z_n$,
whereas the above theorem gives a full understanding of the eigenvalue
behavior of
$Z_n$.

In addition, the matrix model $Z_n$ has the
novel property that it has two different regimes for its eigenvalues
(one in $n^{-1}$ and one in $n^{-2}$). As far as we know,
it is the first model in random matrix theory whose eigenvalues have
two regimes simultaneously.

The proof of the main theorem
uses a mix of moment methods and functional calculus methods.
It is very instructive, as the moment method is used
to prove the convergence in distribution of the eigenvalues of smaller
decay, and this goes beyond the standard
intuition that moment methods instead give results about the larger eigenvalues.
Actually, our Theorem~\ref{thm:asmpt-QZQ} shows
new kinds of cancellation properties,
going beyond those which are usually expected with standard
``moments--cumulants'' and
``connectedness'' arguments.

This paper is organized as follows.
We first recall some known facts about Wick calculus, Weingarten
calculus and noncommutative and
free probability theory. We also recall our graphical calculus
introduced in~\cite{collins-nechita-1}
and extend it to Gaussian graphical calculus.
We use this to obtain new nonasymptotic results for the moments of some
single random channels.
We obtain further asymptotic results in the single random channel
setting, and we
then return to the random matrix model introduced in the bi-channel
setting by
Hayden and Winter, computing the asymptotics of the subleading eigenvalues.

\section{Wick calculus and Weingarten calculus}

In this section we recall known results which allow the computation of
expectations
against Gaussian measures and Haar measures on unitary groups, as well
as some standard facts in free probability theory.

\subsection{Wick calculus}

A \textit{Gaussian space} $V$ is a real vector space of random variables
with moments of all orders such that each of these random variables
are centered Gaussian distributions.
Such a Gaussian space comes with a positive symmetric bilinear form
$(x,y)\to\E[xy]$.\vadjust{\goodbreak}
Gaussian spaces are in one-to-one cor\-respondence
with Euclidean spaces, and isomorphisms of Gaussian spaces
correspond to the notion of isomorphisms of Euclidean spaces.
In particular, the Euclidean norm of a random variable
determines it fully (via its variance)  and if two
\mbox{random} variables are given, then their joint distribution is
determined by their angle.
The following is usually called the Wick lemma.

\begin{lemma}\label{lem:wick-lemma}
Let $V$ be a Gaussian space and $x_{1}, \ldots, x_k$ be elements in~$V$.
If $k=2l+1,$ then $\E[x_1\cdots x_k]=0,$ and
if $k=2l,$ then
%
%
\begin{equation}\label{eq:Wick}
\E[x_1\cdots x_k]=\mathop{\mathop{\sum}_{p=\{\{i_1,j_1\},\ldots
, \{
i_l,j_l\}\} }}_{
\mathrm{pairing\ of\ }\{1,\ldots,k\} } \prod
_{m=1}^l \E[x_{i_m}x_{j_m}].
\end{equation}
In particular, it follows that if $x_1,\ldots, x_p$ are independent
standard Gaussian random variables, then
\[
\E[x_1^{k_1}\cdots x_{p}^{k_p}]=\prod_{i=1}^p (2k_i)!! .
\]
\end{lemma}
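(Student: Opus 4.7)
The plan is to reduce the Wick identity to the explicit form of the Laplace transform of a centered Gaussian vector. As recalled just before the lemma, a Gaussian space is determined by its covariance form, so the joint law of $(x_1,\ldots,x_k)$ is prescribed by the matrix $C_{ij}=\E[x_i x_j]$, and for every $t\in\R^k$ one has
\[
\E\bigl[\exp(\textstyle\sum_i t_i x_i)\bigr]=\exp\bigl(\tfrac12\sum_{i,j} t_i t_j C_{ij}\bigr).
\]
The mixed moment $\E[x_1\cdots x_k]$ is then recovered as the mixed partial $\partial_{t_1}\cdots\partial_{t_k}$ of this identity evaluated at $t=0$, and I would match the Taylor expansions of both sides.

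For odd $k$, the invariance of the Gaussian measure under $x\mapsto -x$ forces $\E[x_1\cdots x_k]=(-1)^k\E[x_1\cdots x_k]=0$, which handles the first half of the statement. For even $k=2l$, only the term $\tfrac{1}{l!}\bigl(\tfrac12\sum_{i,j} t_i t_j C_{ij}\bigr)^l$ on the right-hand side contributes to the coefficient of $t_1\cdots t_k$. Expanding this power and collecting terms, each unordered pairing $p=\{\{i_1,j_1\},\ldots,\{i_l,j_l\}\}$ of $\{1,\ldots,2l\}$ is counted $2^l\,l!$ times (once per ordering of the $l$ pairs, and twice per pair for its internal ordering), so the factors $1/l!$ and $1/2^l$ cancel exactly, producing the sum $\sum_p\prod_m C_{i_m j_m}$ claimed in \eqref{eq:Wick}.

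An arguably cleaner alternative, which avoids the exponential bookkeeping, proceeds by induction on $k$ starting from the Gaussian integration-by-parts identity
\[
\E\bigl[x_1 F(x_2,\ldots,x_k)\bigr]=\sum_{j=2}^k \E[x_1 x_j]\,\E\bigl[\partial_j F(x_2,\ldots,x_k)\bigr],
\]
applied to $F(x_2,\ldots,x_k)=x_2\cdots x_k$: this reproduces the pairing recursion directly, the pair containing index $1$ being $\{1,j\}$ for some $j\geq 2$ while the remaining $k-2$ indices are paired by induction. The ``in particular'' consequence for independent standard Gaussians then follows by specializing to $C_{ij}=\delta_{ij}$: only pairings matching copies of the same variable survive, and the number of such pairings factorizes as a short combinatorial count over the groups of equal variables. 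The main obstacle in either route is purely the bookkeeping of combinatorial multiplicities against the number of pairings; no conceptual difficulty is expected.
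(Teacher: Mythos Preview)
Your argument is correct: both the generating-function computation and the integration-by-parts induction are standard, valid proofs of the Wick formula, and your bookkeeping of the $2^l\,l!$ overcount against the prefactor $\tfrac{1}{l!\,2^l}$ is right. The paper itself does not prove this lemma at all; it simply states it and refers the reader to \cite{zvonkin}. So there is no ``paper's approach'' to compare against---you have supplied a proof where the authors chose to cite one.

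One small remark on the ``in particular'' clause: as written in the paper the formula $\prod_i (2k_i)!!$ is a typo (for a standard real Gaussian $\E[x^{2m}]=(2m-1)!!$, not $(2m)!!$), so when you specialize $C_{ij}=\delta_{ij}$ you will obtain the correct $(k_i-1)!!$ factors rather than what is printed. This is an error in the statement, not in your argument.
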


For a proof see, for instance,~\cite{zvonkin}.
It is possible to extend the notion of a Gaussian space to a complex
Gaussian space. A~complex-valued vector space $V$ is called a Gaussian space if and
only if
for any real structure on $V$,
the pair $(\Re(V), \Im(V))$ is a real-valued Gaussian space.
It can be readily checked that in the case of a complex Gaussian space,
the Wick Lemma~\ref{lem:wick-lemma} holds with exactly the same statement.

We will usually denote by $G_{n,m}$ (or $G$ when there is no ambiguity)
the standard complex Gaussian random matrix $n\times m$.
It has the distribution
$\exp(-N\times  \trace(GG^*))\,dG,$ where $dG$ is the Lebesgue measure on the space of the
$n\times m$ complex matrices properly rescaled, and $G^*=\overline
{G}^t$ is
the standard algebraic adjoint operator.

Since we shall mostly be concerned with traces of products of random
matrices in this paper, we need to introduce one last item of notation
for generalized traces, which we borrow from~\cite{bryc}. For some
matrices $A_1, A_2, \ldots, A_s \in\M_n(\C)$, some permutation
$\sigma\in\S_p$ and some function $t\dvtx \{1, \ldots, p\} \to\{1,
\ldots, s\},$ we define
\[
\trace_{\sigma, t}(A_1, \ldots, A_s) = \prod_{c\in\mcC(\sigma
)}\trace\biggl(\operatorname{\vec{\prod}}\limits_{j\in c}A_{t(j)} \biggr),
\]
where $\mcC(\sigma)$ is the set of cycles of $\sigma$. When $s=p$,
we use the simplified notation $\trace_{\sigma, t}(A_1, \ldots, A_p)
= \trace_{\sigma, \id}(A_1, \ldots, A_p)$. We also put $\trace
_\sigma(A) = \trace_\sigma(A, A,\break \ldots, A)$.

\subsection{Weingarten calculus}

In this section, we recall a few facts about Weingarten calculus.\vadjust{\goodbreak}

\begin{definition}
The unitary Weingarten function
$\Wg(n,\sigma)\dvtx \N\times\break \bigcup_{p \in\N^*} \S_p \to\R$
is a function of a dimension parameter $n$ and of a permutation $\sigma$
in the symmetric group $\S_p$.
It is the pseudo-inverse of the function $\sigma\mapsto n^{\# \sigma
}$ under the convolution for the symmetric group ($\# \sigma$ denotes
the number of cycles of the permutation $\sigma$).
\end{definition}

Notice that the function $\sigma\mapsto n^{\# \sigma}$ is invertible
as $n \geq p$.
We refer to~\cite{collins-sniady} for historical references and
further details.
We shall use the shorthand notation $\Wg(\sigma) = \Wg(n, \sigma)$
when the dimension parameter $n$ is obvious.

The function $\Wg$ is used to compute integrals with respect to
the Haar measure on the unitary group.

\begin{theorem}
\label{thm:Wg}
Let $n$ be a positive integer and
$(i_1,\ldots,i_p)$, $(i'_1,\ldots,i'_p)$,
$(j_1,\ldots,j_p)$, $(j'_1,\ldots,j'_p)$
be $p$-tuples of positive integers from $\{1, 2, \ldots, n\}$. Then,
%
\begin{eqnarray}
\label{bid} &&\int_{\U(n)}U_{i_1j_1} \cdots U_{i_pj_p}
\overline{U_{i'_1j'_1}} \cdots
\overline{U_{i'_pj'_p}}\,dU\nonumber
\\[-8pt]
\\[-8pt]
&& \qquad =
\sum_{\sigma, \tau\in\S_{p}}\delta_{i_1i'_{\sigma(1)}}\cdots
\delta_{i_p i'_{\sigma(p)}}\delta_{j_1j'_{\tau(1)}}\cdots
\delta_{j_p j'_{\tau(p)}} \Wg(n,\tau\sigma^{-1}).
\nonumber
\end{eqnarray}

If $p\neq p',$ then
%
%
\begin{equation} \label{eq:Wg_diff} \int_{\U(n)}U_{i_{1}j_{1}}
\cdots
U_{i_{p}j_{p}} \overline{U_{i'_{1}j'_{1}}} \cdots
\overline{U_{i'_{p'}j'_{p'}}}\,dU= 0.
\end{equation}
\end{theorem}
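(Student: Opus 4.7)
The proof naturally splits according to whether $p=p'$. For \eqref{eq:Wg_diff}, I would exploit the invariance of Haar measure under left multiplication by a scalar unitary: the substitution $U\mapsto e^{i\theta}U$ preserves $dU$ but multiplies the integrand by $e^{i(p-p')\theta}$, and averaging over $\theta\in[0,2\pi)$ forces the integral to vanish whenever $p\neq p'$.

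The core of the argument is \eqref{bid}. The plan is to package the left-hand side as a single operator and identify it via Schur--Weyl duality. With $V=\C^n$, define
\[
E = \int_{\U(n)} U^{\otimes p}\otimes\overline{U}^{\otimes p}\, dU,
\]
viewed as an element of $\mathrm{End}(V^{\otimes p})\otimes\mathrm{End}(\overline{V}^{\otimes p})$, so that the integral in \eqref{bid} is exactly the entry of $E$ with row index $(i_1\cdots i_p,\,i'_1\cdots i'_p)$ and column index $(j_1\cdots j_p,\,j'_1\cdots j'_p)$. Under the canonical isomorphism $V^{\otimes p}\otimes\overline{V}^{\otimes p}\simeq\mathrm{End}(V^{\otimes p})$, the operator $E$ acts on $\mathrm{End}(V^{\otimes p})$ as the orthogonal projection, with respect to the Hilbert--Schmidt inner product, onto the subspace fixed by the conjugation action $X\mapsto U^{\otimes p} X (U^{\otimes p})^*$ of $\U(n)$: self-adjointness comes from inverse-invariance of $dU$, and idempotence from left-translation invariance combined with Fubini.

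By Schur--Weyl duality, this fixed subspace is spanned by the permutation operators $\{\rho(\sigma):\sigma\in\S_p\}$ that permute tensor factors of $V^{\otimes p}$, whose Gram matrix in Hilbert--Schmidt is $\trace(\rho(\sigma)\rho(\tau)^{-1}) = n^{\#(\sigma\tau^{-1})}$. Expanding the orthogonal projection in the spanning set $\{\rho(\sigma)\otimes\rho(\tau)\}$ yields
\[
E=\sum_{\sigma,\tau\in\S_p} W(\sigma,\tau)\,\rho(\sigma)\otimes\rho(\tau),
\]
where $W$ is a suitable inverse of the Gram matrix. By bi-invariance of the construction under left/right translation by $\S_p$, the coefficient $W(\sigma,\tau)$ depends only on $\tau\sigma^{-1}$ and must coincide with the convolution inverse on $\S_p$ of $\sigma\mapsto n^{\#\sigma}$, which is exactly $\Wg(n,\cdot)$ by definition. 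Substituting the Kronecker-delta matrix elements of the permutation operators then recovers \eqref{bid} term by term.

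The main technical obstacle is the small-$n$ regime ($n<p$), in which the $\rho(\sigma)$ become linearly dependent and the Gram matrix singular, so $\Wg$ fails to exist as a literal convolution inverse. The excerpt sidesteps this via the ``$n$ large'' assumption; to cover all $n$ one would define $\Wg$ through a pseudo-inverse on the image of the Gram operator and either argue by polynomial continuation from large $n$ or verify directly that $E$ still projects onto the (possibly smaller) span of the $\rho(\sigma)$ with the same coefficient formula.
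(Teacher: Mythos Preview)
The paper does not actually prove this theorem: it is stated in the preliminaries section as a recalled fact from Weingarten calculus, with the remark ``We refer to \cite{collins-sniady} for historical references and further details.'' So there is no in-paper proof to compare against.

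That said, your argument is correct and is essentially the standard proof found in the reference the paper cites. The scalar-invariance trick for $p\neq p'$ is exactly right. For \eqref{bid}, your identification of $E=\int U^{\otimes p}\otimes \overline{U}^{\otimes p}\,dU$ with the orthogonal projection onto the commutant of the diagonal $\U(n)$-action, followed by Schur--Weyl to identify that commutant as the span of permutation operators, and then inversion of the Gram matrix $\sigma\mapsto n^{\#\sigma}$, is the Collins--\'Sniady line of reasoning. Your observation that bi-invariance forces $W(\sigma,\tau)$ to depend only on $\tau\sigma^{-1}$ is the key structural point that pins the coefficient down as the convolution inverse, i.e.\ $\Wg$.

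Your handling of the $n<p$ degeneracy is also appropriate: the paper's Definition of $\Wg$ explicitly restricts to $n$ large (``the function $\sigma\mapsto n^{\#\sigma}$ is invertible as $n$ is large''), so the pseudo-inverse or polynomial-continuation fix you mention is exactly what would be needed to extend the statement to all $n$, and is what the literature does.
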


We are interested in the values of the Weingarten function in the limit
$n \to\iy$. The following result encloses all the data we need for
our computations
relating to the asymptotics of the $\Wg$ function; see \cite
{collins-imrn} for a proof.

\begin{theorem}\label{thm:mob} For a permutation $\sigma\in\S_p$,
let $\operatorname{Cycles}(\sigma)$ denote the set of cycles of $\sigma$. Then,
%
%
\begin{equation}
\Wg(n,\sigma)= \prod_{c\in\operatorname{Cycles} (\sigma)}\Wg(n,c)\bigl(1+O(n^{-2})\bigr)
\end{equation}
and
%
%
\begin{equation}
\Wg(n,(1,\ldots,d) ) = (-1)^{d-1}c_{d-1}\prod_{-d+1\leq j \leq
d-1}(n-j)^{-1},
\end{equation}
where $c_i=\frac{(2i)!}{(i+1)! i!}$ is the $i$th Catalan number.
\end{theorem}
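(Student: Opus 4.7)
My plan is to deduce both assertions from the representation-theoretic formula
\[\Wg(n,\sigma)=\frac{1}{(p!)^{2}}\sum_{\lambda\vdash p}\frac{(f^{\lambda})^{2}\chi^{\lambda}(\sigma)}{s_{\lambda}(1^{n})}\qquad(n\ge p),\]
provided by Schur--Weyl duality, together with the content/hook identity $s_{\lambda}(1^{n})=\prod_{(i,j)\in\lambda}(n+j-i)/H_{\lambda}$ (where $H_{\lambda}$ denotes the product of the hook lengths) and the Murnaghan--Nakayama rule.

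For the exact $d$-cycle identity, specialize to $\sigma=(1,\dots,d)$: Murnaghan--Nakayama kills every $\lambda$ except the hooks $\lambda_{k}:=(d-k,1^{k})$ for $0\le k\le d-1$, for which $\chi^{\lambda_{k}}(\sigma)=(-1)^{k}$, $f^{\lambda_{k}}=\binom{d-1}{k}$, $H_{\lambda_{k}}=d\cdot(d-k-1)!\cdot k!$, and the content product collapses to $(n+d-k-1)!/(n-k-1)!$. Substituting and simplifying yields
\[\Wg(n,(1,\dots,d))=\frac{1}{d\,(d-1)!}\sum_{k=0}^{d-1}(-1)^{k}\binom{d-1}{k}\,\frac{(n-k-1)!}{(n+d-k-1)!}.\]
All $d$ summands share the common denominator $\prod_{|j|\le d-1}(n-j)$, so the right-hand side equals $P_{d}(n)\big/\prod_{|j|\le d-1}(n-j)$ with $\deg P_{d}\le d-1$; computing residues at the $2d-1$ simple poles $n=j$ (or inducting on $d$ via Pascal's triangle $\binom{d-1}{k}=\binom{d-2}{k-1}+\binom{d-2}{k}$) identifies $P_{d}\equiv(-1)^{d-1}c_{d-1}$, a classical Catalan binomial identity that gives the second display.

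For the asymptotic multiplicativity, both sides are rational functions of $n$ of common dominant order $n^{-p-|\sigma|}$, with $|\sigma|:=p-\#\sigma$; it suffices to match their leading coefficients and to show that the discrepancy sits two orders lower. Substituting $s_{\lambda}(1^{n})^{-1}\sim H_{\lambda}/n^{p}$ in the character formula and exploiting the orthogonality $\frac{1}{p!}\sum_{\lambda}f^{\lambda}\chi^{\lambda}(\sigma)=\delta_{\sigma,e}$, a Jucys--Murphy power-sum expansion identifies the leading coefficient of $\Wg(n,\sigma)$ as $\mu(\sigma):=\prod_{c\in\mathrm{Cycles}(\sigma)}(-1)^{|c|-1}c_{|c|-1}$, which is manifestly multiplicative over cycles and therefore matches the leading coefficient of $\prod_{c}\Wg(n,c)$ computed from the first part. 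The improvement from $O(n^{-1})$ to $O(n^{-2})$ relative accuracy rests on a parity argument: the dualities $s_{\lambda}(1^{-n})=(-1)^{p}s_{\lambda'}(1^{n})$ (using $H_{\lambda}=H_{\lambda'}$ together with content negation under conjugation) and $\chi^{\lambda}(\sigma)=\mathrm{sgn}(\sigma)\,\chi^{\lambda'}(\sigma)$ combine to give the functional equation $\Wg(-n,\sigma)=(-1)^{\#\sigma}\Wg(n,\sigma)$, which in turn forces the Laurent expansion of $\Wg(n,\sigma)$ in $1/n$ to proceed strictly in even powers starting at $n^{-p-|\sigma|}$. The same parity holds for $\prod_{c}\Wg(n,c)$ (each factor being a rational function of $n$ with the same parity), so the two sides agree modulo a relative error $O(n^{-2})$.

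The main obstacle is the explicit identification of the leading coefficient as $\mu(\sigma)=\prod_{c}(-1)^{|c|-1}c_{|c|-1}$: the Jucys--Murphy bookkeeping is conceptually clean but notationally heavy. One can bypass the general case by invoking the free-cumulant / non-crossing-partition interpretation of the asymptotic Weingarten function, which a priori guarantees that the leading coefficient is multiplicative over cycles; it then suffices to verify the formula on a single $d$-cycle, which is already covered by the exact identity proved above.
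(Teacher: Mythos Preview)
The paper does not actually prove this theorem: it is quoted as background and the reader is referred to \cite{collins-imrn} for a proof. So there is no in-paper argument to compare against; what you have written is an independent reconstruction of (essentially) the argument from that reference.

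Your approach is sound and follows the standard route. For the $d$-cycle formula, the character expansion combined with the Murnaghan--Nakayama hook reduction is exactly the method of \cite{collins-imrn}; your bookkeeping of $f^{\lambda_k}$, $H_{\lambda_k}$ and the content product is correct, and the reduction to
\[
\Wg(n,(1,\dots,d))=\frac{1}{d\,(d-1)!}\sum_{k=0}^{d-1}(-1)^{k}\binom{d-1}{k}\frac{(n-k-1)!}{(n+d-k-1)!}
\]
is clean. The remaining step, showing that the numerator $P_d(n)$ over the common denominator $\prod_{|j|\le d-1}(n-j)$ is the constant $(-1)^{d-1}c_{d-1}$, is asserted rather than carried out; both the residue check and the Pascal-triangle induction you mention do work, but a referee would likely ask you to write one of them down explicitly (the induction is short).

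For the asymptotic multiplicativity, your parity argument is a nice way to get the $O(n^{-2})$ relative error: the functional equation $\Wg(-n,\sigma)=(-1)^{\#\sigma}\Wg(n,\sigma)$ (as rational functions of $n$) does force the Laurent expansion to proceed in steps of two, and the same holds factor by factor on the product side from the explicit cycle formula. The one soft spot is the identification of the leading coefficient as $\prod_c(-1)^{|c|-1}c_{|c|-1}$: you defer this either to a Jucys--Murphy computation you do not perform, or to the free-probabilistic fact that the asymptotic Weingarten function is the M\"obius function on $NC$, which is already multiplicative. Either is acceptable, but you should commit to one and cite or prove it rather than leave both as options. Note incidentally that the sign prefactor $(-1)^{n-\#\sigma}$ in the displayed formula is a misprint (the exponent should not contain the dimension $n$); at leading order the two sides agree with prefactor $+1$, consistent with your computation.
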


A shorthand for this theorem is the introduction of a function $\Mob$
on the symmetric
group, invariant under conjugation and multiplicative over the cycles,
satisfying,
for any permutation $\sigma\in\S_p$,
%
%
\begin{equation}
\Wg(n,\sigma) = n^{-(p + |\sigma|)} \bigl(\Mob(\sigma) + O(n^{-2})\bigr),\vadjust{\goodbreak}
\end{equation}
where $|\sigma|=p-\# \sigma$ is the \textit{length} of $\sigma$,
that is, the minimal number of transpositions that multiply to $\sigma
$. We refer to~\cite{collins-sniady} for details about the function
$\Mob$.

\subsection{Elementary review of noncommutative and free probability theory}
\label{reminders-wishart}

A~\textit{noncommutative probability space} is
an algebra $\mathcal A$ with unit endowed with a tracial
state $\phi$.
An element of $\mathcal A$ is called
a (noncommutative) random variable. In this paper we shall be mostly
concerned with the noncommutative probability space of \textit{random
matrices} $(\M_n(L^{\iy-}(\Omega, \P)), \E[n^{-1}\trace(\cdot
)])$ [we use the standard notation $L^{\iy-}(\Omega, \P) = \bigcap_{p\geq1} L^p(\Omega, \P)$].

Let $(a_1,\ldots,a_k)$ be a $k$ -tuple of self-adjoint random
variables and let
$\mathbb{C}\langle X_1 , \ldots,  X_k \rangle$ be the
free $*$-algebra of noncommutative polynomials on $\mathbb{C}$
generated by
the $k$ indeterminates $X_1, \ldots,X_k$.
The \textit{joint distribution} of the family $\{a_i\}_{i=1}^k$ is the
linear form
\begin{eqnarray*}
\mu_{(a_1,\ldots,a_k)} \dvtx  \C\langle X_1, \ldots,X_k \rangle
&\to&\C,
\\
P &\mapsto&\phi(P(a_1,\ldots,a_k)).
\end{eqnarray*}

Given a $k$-tuple $(a_1,\ldots,a_k)$ of free
random variables such that the distribution of $a_i$ is $\mu_{a_i}$,
the joint distribution
$\mu_{(a_1,\ldots,a_k)}$ is uniquely determined by the
$\mu_{a_i}$'s.
A~family $(a_1^{n},\ldots,a_k^{n})_n$ of $k$-tuples of random
variables is said to \textit{converge in distribution} toward
$(a_1,\ldots,a_k)$
iff for all $P\in\C\langle X_1, \ldots,X_k \rangle$,
$\mu_{(a_1^n,\ldots,a_k^n)}(P)$ converges toward
$\mu_{(a_1,\ldots,a_k)}(P)$ as $n\to\infty$.

The following result is from~\cite{nica-speicher} and will be crucial
for us. In what follows, $\mathit{NC}(p)$ denotes the set of noncrossing
partitions on $p$ elements, endowed with the reversed refinement
partial order (see~\cite{nica-speicher}, Lecture 9), which makes it
into a lattice.

\begin{lemma}\label{lem:S_p}
The function
$d(\sigma,\tau) = |\sigma^{-1} \tau|$ is an integer-valued distance
on $\S_p$. Further, it has the following properties:
\begin{itemize}
\item the diameter of $\S_p$ is $p-1$;
\item$d(\cdot, \cdot)$ is left and right translation invariant;
\item for three permutations $\sigma_1,\sigma_2, \tau\in\S_p$, the
quantity $d(\tau,\sigma_1)+d(\tau,\sigma_2)$
has the same parity as $d(\sigma_1,\sigma_2)$;
\item the set of geodesic points (elements which saturate the triangle
inequality) between the identity permutation $\id$ and some
permutation $\sigma\in\S_p$ is in bijection with the set of
noncrossing partitions smaller than $\pi$, where the partition $\pi$
encodes the cycle structure of $\sigma$. Moreover, the preceding
bijection preserves the lattice structure.
\end{itemize}
\end{lemma}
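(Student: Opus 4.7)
The plan is to treat each bullet in turn, with the first four reducing to elementary facts about the length function $|\sigma| = p - \#\sigma$ on $\S_p$, and the fifth requiring a substantive appeal to the non-crossing partition lattice.

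The distance properties follow from standard observations. The length $|\pi|$ vanishes only at $\pi = e$; it is invariant under inversion, since $\pi$ and $\pi^{-1}$ share their cycle type, yielding symmetry of $d$; and it satisfies $|\sigma\tau| \leq |\sigma| + |\tau|$, because concatenating shortest transposition decompositions of $\sigma^{-1}\rho$ and $\rho^{-1}\tau$ expresses $\sigma^{-1}\tau$ as a product of $|\sigma^{-1}\rho|+|\rho^{-1}\tau|$ transpositions. The diameter statement is immediate: $p - \#\pi$ is maximized when $\pi$ is a single $p$-cycle, giving $p-1$. Left translation invariance $d(\alpha\sigma, \alpha\tau) = |\sigma^{-1}\tau|$ is direct from the definition, while right invariance uses the conjugation invariance of the length (which follows from the fact that conjugation preserves cycle type).

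For the parity statement I would set $\rho = \tau^{-1}\sigma_1$ and $\mu = \tau^{-1}\sigma_2$, so that by left translation invariance the claim reduces to $|\rho| + |\mu| \equiv |\rho^{-1}\mu| \pmod{2}$. This is exactly the multiplicativity of the sign character, combined with the identity $\mathrm{sgn}(\pi) = (-1)^{|\pi|}$, since $\pi$ can be written as a product of $|\pi|$ transpositions.

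The main obstacle is the geodesic characterization. A permutation $\rho$ lies on a geodesic from $\id$ to $\sigma$ iff $|\rho| + |\rho^{-1}\sigma| = |\sigma|$. The plan is first to reduce to the case where $\sigma$ is a single cycle: both the length and the non-crossing partition lattice factorize over disjoint cycles of $\sigma$, and an elementary argument shows that any geodesic $\rho$ must have all its cycles contained in cycles of $\sigma$, reducing the problem block-by-block. In the single-cycle case $\sigma = (1, 2, \ldots, p)$, one then invokes the classical theorem of Biane identifying the interval $[\id, \sigma]$ in the absolute order on $\S_p$ with the lattice $NC(p)$ of non-crossing partitions, via the map sending a permutation to its cycle set-partition; the lattice-isomorphism clause of the lemma comes along with this identification. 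Rather than reproving Biane's theorem---which would be the real technical content---I would refer directly to the treatment in Nica--Speicher, which is the cited source for the lemma.
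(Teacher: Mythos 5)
Your argument is correct. Note that the paper itself does not prove this lemma but simply cites Nica--Speicher for the whole statement, so there is no in-paper proof to compare against; your treatment — elementary verifications of the metric axioms, diameter, translation invariance and parity (via the sign character), followed by a reduction over the cycles of $\sigma$ and an appeal to Biane's theorem in Nica--Speicher for the non-crossing-partition bijection — is consistent with the paper's reliance on that reference and is sound.
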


We finish by collecting the bare minimum of free probability theory
results needed for the development of the main results of this paper.
We skip the definition of freeness, as we will not need it.
Free cumulants are
multilinear\vadjust{\goodbreak} maps indexed by noncrossing partitions $\sigma\in \mathit{NC}(p)$
on $p$ elements
\[
\kappa_{\sigma}\dvtx   \underbrace{A\times\cdots\times A}_{p\ \mathrm{times}} \to\C
\]
such that
%
%
\begin{equation}
\label{eq:moment-cumulant-formula}
\sum_{\pi\leq\sigma\in \mathit{NC}(p)}\kappa_{\pi}(x_1,\ldots,x_p)=\E
_{\sigma}[x_1,\ldots,x_p]
\end{equation}
for all noncrossing partitions $\sigma\in \mathit{NC}(p)$, where $\E_\sigma
[x_1,\ldots,x_p]$ is the product over the blocks
$\{x_{i_1},\ldots,x_{i_j}\}$ of $\sigma$, of
$\E[x_{i_1},\ldots, x_{i_j}]$.
Cumulants are known to be multiplicative over blocks and therefore
a special role is played by the cumulant corresponding to the maximal
partition $\mathbf{1}_p$, which we denote by
$\kappa(a_1,\ldots,a_p):=\kappa_{\mathbf{1}_p}(a_1,\ldots,a_p)$.

We will need free cumulants for computational purposes, in order to
identify free Poisson distributions.
Let us mention, for the benefit of the interested reader, that
the main property of the free cumulants is that mixed cumulants of free
variables vanish.

We recall that the \textit{free Poisson distribution} of parameter $c$
is given by
\[
\pi_c=\max(1-c,0)\delta_0+\frac{\sqrt{4c-(x-1-c)^2}}{2\pi x}
\mathbf{1}_{[1+c-2\sqrt{c},1+c+2\sqrt{c}]}(x) \,dx.
\]
It is characterized by the fact that all its free cumulants are equal
to $c$.
Although we will not need this fact, it is worth mentioning that it has
a semigroup structure with respect to the additive free convolution
of Voiculescu (see, e.g.,~\cite{nica-speicher}).
It is also sometimes called the Marchenko--Pastur distribution. One can
compute (minus) the entropy of this probability distribution:
%
%
\begin{equation}\label{eq:entropy-free-poisson}
K_c = \int_{}^{} x \log x \,d\pi_c(x) =
\cases{\displaystyle
\frac{1}{2} + c \log c, & \quad    if   $c \geq1$,\cr\displaystyle
\frac{c^2}{2}, & \quad    if   $0<c<1$.
}
\end{equation}

\section{Unitary and Gaussian graphical calculi}

In this section we briefly recall the results of \cite
{collins-nechita-1} for the convenience of the reader and in order to
make the paper self-contained. We then introduce the Gaussian graphical
calculus and present a first application of it to Wishart matrices.

\subsection{Axioms of unitary graphical calculus}

The purpose of the graphical calculus introduced in \cite
{collins-nechita-1} is to yield an effective method to evaluate
the expectations of random tensors with respect to the Haar measure on
a unitary group.
The tensors under consideration can be constructed from a few
elementary tensors such as the Bell state,
fixed kets and bras, and random unitary matrices.
In graphical language, a~tensor corresponds to a \textit{box}, and an
appropriate Hilbertian structure yields a correspondence
between boxes and tensors.
However, the calculus yielding expectations only relies on diagrammatic
operations.

Each box $B$ is represented as a rectangle with decorations on its
boundary. The decorations are either white [elements of the set of
white decorations $S(B)$] or black [elements of the dual set of black
decorations, $S^*(B)$]. In the Hilbertian picture, decorations
correspond to complex vector spaces, dual decorations being associated
to dual spaces. Figure~\ref{fig:box} depicts an example of a box.

%
\begin{figure}

\includegraphics{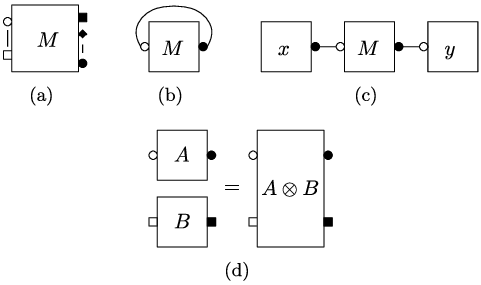}

\caption{Basic diagrams and axioms.}\label{fig:box}
\end{figure}

It is possible to construct new boxes from old ones by formal algebraic
operations such as sums or products.
We call a \textit{diagram} a picture consisting of boxes and wires
according to the following rule:
a wire may link a white decoration in $S(B)$ to its black counterpart
in $ S^*(B)$. A~diagram can be turned into a box by choosing an orientation and a
starting point.

Regarding the Hilbertian structure, wires correspond to tensor contractions.
There exists an involution for boxes and diagrams. It is antilinear and
turns a decoration
in $S(B)$ into its counterpart in $ S^*(B)$.
Our conventions are close to those of~\cite{coecke,jones}, and we hope
that they are familiar to
the reader acquainted with existing graphical calculi of various types
(planar algebra theory, Feynman diagram, traced category theory).
Our notation is designed to
conform well to the problem of computing expectations, as shown in the
next section. In Figure~\ref{fig:box}(b)--(d) we depict the trace of a matrix, multiplication
of tensors and the tensor product operation, respectively.
For details, we refer to~\cite{collins-nechita-1}.

\subsection{Planar expansion}

In this subsection we describe the main application of our calculus.
For this, we need a concept of \textit{removal} of boxes $U$ and~$\ol U$. A~removal $r$ is a way to pair decorations of the $U$ and $\ol U$ boxes
appearing in a diagram.
It therefore consists of a pairing $\alpha$ of the white decorations
of $U$ boxes with the white decorations of $\ol U$ boxes,
together with a pairing $\beta$ between the black decorations of $U$
boxes and the black decorations of $\ol U$ boxes.
Assuming that $\D$ contains $p$ boxes of type $U$ and that the boxes
$U$ (resp., $\ol U$) are labeled from $1$ to $p$,
then $r=(\alpha,\beta),$ where $\alpha,\beta$ are permutations of
$\mathcal{S}_p$. The set of all removals of $U$ and $\ol U$ boxes is
denoted by $\Rem_U(\D)$.

Given a removal $r \in\Rem_U(\D)$ we construct a new diagram $\D_r$
associated with~$r$, one which has the important property that it no
longer contains boxes of type $U$ or $\ol U$.
We start by erasing the boxes $U$ and $\ol U$, but keep the decorations
attached to them.
Assuming that we have labeled the erased boxes $U$ and $\ol U$ with
integers from $\{1, \ldots, p\}$, we connect \textit{all} the (inner
parts of the) \textit{white} decorations of the $i$th erased $U$ box
with the corresponding (inner parts of the) \textit{white} decorations
of the $\alpha(i)$th erased $\ol U$ box. In a similar manner, we use
the permutation $\beta$ to connect black decorations.

In~\cite{collins-nechita-1}, we proved the following result.
\begin{theorem}\label{thm:Wg_diag}
\[
\E_U(\D)=\sum_{r=(\alpha, \beta) \in\Rem_U(\D)} \D_r \Wg(n,
\alpha\beta^{-1}).
\]
\end{theorem}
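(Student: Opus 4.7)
The plan is to reduce Theorem \ref{thm:Wg_diag} to the scalar Weingarten formula of Theorem \ref{thm:Wg} by unfolding the graphical representation of the diagram into an explicit tensor-index expression, and then reading the result back into the graphical language. First, I would fix bases for all the vector spaces labelling the decorations of the boxes in $\D$ and write the tensor $\D$ as a large sum over index assignments: each wire contributes a Kronecker delta identifying the indices at its two endpoints, each non-$U$/$\ol U$ box contributes one of its matrix entries, and each $U$ (respectively $\ol U$) box indexed by $\ell\in\{1,\dots,p\}$ contributes a factor $U_{i_\ell j_\ell}$ (respectively $\overline{U_{i'_\ell j'_\ell}}$), where the $i$-indices are attached to the white decorations of that box and the $j$-indices to the black ones.

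Next, I would take expectations and commute $\E_U$ past the deterministic sums. The only random factors are the $U_{i_\ell j_\ell}$ and $\overline{U_{i'_\ell j'_\ell}}$; since there are equally many of each (otherwise $\D$ vanishes by \eqref{eq:Wg_diff}), Theorem \ref{thm:Wg} rewrites their joint expectation as
\[
\sum_{\sigma,\tau \in \S_p} \Wg(n,\tau\sigma^{-1}) \prod_{\ell=1}^p \delta_{i_\ell\, i'_{\sigma(\ell)}} \prod_{\ell=1}^p \delta_{j_\ell\, j'_{\tau(\ell)}}.
\]
The key step is to interpret the two products of Kronecker deltas graphically. Identifying row indices with white decorations and column indices with black decorations of each erased $U$ or $\ol U$ box, the factor $\delta_{i_\ell\, i'_{\sigma(\ell)}}$ is exactly a wire connecting the white decoration of the $\ell$-th $U$ box to the white decoration of the $\sigma(\ell)$-th $\ol U$ box, and similarly $\delta_{j_\ell\, j'_{\tau(\ell)}}$ is a wire between the corresponding black decorations. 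Setting $\alpha=\sigma$ and $\beta=\tau$, these new wires are precisely the ones that the removal $r=(\alpha,\beta)$ adds when constructing $\D_r$ from $\D$; together with the wires and non-$U$/$\ol U$ boxes surviving in $\D$, they reconstitute the diagram $\D_r$ as a scalar (or tensor, if $\D$ has free decorations). Because $\Wg$ is a class function invariant under inversion, $\Wg(n,\tau\sigma^{-1})=\Wg(n,\alpha\beta^{-1})$, which matches the statement.

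The main obstacle is bookkeeping rather than analysis: one must carefully align the assignment of $i$/$j$ indices in Theorem \ref{thm:Wg} to white/black decorations with the orientation and labelling conventions used to define the removal $\D_r$ in \cite{collins-nechita-1}, in particular verifying that the pairings coming out of Wick-type expansions agree with the pairings used in $\Rem_U(\D)$. Once the conventions are set, the graphical identification is forced and one obtains the claimed planar expansion with no further analytic input beyond Theorem \ref{thm:Wg}.
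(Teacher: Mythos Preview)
Your approach is correct and is the standard one: expand the diagram in coordinates, apply the scalar Weingarten formula of Theorem~\ref{thm:Wg}, and reinterpret the resulting Kronecker deltas as the new wires produced by a removal $r=(\alpha,\beta)$. Note, however, that the present paper does not actually prove Theorem~\ref{thm:Wg_diag}; it merely quotes the result from \cite{collins-nechita-1}, so there is no in-paper proof to compare against. Your sketch is precisely the argument one expects in that reference, and the only genuine content is indeed the bookkeeping you flag at the end.
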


\subsection{Gaussian planar expansion}\label{sec:gaussian}

We now consider the case where we allow a new special box $G$ in our
diagrams, corresponding to a \textit{Gaussian random matrix}. We shall
address the same issue as in the unitary case: computing the expected
value of a random diagram with respect to the Gaussian probability measure.

To begin, consider $\D,$ a diagram which contains, among other
constant tensors, boxes corresponding to independent Gaussian random
matrices of \textit{covariance one} (identity). We can deal with more
general Gaussian matrices by multiplying the standard ones by constant
matrices. Note that a box can appear several times, adjoints of boxes
are allowed and the diagram may be disconnected. Also, Gaussian
matrices need not be square.

The expectation value of such a random diagram $\D$ can be computed by
a \textit{removal} procedure, as in the unitary case. Without loss of
generality, we assume that we do not have adjoints of Gaussian matrices
in our diagram, but instead their complex conjugate box. This
assumption allows for a more straightforward use of the Wick Lemma \ref
{lem:wick-lemma}. As in the unitary case, we can assume that $\D$
contains only one type of random Gaussian box $G$; the other
independent random Gaussian matrices are assumed to be constant at this
stage as they shall afterward be removed in the same manner.

A removal of the diagram $\D$ is a pairing between \textit{Gaussian
boxes} $G$ and their conjugates $\ol G$. The set of removals is denoted
by $\Rem_G(\D),$ and it may be empty: if the number of $G$ boxes is
different from the number of $\ol G$ boxes, then $\Rem_G(\D) =
\varnothing$ [this is consistent with the first case of the Wick formula~\eqref{eq:Wick}]. Otherwise, a~removal $r$ can identified with a
permutation $\alpha\in\S_p$, where $p$ is the number of $G$ and $\ol
G$ boxes. Let us stress here the main difference between the notion of
a removal in the Gaussian and the Haar unitary cases. In the Haar
unitary (or the Weingarten) case, a~removal was associated with a
\textit{pair of permutations}: one had to pair white decorations of
$U$ and $\ol U$ boxes and, independently, black decorations of
conjugate boxes. On the other hand, in the Gaussian/Wick case, one
pairs {conjugate boxes}: white and black decorations are paired in an
identical manner, hence only one permutation is needed to encode the removal.

To each removal $r$ associated with a permutation $\alpha\in\S_p$
there corresponds a removed diagram $\D_r,$ constructed as follows. We
starts by erasing the boxes $G$ and $\ol G$, but keep the decorations
attached to these boxes. Then, the decorations (white \textit{and}
black) of the $i$th $G$ box are paired with the decorations of the
$\alpha(i)$th $\ol G$ box in a coherent manner; see Figure \ref
{fig:expectation_Gaussian}.

%
\begin{figure}

\includegraphics{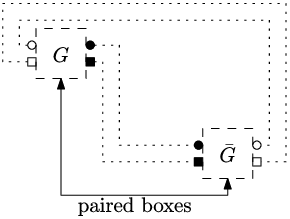}

\caption{Pairing of boxes in the Gaussian case.}
\label{fig:expectation_Gaussian}
\end{figure}

The graphical reformulation of the Wick Lemma~\ref{lem:wick-lemma}
becomes the following theorem, which we state without proof.

\begin{theorem}\label{thm:Wick_diag}
\[
\E_G[\D]=\sum_{r \in\Rem_G(\D)} \D_r .
\]
\end{theorem}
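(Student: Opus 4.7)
The plan is to view Theorem \ref{thm:Wick_diag} as a purely graphical reformulation of the Wick lemma (Lemma \ref{lem:wick-lemma}), and to reduce it to that statement by evaluating the diagram in coordinates, applying Wick entry-wise to the Gaussian factors, and then matching each Wick pairing with a graphical removal.

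First I would unfold $\D$ according to the Hilbertian dictionary between boxes, wires, and tensor contractions. Grouping together the $G$ and $\ol G$ blocks on one side and the constant tensors (with their internal wires) on the other, $\D$ takes the schematic form
\[\D \;=\; \sum_{\text{indices}} C(\text{indices}) \,\prod_{m=1}^{p} G_{i_m j_m}\, \prod_{n=1}^{p'} \overline{G_{k_n l_n}},\]
where $C(\cdot)$ packages all entries of the constant boxes along the diagram's wires, and the index pairs $(i_m,j_m)$ (resp.\ $(k_n,l_n)$) label the white/black decorations of the $m$-th $G$ block (resp.\ $n$-th $\ol G$ block). Since the sum is finite and $C(\cdot)$ is deterministic, $\E_G[\D]$ equals the same sum with $\prod G_{i_mj_m}\prod \overline{G_{k_nl_n}}$ replaced by its expectation.

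Next I would apply Wick's lemma to that Gaussian moment. The entries $\{G_{ij}\}$ form a centered complex Gaussian family with covariance $\E[G_{ij}\overline{G_{kl}}]=\delta_{ik}\delta_{jl}$ and $\E[G_{ij}G_{kl}]=\E[\overline{G_{ij}}\,\overline{G_{kl}}]=0$. The complex version of Lemma \ref{lem:wick-lemma} therefore forces the expectation to vanish unless $p=p'$ (which is exactly the case $\Rem_G(\D)=\emptyset$ of the statement), and otherwise gives
\[\E\!\left[\prod_{m=1}^{p} G_{i_m j_m}\, \prod_{n=1}^{p} \overline{G_{k_n l_n}}\right] \;=\; \sum_{\alpha\in\S_p}\,\prod_{m=1}^{p}\delta_{i_m k_{\alpha(m)}}\,\delta_{j_m l_{\alpha(m)}},\]
since the only non-vanishing pair pattern is one pairing each $G$-factor with a $\ol G$-factor, encoded by a permutation $\alpha$.

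Finally I would identify these Wick pairings with the graphical removals. A permutation $\alpha\in\S_p$ is, by definition, an element of $\Rem_G(\D)$, and the pair of Kronecker deltas $\delta_{i_m k_{\alpha(m)}}\,\delta_{j_m l_{\alpha(m)}}$ is exactly the tensor contraction prescribed by drawing coherent wires between the white and black decorations of the $m$-th erased $G$ block and those of the $\alpha(m)$-th erased $\ol G$ block, as in Figure \ref{fig:expectation_Gaussian}. Inserting this back, swapping the (finite) sums over indices and over $\alpha$, and recognizing the index-sum for fixed $\alpha$ as the scalar value of $\D_r$ yields $\E_G[\D]=\sum_{r\in\Rem_G(\D)}\D_r$. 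No genuinely hard step is involved: the substance of the theorem is Wick's lemma, and the only thing to watch carefully is the combinatorial bookkeeping of the bijection between Wick pairings and graphical removals, in particular that a single permutation (not a pair, as in the Haar case) suffices because the complex Gaussian covariance identifies white and black indices simultaneously.
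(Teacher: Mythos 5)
Your proof is correct, and it is the argument the paper tacitly relies on: the paper explicitly says it ``state[s] without proof'' this theorem, precisely because the intended justification is the straightforward reduction to the Wick Lemma \ref{lem:wick-lemma} that you spell out. Unfolding $\D$ into a coordinate sum, linearity of expectation, the complex Wick formula (whose only non-vanishing pair pattern is a bijection between $G$-factors and $\ol G$-factors, forcing the expectation to vanish when $p\neq p'$ and producing a sum over $\alpha\in\S_p$ otherwise), and re-reading each product of Kronecker deltas $\delta_{i_m k_{\alpha(m)}}\delta_{j_m l_{\alpha(m)}}$ as the coherent rewiring prescribed by the removal $\D_r$ with $r=\alpha$ --- that is exactly the correspondence Figure \ref{fig:expectation_Gaussian} is meant to encode. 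Your closing remark, that a single permutation suffices here (unlike the Haar case, which needs a pair) because the complex Gaussian covariance $\E[G_{ij}\overline{G_{kl}}]=\delta_{ik}\delta_{jl}$ identifies white and black decorations simultaneously, is precisely the point the paper emphasizes in prose just before stating the theorem. The only cosmetic caveat is that if a Gaussian box carries more than one white or black decoration, the delta-product should run over all of its legs rather than a single pair $(i_m,j_m)$, but this is mere bookkeeping and does not affect the argument.
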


\subsection{Moments of Wishart matrices}\label{sec:Wishart}
As a first application of our Gaussian graphical calculus, we compute
the moments of traces of products of Wishart matrices.
By definition, a~\textit{Wishart matrix} of parameters $(n,k)$ is a
positive random matrix $W \in\M_n(\C)$ such that
\[
W = G\cdot G^*,
\]
where $G \in\M_{n \times k}(\C)$ is a standard Gaussian random
matrix. In our graphical formalism, since we only consider
Gaussian random matrices, the previous equation corresponds to the
graphical substitution in Figure~\ref{fig:Wishart_to_Gaussian}; round
decorations correspond to $n$-dimensional complex Hilbert spaces $\C
^n$ and square-shaped labels correspond to $\C^k$.\vadjust{\goodbreak}

%
\begin{figure}

\includegraphics{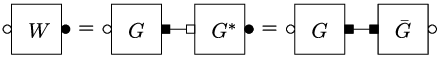}

\caption{Diagram of a Wishart matrix.}
\label{fig:Wishart_to_Gaussian}
\end{figure}

The same problem of computing expected values of traces of Wishart
matrices was considered in~\cite{bryc,graczyk,hanlon,mingo-nica},
and we shall rederive Corollary~3 of Theorem~2 from~\cite{bryc}. The
general covariance case (Theorem 2 in~\cite{bryc}) can be easily
derived from the result below.

\begin{proposition}\label{prop:moments_Wishart}
Let $W_1, W_2, \ldots, W_s$ be independent Wishart matrices with unit
covariance and parameters $(n,k_1)$, $(n,k_2), \ldots, (n,k_s),$
respectively. For a permutation $\sigma\in\S_p$ and a function $t:\{
1, \ldots, p\} \to\{1, \ldots, s\}$, we have
%
%
\begin{equation}\label{eq:moments_Wishart}
\E[\trace_{\sigma,t}(W_1,\ldots,W_s)] = \sum_{\alpha\in\S
_p(t)}\prod_{j=1}^s k_j^{\#\alpha_j}n^{\#(\sigma^{-1}\alpha)},
\end{equation}
where $\S_p(t) = \{\alpha\in\S_n \mid t = t \circ\alpha\}$. Every
permutation $\alpha\in\S_p(t)$ leaves the level sets of $t$
invariant and induces on each set $t^{-1}(j)$ a permutation $\alpha_j$
($j = 1, \ldots, s$).
\end{proposition}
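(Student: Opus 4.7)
The plan is to apply Theorem \ref{thm:Wick_diag} directly to the diagram representing $\trace_{\sigma,t}(W_1,\ldots,W_s)$ and carefully count the closed loops that emerge after Wick removal, separating wires by color according to the two relevant dimensions $n$ and $k_j$.

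First I would draw the diagram $\D$ for $\trace_{\sigma,t}(W_1,\ldots,W_s)$. Using the substitution of Figure \ref{fig:Wishart_to_Gaussian}, each factor $W_{t(i)} = G_{t(i)} G_{t(i)}^*$ contributes a pair of boxes, a $G_{t(i)}$-box followed by a $\ol{G_{t(i)}}$-box (after writing $G^* = \ol{G}^t$). Within a single $W_{t(i)}$, the square decoration on $\C^{k_{t(i)}}$ is internally contracted between $G_{t(i)}$ and $\ol{G_{t(i)}}$. The generalized trace $\trace_{\sigma,t}$ prescribes how the round decorations on $\C^n$ are contracted: the outgoing $n$-wire of position $i$ is joined to the incoming $n$-wire of position $\sigma(i)$.

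Next, I would apply the Gaussian removal. Since the matrices $G_1,\ldots,G_s$ are independent, one can remove them type-by-type, and Theorem \ref{thm:Wick_diag} shows that the expectation decomposes as a sum over collections of permutations $\alpha_j$ on each level set $t^{-1}(j)$. Equivalently, the removals are indexed by the single permutations $\alpha \in \S_p$ preserving the fibers of $t$, that is, by $\alpha \in \S_p(t)$, and the contribution of $\alpha$ is the scalar $\D_\alpha$ of the resulting loop diagram. For a fixed $\alpha$, I would then count closed loops separated by color. The square loops live inside a single family $j$, because only $G_j$ and $\ol{G_j}$ may be paired; the loops on $\C^{k_j}$ come from composing the internal $G_{t(i)}$--$\ol{G_{t(i)}}$ contraction (the identity on $t^{-1}(j)$) with the Wick pairing $\alpha_j$, so their number is exactly $\#\alpha_j$, contributing $\prod_{j=1}^s k_j^{\#\alpha_j}$. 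The round loops on $\C^n$ are determined by composing the cyclic $\sigma$-connections on the output side with the Wick pairings on the input side, which after tracing through the two boxes of each $W_i$ yields the permutation $\sigma^{-1}\alpha$ (up to conjugation), hence $\#(\sigma^{-1}\alpha)$ loops of dimension $n$.

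The main obstacle, and the only nontrivial point, is the bookkeeping that identifies the permutation acting on the round wires as $\sigma^{-1}\alpha$ (any conjugate would do for counting cycles). Once one fixes conventions for which decoration of $G$ is round and which is square, and orients the trace so that $\sigma$ acts on one side and $\alpha$ on the other, this becomes a direct inspection of the two-color wiring diagram. Combining the two color-counts and Theorem \ref{thm:Wick_diag} then gives
\[
\E[\trace_{\sigma,t}(W_1,\dots,W_s)] = \sum_{\alpha \in \S_p(t)} \prod_{j=1}^s k_j^{\#\alpha_j} \, n^{\#(\sigma^{-1}\alpha)},
\]
which is the claimed formula.
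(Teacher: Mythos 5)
Your proof takes essentially the same route as the paper: represent $\trace_{\sigma,t}(W_1,\dots,W_s)$ diagrammatically via the Wishart-to-Gaussian substitution, apply Theorem \ref{thm:Wick_diag} once per independent Gaussian family, identify the resulting $s$-tuples $(\alpha_1,\dots,\alpha_s)$ with $\alpha\in\S_p(t)$, and then count loops by color to obtain $\prod_j k_j^{\#\alpha_j}$ for the square wires and $n^{\#(\sigma^{-1}\alpha)}$ for the round wires. The argument is correct and matches the paper's proof in both structure and substance.
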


\begin{pf}
We consider the diagram $\D$ corresponding to the left-hand side of
equation~(\ref{eq:moments_Wishart}). It contains $n$ Wishart boxes
from the set $\{W_1, \ldots, W_s\}$ which are wired according to the
permutation $\sigma$ (see Figure~\ref{fig:prod_Wishart_bryc}).
Computing the expectation of the diagram $\D$ is rather
straightforward using our graphical calculus. Since we are dealing with
$s$ independent Gaussian matrices $G_1, \ldots, G_s$ (recall that $W_j
= G_j G_j^*$), we need to apply Theorem~\ref{thm:Wick_diag} $s$ times,
once for each Gaussian matrix $G_j$. Each box $G_j$ appears
$|t^{-1}(j)|$ times and, using Theorem~\ref{thm:Wick_diag}, we get
\[
\E[\D] = \sum\D_{\alpha_1, \ldots, \alpha_s},
\]
where each permutation $\alpha_j \in\S_{|t^{-1}(j)|}$ encodes the
removal procedure for the $G_j$ boxes.

%
\begin{figure}[b]

\includegraphics{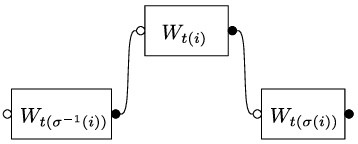}

\caption{Monomials of traces of Wishart matrices.}
\label{fig:prod_Wishart_bryc}
\end{figure}

Diagrams obtained after the successive removal procedures $\D_{\alpha
_1, \ldots, \alpha_s}$ are made of loops of two types: loops
associated with the $n$-dimensional space $\C^n$ and loops associated
with ``internal spaces'' $\C^{k_j}$. In order to count the number of
loops of each dimensionality, let us first observe that the set of
$s$-tuples of permutations $(\alpha_1, \ldots, \alpha_s)$ is in
bijection with the set of permutations $\alpha\in\S_p(t)$ defined in
the statement of the theorem.

For such a permutation $\alpha\in\S_p(t)$, let us count the number
of loops corresponding to traces over $\C^{k_j}$. Initially, the $p_j$
decorations of the $G_j$ boxes are connected in the simplest manner:
the $k_j$ decoration of the $i$th $G_j$ box is connected to the
corresponding decoration of the $\ol{G_j}$ box with the same index
$i$. The $j$th removal procedure, encoded by the permutation $\alpha
_j$, then produces a number of $\#(\id^{-1} \alpha_j) = \#\alpha_j$
loops. Hence, the contribution of the $\C^{k_j}$-type loops is $k_j^{\#
\alpha_j}$.

The computation of the loops associated with $\C^n$ is more involved
since the decorations are already nontrivially linked by the
permutation $\sigma$. Since $\sigma$ may not respect the level sets
of the function $t$, we need to consider the global action of~$\alpha
$, the restrictions $\alpha_j$ not being sufficient in this case.
Since the boxes are initially connected by $\sigma$ and the removal
procedures add wires according to the permutation~$\alpha$, the total
number of loops is $\#(\sigma^{-1} \alpha)$. Adding all loop
contributions, we obtain the announced formula~(\ref{eq:moments_Wishart}).
\end{pf}
\begin{remark}
We can consider more general covariances in the graphical model and
obtain Theorem 2 of~\cite{bryc} in its full generality. All there is
to be done is to add constant tensors associated with covariance
matrices in our diagrams. After the successive removal procedures, we
are left with loops \textit{and} traces of monomials in these constant
matrices. Since our purpose in this section was to illustrate the
Gaussian graphical calculus, we leave the details of this more
technical generalization to the interested reader.
\end{remark}

\section{Application of Gaussianization: Pure states through random
quantum channels}

\subsection{Single random channel model}

In this section we present an important application of the Gaussian
diagrammatic calculus: we compute eigenvalue statistics for the action
of a random quantum channel on a pure quantum state. By definition, a~\textit{quantum channel} $\Phi: \M_n(\C) \to\M_n(\C)$ is a
trace-preserving, completely positive map. According to the Stinespring
theorem, such a linear application can be written as
\[
\Phi(X) = \Phi^{U,Y}(X) = \trace_k [ U(X \otimes Y)U^* ],
\]
where $U$ is a unitary matrix in $\U(nk)$ and $Y$ is a $k$-dimensional
rank-one projector. A~diagrammatic representation of the above formula
is presented in Figure~\ref{fig:quantum_channel}.\vadjust{\goodbreak} The set of quantum
channels can be endowed with a natural probability measure by fixing
the projection $Y$ and picking $U$ uniformly with respect to the Haar
measure on the unitary group $\U(nk)$. This is the model of randomness
we refer to when we speak or random quantum channels, and it has
received a lot of attention from the quantum information community
\cite{collins-nechita-1,hayden-winter}. From the definition of $\Phi
$ we can see that the Weingarten calculus developed in \cite
{collins-nechita-1} may be applied to this situation since random
unitary matrices are a key element in the problem. However, when random
quantum channels are presented with rank-one inputs (or pure states),
we show that the simpler Gaussian calculus can be used, see Figure~\ref{fig:rank_one_Gaussian}. Using this
approach, we shall recover some exact formulas for the moments of the
output from~\cite{collins-nechita-1}, as well as some asymptotic
results from~\cite{nechita}.

%
\begin{figure}

\includegraphics{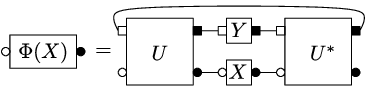}

\caption{Diagram for a quantum channel.}
\label{fig:quantum_channel}
\end{figure}

We are interested in the output random matrix
%
%
\begin{equation}
\label{defZ}
Z = \Phi^{U,Y}(X),
\end{equation}
where $X$ is a rank-one projector.
The main result, obtained in~\cite{nechita}, is as follows.

\begin{proposition}
Let $W = G \cdot G^* \in\M_n(\C)$ be a Wishart matrix with
parameters $(n,k)$.
Then,
\[
Z = \Phi(X) = W / \trace(W).
\]
\end{proposition}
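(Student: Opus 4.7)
The plan is to exploit the rank-one structure of the input to reduce the problem to a statement about a single Haar-random unit vector, and then to recognize the marginal of such a vector on one tensor factor as a normalized Wishart matrix.

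Concretely, I would first write $X=|x\rangle\langle x|$ and $Y=|y\rangle\langle y|$ for unit vectors $x\in\C^n$ and $y\in\C^k$, so that $X\otimes Y=|x\otimes y\rangle\langle x\otimes y|$. Setting $v=U(x\otimes y)\in\C^{nk}$, we get
\[
Z \;=\; \trace_k\!\bigl[\,U(X\otimes Y)U^*\bigr] \;=\; \trace_k\!\bigl[\,|v\rangle\langle v|\,\bigr].
\]
Since $x\otimes y$ is a fixed unit vector and $U$ is Haar-distributed on $\U(nk)$, unitary invariance of the Haar measure implies that $v$ is uniformly distributed on the unit sphere of $\C^{nk}$.

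The second step is to make the partial trace explicit via the standard tensor/matrix identification $\C^{nk}\isom\M_{n\times k}(\C)$: write $v$ as the matrix $V\in\M_{n\times k}(\C)$ with entries $V_{ij}=\langle e_i\otimes f_j,v\rangle$. A short check in coordinates (or one line of the graphical calculus of Section 3) shows
\[
\trace_k\!\bigl[\,|v\rangle\langle v|\,\bigr] \;=\; VV^*.
\]
It remains to identify the distribution of $V$. Here I would invoke the classical fact that if $G\in\M_{n\times k}(\C)$ is a standard complex Gaussian matrix, then $G/\|G\|_F$ is uniformly distributed on the unit sphere of $\C^{nk}\isom\M_{n\times k}(\C)$, and is independent of $\|G\|_F$; this follows from the rotational invariance of the Gaussian measure together with the polar decomposition in $\C^{nk}$. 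Consequently $V$ has the same distribution as $G/\|G\|_F$, which gives
\[
Z \;\stackrel{d}{=}\; \frac{GG^*}{\|G\|_F^2} \;=\; \frac{W}{\trace(W)}.
\]

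There is no real obstacle: once one sees that the rank-one input collapses the randomness of $U$ into that of a single uniform vector on the sphere, the Gaussianization step is the standard normalization trick. The only place where care is needed is the bookkeeping of the partial trace as an $n\times k$ reshaping, which is precisely what the diagrammatic substitution of Figure \ref{fig:Wishart_to_Gaussian} records; in fact one can give an alternative, fully diagrammatic proof by applying Theorem \ref{thm:Wg_diag} to the diagram for $Z$, grouping the $U,\ol U$ removals pairwise with the rank-one $x,y$ decorations, and recognizing the resulting sum as the planar expansion of $W/\trace(W)$ produced by Theorem \ref{thm:Wick_diag}.
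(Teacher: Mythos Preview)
Your argument is correct and is essentially the standard proof: reduce to a Haar-uniform unit vector on $\C^{nk}$ via the rank-one structure, reshape to an $n\times k$ matrix, and invoke the polar-decomposition fact that $G/\|G\|_F$ is uniform on the sphere and independent of $\|G\|_F$. The paper does not spell out a proof here but defers to \cite{nechita}, where exactly this approach is taken; so your write-up matches the intended argument, only made explicit.
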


Observe that this result does not depend on the choice of $X,Y$ due to
the invariance of the Haar measure.

%
\begin{figure}[b]

\includegraphics{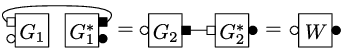}

\caption{An equivalent diagram for quantum channels with rank one $X$
and $Y.$}
\label{fig:rank_one_Gaussian}
\end{figure}

The main point is that we can show (see~\cite{nechita}) that the
eigenvalues of~$Z$, that is, the normalized eigenvalues of $W$, are
independent of the trace of $W$.
This implies that we we can apply the results on Wishart matrices
developed in Section~\ref{sec:Wishart} to this particular case.

\subsection{Exact moments}

In this section we provide exact formulas for the moments $\E[\trace
(Z^p)]$ of the output of a random quantum channel. Other formulas for
the same quantities (as well as some recursion relations) have been
obtained in~\cite{nechita,sz1,zycsommers}.\vadjust{\goodbreak}

Using the Gaussianization trick, we have
\[
\E[\trace(Z^p)] = \frac{\E[\trace(W^p)]}{\E[\trace(W)^p]},
\]
where $W$ is a Wishart matrix with parameters $(n,k)$. One uses
Proposition~\ref{prop:moments_Wishart} to compute $\E[\trace(W^p)]$
and $\E[\trace(W)^p]$:
\begin{eqnarray*}
\E[\trace(W^p)] &=& \sum_{\alpha\in\S_p} k^{\# \alpha} n^{\#
(\gamma^{-1}\alpha)},\\
\E[\trace(W)^p] &=& \sum_{\alpha\in\S_p}(nk)^{\# \alpha},
\end{eqnarray*}
where $\gamma= (p \; p-1 \; \cdots\; 2 \; 1) \in\S_p$ is the full
cycle. In the second formula above, we recognize the generating
polynomial for the number of cycles of a permutation of $p$ objects
evaluated at $nk$. This is known to be equal to $nk(nk+1)(nk+2) \cdots
(nk+p-1)$ (see~\cite{stanley}, Proposition 1.3.4), and we obtain the
following theorem.
\begin{theorem}
%
%
\begin{equation}\label{eq:moments_rank_one}
\E[\trace(Z^p)] = \Biggl( \prod_{j=0}^{p-1}(nk+j) \Biggr)^{-1}\sum
_{\alpha\in\S_p}k^{\# \alpha} n^{\#(\gamma^{-1}\alpha)}.
\end{equation}
\end{theorem}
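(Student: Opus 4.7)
The proof plan follows the two-step strategy sketched just before the theorem statement; essentially all the pieces are in place and the work is to assemble them cleanly.

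First, I would invoke the Gaussianization identification $Z = W/\trace(W)$ from the preceding proposition, where $W$ is Wishart of parameters $(n,k)$. The crucial observation (attributed to \cite{nechita}) is that the spectrum of $Z$, equivalently the normalized eigenvalues of $W$, is independent of the scalar random variable $\trace(W)$. This independence is what allows the factorization
\[
\E[\trace(Z^p)] \;=\; \E\!\left[\frac{\trace(W^p)}{\trace(W)^p}\right] \;=\; \frac{\E[\trace(W^p)]}{\E[\trace(W)^p]},
\]
so that the whole problem reduces to computing two Wishart moments.

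Next, I would apply Proposition \ref{prop:moments_Wishart} (with $s=1$, so the function $t$ is trivial and $\S_p(t) = \S_p$) twice. For the numerator, taking $\sigma = \gamma = (p\;p{-}1\;\cdots\;1)$ directly yields
\[
\E[\trace(W^p)] \;=\; \sum_{\alpha \in \S_p} k^{\#\alpha}\, n^{\#(\gamma^{-1}\alpha)}.
\]
For the denominator, $\trace(W)^p = \trace_{\id}(W,\dots,W)$, so taking $\sigma = \id$ gives
\[
\E[\trace(W)^p] \;=\; \sum_{\alpha \in \S_p} k^{\#\alpha}\, n^{\#\alpha} \;=\; \sum_{\alpha \in \S_p} (nk)^{\#\alpha}.
\]

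Finally, I would identify the last sum. The generating polynomial $\sum_{\alpha \in \S_p} x^{\#\alpha}$ is the classical unsigned Stirling generating polynomial, and it equals the rising factorial $x(x+1)(x+2)\cdots(x+p-1)$. Evaluating at $x = nk$ gives $\prod_{j=0}^{p-1}(nk+j)$. Dividing then produces the announced formula. The only substantive step is the independence of $Z$'s spectrum from $\trace(W)$; once that is invoked, the rest is a direct application of the Wishart moment proposition together with the standard Stirling identity, so I do not expect any real obstacle beyond citing the independence cleanly.
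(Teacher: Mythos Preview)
Your proposal is correct and follows essentially the same approach as the paper: invoke the Gaussianization $Z=W/\trace(W)$ together with the independence of the normalized eigenvalues from $\trace(W)$ to factor the expectation, then apply Proposition~\ref{prop:moments_Wishart} with $\sigma=\gamma$ and $\sigma=\id$, and finally recognize $\sum_{\alpha\in\S_p}(nk)^{\#\alpha}$ as the rising factorial $\prod_{j=0}^{p-1}(nk+j)$.
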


This is exactly like formula (10) from~\cite{collins-nechita-1}, which
was obtained via the Weingarten formula. The approach followed here is
more straightforward and does not use unitary integration: It is based
on the purely combinatorial Wick formula and the Gaussianization trick.

\subsection{Asymptotics}

We now look at the probability distribution of the output random matrix
$Z$ when one (or both) of the parameters $n$ and $k$ grow to infinity.
The asymptotic behavior of random matrices has been one of the main
objects of study in random matrix theory. For instance, it is in this
large-dimension regime that the freeness phenomenon appears. In the
particular case of random quantum channels under study here, this
question has an interesting physical motivation: large-dimensional
Hilbert spaces model physical systems with large numbers of degrees of
freedom. This point of view has been discussed in the quantum
information theory literature (see~\cite{braunstein,zycsommers,nechita,zycbook}).
Although some of what follows has already been
treated in~\cite{nechita}, the approach of this paper has the merit of
being self-contained and illustrates perfectly the power and range of
the Gaussian graphical calculus.

We split the results according to three possible asymptotic regimes,
depending on which of the parameters $n$ and/or $k$ is large. Of
special interest is the third regime, when \textit{both} parameters
grow to infinity,\vadjust{\goodbreak} but at a constant positive ratio $c>0$. We use the
equivalence symbol $x(n) \sim y(n)$ for nonzero sequences $x(n)$ and
$y(n)$ which are such that $x(n)/y(n) \to1$ when $n \to\iy$.

\begin{theorem}\label{thm:rank-one}
Let $Z = \Phi^{U,Y}(X)$ denote the output of a random quantum channel
$\Phi$, where $X$ and $Y$ are rank-one projectors.
\begin{enumerate}[(III)]
\item[(I)] In the regime where $n$ is fixed and $k \to\iy$, the
limiting spectral distribution of $Z$ is almost surely $\delta_{1/n}$.
\item[(II)] In the regime where $k$ is fixed and $n \to\iy$, $Z$
tends almost surely to a variable that has eigenvalues
$1/k$ with multiplicity $k$ and $0$ with multiplicity $n-k$.
\item[(III)] In the regime where $n,k \to\iy$, $k/n \to c >0$, $cnZ$
converges almost surely to a
free Poisson distribution with parameter $c$.
\end{enumerate}
\end{theorem}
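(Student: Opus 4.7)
The plan is to exploit the exact moment formula (\ref{eq:moments_rank_one}) together with the reduction $Z = W/\trace(W)$, where $W$ is Wishart of parameters $(n,k)$. The key combinatorial ingredient is the triangle inequality on $(\S_p, |\cdot|)$ recalled in Lemma \ref{lem:S_p}. Since $\trace(W) = \|G\|_{\mathrm{HS}}^2$ is (up to normalisation) a sum of $nk$ i.i.d.\ positive random variables, the law of large numbers yields $\trace(W)/(nk) \to 1$ almost surely in every regime. Combined with the identity $Z^p = W^p/\trace(W)^p$, this reduces the analysis to the asymptotics of $(nk)^{-p}\sum_{\alpha\in\S_p} k^{\#\alpha} n^{\#(\gamma^{-1}\alpha)}$, since $\prod_{j=0}^{p-1}(nk+j)/(nk)^p = 1 + O((nk)^{-1})$.

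For regime (I), with $n$ fixed and $k\to\infty$, the sum is dominated by $\alpha = \id$, the unique permutation maximising $\#\alpha = p$, which contributes $k^p \cdot n^{\#\gamma^{-1}} = k^p \, n$. Dividing by $(nk)^p$ gives $n^{1-p} = n \cdot (1/n)^p$, matching the claimed limiting spectrum. For regime (II), with $k$ fixed and $n\to\infty$, the dominant term is $\alpha = \gamma$, the unique permutation maximising $\#(\gamma^{-1}\alpha) = p$, contributing $k \cdot n^p$; the limit moment is $k^{1-p} = k \cdot (1/k)^p$, consistent with a spectrum of $k$ copies of $1/k$ plus $n - k$ zeros.

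Regime (III) is the heart of the theorem. With $k = cn(1 + o(1))$, a direct rearrangement yields
\[\frac{1}{n}\E[\trace((cnZ)^p)] \sim \sum_{\alpha\in\S_p} c^{\#\alpha}\, n^{\#\alpha + \#(\gamma^{-1}\alpha) - p - 1}.\]
By Lemma \ref{lem:S_p} applied to $\id$ and $\gamma$, one has $|\alpha| + |\gamma^{-1}\alpha| \geq |\gamma| = p - 1$, equivalently $\#\alpha + \#(\gamma^{-1}\alpha) \leq p + 1$, with equality exactly when $\alpha$ is a geodesic point between $\id$ and $\gamma$. By the same lemma, such geodesic points are in bijection with $NC(p)$ and $\#\alpha$ equals the number of blocks under this bijection. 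The surviving sum is therefore $\sum_{\alpha\in NC(p)} c^{\#\alpha}$, which via the moment--cumulant relation (\ref{eq:moment-cumulant-formula}) applied to the distribution with all free cumulants equal to $c$ is exactly the $p$-th moment of the free Poisson distribution $\pi_c$.

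To upgrade convergence in expectation to almost sure convergence, I would estimate $\mathrm{Var}[\trace(Z^p)]$ by reapplying the Gaussian graphical calculus to $\trace(W^p)^2$ on a disjoint union of two $p$-cycles and checking that the leading-order terms between the two copies cancel, leaving a variance of order $O(n^{-2})$; a standard Borel--Cantelli argument along a polynomial subsequence then concludes. Alternatively, one may use Gaussian concentration for the function $G \mapsto \trace((W/\trace W)^p)$, treating the scalar $\trace(W)$ separately via the LLN above. The main obstacle I anticipate is precisely this variance step in regime (III): one must identify the cancellation between `connected' and `disconnected' pairings of the two copies of the diagram, a mechanism that mirrors the vanishing of mixed free cumulants for free variables.
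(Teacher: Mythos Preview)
Your proposal is correct and takes essentially the same approach as the paper: identify the dominating permutation(s) in each regime from formula (\ref{eq:moments_rank_one}) ($\alpha=\id$, $\alpha=\gamma$, and the geodesic set $\id\to\alpha\to\gamma$ respectively), then upgrade to almost sure convergence via a variance estimate---computed using the two-cycle permutation $\gamma_2$ of (\ref{eq:gamma_2})---and Borel--Cantelli. The only differences are cosmetic: the paper invokes the independence of $Z$ and $\trace W$ rather than the LLN to pass from $W$ to $Z$, and since the variance comes out as $O(n^{-2})$ (resp.\ $O(k^{-2})$) the covariance series is directly summable, so no polynomial subsequence is needed.
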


\begin{pf}
In the first regime,
\[
\E[\trace(Z^p)] \stackrel{k \to\iy}{\sim} \frac{1}{n} (nk)^{-p}
\sum_{\alpha\in\S_p}k^{\# \alpha} n^{\#(\gamma^{-1}\alpha)}.
\]
Permutations $\alpha$ which give nonvanishing contributions are those
such that $\# \alpha= p$, hence $\alpha= \id$. In the end, we obtain
\[
\lim_{k \to\iy} \E[\trace(Z^p)]= n^{1-p},
\]
hence the limiting spectral distribution of $Z$ is $\delta_{1/n}$.

In order to prove the almost sure convergence,
we show that the empirical measures
\[
\mu_{n,k}(Z) = \frac{1}{n} \sum_{i=1}^n \lambda_i(Z)
\]
converge almost surely to the limit $\delta_{1/n}$ (which is
equivalent to the fact that, almost surely, every eigenvalue of $Z$
converges to $1/n$---recall that $n$ is fixed). As usual, almost sure
convergence of moments suffices and we aim to prove that for all $p,$
\[
\mbox{a.s.} \qquad  \lim_{k \to\iy} \trace(Z^p)= n^{1-p}.
\]
A standard application of Chebyshev's inequality and the
Borel--Cantelli lemma shows that it is enough to verify that for all
integers $p,$ the series of variances is summable:
\[
\sum_{k=1}^\iy\E\bigl[ \bigl(\trace(Z^p) - \E\trace(Z^p)
\bigr)^2 \bigr] < \iy.
\]
Let us separately compute $\E[\trace(Z^p)^2]$ and $\E[\trace
(Z^p)]^2$ using formula~(\ref{eq:moments_rank_one}). For the first
expectation, we need to introduce the permutation
%
%
\begin{equation}\label{eq:gamma_2}
\gamma_2 = \bigl(p \; (p-1) \; \cdots\; 2 \; 1\bigr)\bigl(2p \; (2p-1) \;
\cdots\; (p+2) \; (p+1) \bigr) \in\S_{2p}.\vadjust{\goodbreak}
\end{equation}
We then have
\begin{eqnarray*}
\E[\trace(Z^p)^2] &=& \Biggl(\prod_{j=0}^{2p-1}(nk+j) \Biggr)^{-1}
\sum_{\alpha\in\S_{2p}} k^{\# \alpha} n^{\#(\gamma_2^{-1}\alpha
)}\\[-2pt]
&=& \Biggl(\prod_{j=0}^{2p-1} \biggl(1+\frac{j}{nk} \biggr)
\Biggr)^{-1} \sum_{\alpha\in\S_{2p}} k^{-|\alpha|} n^{-|\gamma
_2^{-1}\alpha|}.
\end{eqnarray*}
The first contribution (of order $k^0$) in the last sum is given by
$\alpha= \id$ and is equal to $n^{2-2p}$ (recall that $\gamma_2$ has
two cycles). The second-order in $k$ is given by transpositions $\alpha=
(ij)$. In this case, $|\gamma_2^{-1}\alpha| = 2p-3$ if $i$ and $j$
belong to the same cycle of $\gamma_2$ and $|\gamma_2^{-1}\alpha| =
2p-1$ otherwise. Hence, we obtain
\begin{eqnarray*}
\E[\trace(Z^p)^2] &=& \biggl[1 - \frac{2p(2p-1)}{2nk} + O \biggl(\frac
{1}{k^2} \biggr) \biggr] \\[-2pt]
&&{}\times
\biggl[ n^{2-2p} + \frac{1}{k}
\bigl(p^2 n^{1-2p} + p(p-1)n^{3-2p} \bigr) + O \biggl(\frac{1}{k^2}
\biggr) \biggr]\\[-2pt]
&=& n^{2-2p} + \frac{1}{k}p(p-1)n^{1-2p}(n^2-1) + O \biggl(\frac
{1}{k^2} \biggr).
\end{eqnarray*}

Using the same ideas, $\E[\trace(Z^p)]^2$ is easily computed:
\begin{eqnarray*}
&&\E[\trace(Z^p)]^2\\[-2pt]
&& \qquad  = \Biggl(\prod_{j=0}^{2p-1}(nk+j) \Biggr)^{-2}
\biggl( \sum_{\alpha\in\S_{p}} k^{\# \alpha} n^{\#(\gamma
^{-1}\alpha)} \biggr)^2\\[-2pt]
&& \qquad = \biggl[1 - \frac{p(p-1)}{2nk} + O \biggl(\frac{1}{k^2} \biggr)
\biggr]^2 \cdot\biggl[n^{1-p} + \frac{1}{k} \frac
{p(p-1)}{2}n^{2-p}+ O \biggl(\frac{1}{k^2} \biggr) \biggr]^2\\[-2pt]
&& \qquad = n^{2-2p} + \frac{1}{k}p(p-1)n^{1-2p}(n^2-1) + O \biggl(\frac
{1}{k^2} \biggr),
\end{eqnarray*}
and we conclude that $\E[\trace(Z^p)^2 - \E[\trace(Z^p)]^2 =
O(k^{-2})$. Thus, the covariance series converges, completing the proof.

In the second regime,
\[
\E[\trace(Z^p)] \stackrel{n \to\iy}{\sim} \sum_{\alpha\in\S
_p}k^{-|\alpha|} n^{-|\gamma^{-1}\alpha|}.
\]
The nonvanishing contribution is given by $\alpha= \gamma$ and thus
\[
\lim_{n \to\iy}\E[\trace(Z^p)] = k^{1-p}.
\]
In other words, for large $n$, $Z$ has the following eigenvalues:
\begin{itemize}
\item$1/k$ with multiplicity $k$;
\item$0$ with multiplicity $n-k$.\vadjust{\goodbreak}
\end{itemize}
The proof of the almost sure convergence follows the same lines as in
the previous case and is left to the reader.

In the third regime, after making the substitution $k=cn$, the
asymptotics are
%
%
\begin{equation}\label{eq:Z_gaussian_III}
\E[\trace(Z^p)] \sim n^{-2p}c^{-p} \sum_{\alpha\in\S_p}c^{\#
\alpha} n^{\# \alpha+ \#(\gamma^{-1}\alpha)}.
\end{equation}
Since
%
%
\begin{equation}\label{eq:ineq_rank_one}
\# \alpha+ \#(\gamma^{-1}\alpha) = 2p - (|\alpha| + |\gamma
^{-1}\alpha|) \leq p+1,
\end{equation}
we should rescale the matrix $Z$ by a factor of $n$. In fact, in order
to avoid some unnecessary complications, we shall rescale $Z$ by $c n$.
We get
\[
\E[\tracenorm_n((cnZ)^p)] \sim n^{-p-1} \sum_{\alpha\in\S
_p}c^{\# \alpha} n^{\# \alpha+ \#(\gamma^{-1}\alpha)}.
\]
Contributing permutations are those for which we have equality in
equation~(\ref{eq:ineq_rank_one}), that is, $|\alpha| + |\gamma
^{-1}\alpha| = |\gamma| = p-1$. These are permutations on the
geodesic $\id\to\gamma$ and are known to be in bijection with
noncrossing partitions $\sigma\in \mathit{NC}(p)$. Thus,
\[
\E[\tracenorm_n((cnZ)^p)] \sim\sum_{\sigma\in \mathit{NC}(p)}c^{\#\sigma}.
\]
One recognizes the moment--cumulant formula from free probability
theory. Hence, the limiting distribution of $cnZ$ has cumulants of all
orders equal to $c$ and we identify the free Poisson distribution of
parameter $c$. Let us now show that almost sure convergence holds:
\[
\lim_{n \to\iy}\E[\tracenorm_n((cnZ)^p)] = \sum_{\sigma\in
\mathit{NC}(p)}c^{\#\sigma} \qquad  \mbox{almost surely.}
\]
Using the same classical technique as in the first regime, we show that
the series
\[
\sum_n \bigl(\E[(\tracenorm_n((cnZ)^p)^2)] - \E[\tracenorm
_n((cnZ)^p)]^2 \bigr)
\]
converges.
We start by evaluating $\E[(\tracenorm_n((cnZ)^p)^2)]$ up to the
second-order in $n$. Using the permutation $\gamma_2$ defined in (\ref
{eq:gamma_2}) and the Gaussian graphical calculus, we have
\[
\E[(\tracenorm_n((cnZ)^p)^2)] \sim\sum_{\alpha\in\S_{2p}}c^{\#
\alpha} n^{2p-2-(|\alpha| + |\gamma_2^{-1}\alpha|)}.
\]
Using similar ideas as before, $|\alpha| + |\gamma_2^{-1}\alpha|
\geq|\gamma_2| = 2p-2$ with equality iff $\alpha$ is on the geodesic
between $\id$ and $\gamma_2$. Given the 2-cycle structure of $\gamma
_2$, geodesic permutations $\alpha$ admit a decomposition $\alpha=
\alpha' + \alpha''$, where $\alpha' \in\S\{1, 2, \ldots, p\} = \S
_p$ and\vadjust{\goodbreak} $\alpha'' \in\S\{p+1, p+2, \ldots, 2p\} \isom\S_p$ are
themselves geodesic permutations $\id\to\alpha' \to\gamma$ and
$\id\to\alpha'' \to\gamma$, respectively. Of course, in this case,
$\#\alpha= \# \alpha' + \# \alpha''$ and thus
\[
\E[(\tracenorm_n((cnZ)^p)^2)] \sim\mathop{\mathop{\sum}_{\id\to
\alpha'
\to\gamma}}_{ \id\to\alpha'' \to\gamma}c^{\# \alpha' + \#
\alpha
'} = \biggl( \sum_{\id\to\tilde\alpha\to\gamma}c^{\# \tilde
\alpha} \biggr)^2.
\]
By a standard parity argument, the function $\S_{2p} \ni\alpha
\mapsto(|\alpha| + |\gamma_2^{-1}\alpha|)\operatorname{mod} 2$ is constant and
thus there is no $n^{-1}$ term in the asymptotic development of $\E
[(\tracenorm_n((cnZ)^p)^2)]$:
\[
\E[(\tracenorm_n((cnZ)^p)^2)] = \biggl( \sum_{\id\to\tilde\alpha
\to\gamma}c^{\# \tilde\alpha} \biggr)^2 + O(n^{-2}).
\]
Similar ideas applied to   formula~(\ref{eq:Z_gaussian_III}) yield
the same conclusion:
\[
\E[\tracenorm_n((cnZ)^p)] = \sum_{\id\to\alpha\to\gamma}c^{\#
\alpha} + O(n^{-2}).
\]
Taking the square of this last equation and comparing it with the
previous one, we conclude that the general term of the covariance
series behaves asymptotically as $O(n^{-2})$. This implies that the
series is convergent and we conclude that the almost sure convergence holds.
\end{pf}

Even though Gaussianization results are exact and do not require a
detour through Weingarten calculus, it is not clear how to apply them
when the input is not one-dimensional. However, it is natural to wonder
about the asymptotics in this case as well. The calculus that we
introduced in~\cite{collins-nechita-1} is crucial for this and that is
the subject of Section~\ref{sec:general_inputs}.

\subsection{Almost sure convergence for entropies}

In this section, we improve the almost sure convergence of moments to
the almost sure convergence
of any continuous function with polynomial growth. Since the set of
functions that it applies to is larger, this type of convergence is
stronger than the weak convergence.
We deduce corollaries for quantum information theory, and the
techniques developed in
this section will be useful toward the end of the paper. The technique
of proof for this result is inspired by~\cite{guionnet}.

\begin{proposition}\label{strengthened-weak}
Let $f$ be a continuous function on $\R$ with polynomial growth and
$\nu_n$ be a sequence of probability measures which converges
in moments to a compactly supported measure $\nu$. Then,
$\int f \,d\nu_n \to\int f \,d\nu$.
\end{proposition}

\begin{pf}
Let $K>1$ be a constant such that the interval $[-(K-1), K-1]$ contains
the (compact) support of the limit measure $\nu$. It follows that, for
all integer powers $s \geq0$,
%
%
\begin{equation}\label{eq:Cheby}
\lim_{r \to\iy} K^{-2r} \int x^{2r+2s} \,d\nu(x) = 0.\vadjust{\goodbreak}
\end{equation}
Moreover, since the measures $\nu_n$ converge in moments to $\nu$,
for all $\varepsilon>0$, there exists an $r$ large enough such that
for all $n$ large enough,
%
%
\begin{equation}\label{eq:Cheby-n}
K^{-2r} \int x^{2r+2s} \,d\nu_n(x) <\varepsilon.\vspace*{-1pt}
\end{equation}
For some fixed $\delta> 0$, the Weierstrass theorem produces a
polynomial $P$ such that
$|f(x) - P(x)|<\delta$ for all $x \in[-K, K]$. We then have
\[
\biggl|\int f \,d\nu_n - \int f \,d\nu\biggr| \leq\int|f-P| \,d\nu_n + \int
|f-P| \,d\nu+ \biggl| \int P \,d(\nu_n-\nu) \biggr|.\vspace*{-1pt}
\]
Since the polynomial approximation holds on the support of $\nu$, the
second term above is less than $\delta$. Using the convergence in
moments of the probability measures $\nu_n$, the last term can be seen
to be less than $\delta$ for $n$ large enough. We focus now on the
first term above, $\int|f-P| \,d\nu_n$. By the polynomial
approximation, $\int|f-P| \,d\nu_n \leq\delta+ \int_{|x|\geq K}
|f-P| \,d\nu_n$. Since $f$ has polynomial growth, one can find a
constant $q >0$ such that $|f(x)-P(x)| \leq x^{2q}$ for all $|x| \geq
K$. Using the Chebyshev inequality on the last integral, we have, for
all $r \geq1,$
\[
\int_{|x|\geq K} |f-P| \leq\int_\R\frac{x^{2r}}{K^{2r}} x^{2q} \,
d\nu_n = K^{-2r}\int x^{2q+2r} \,d\nu_n.\vspace*{-1pt}
\]
The convergence in moments, together with equations~\eqref{eq:Cheby}
and~\eqref{eq:Cheby-n}, implies that, for $r$ and $n$ large enough,
the above expression can be made arbitrarily small, which completes the
proof.\vspace*{-2pt}
\end{pf}

\begin{remark}
The conclusion of the above proposition still holds true under the
weaker assumption that $\nu$ admits some finite exponential moment,
thanks to the fact that weak convergence of measures implies
convergence of integrals under uniform integrability. However, in this
paper we only consider compactly supported measures.\vspace*{-2pt}
\end{remark}

\begin{corollary}
\label{cor-entropy}
Almost surely, in the regime $n \to\iy$, $k \sim cn$, the von Neumann
entropy of the matrix $Z$ from Theorem~\ref{thm:rank-one} satisfies
\[
H(Z) =
\cases{\displaystyle
\log n - \frac{1}{2c}+o(1), & \quad   if   $c\geq1$,\cr\displaystyle
\log(cn) - \frac{c}{2}+o(1), & \quad    if   $0<c<1$.
}\vspace*{-2pt}
\]
\end{corollary}

\begin{pf}
Let us assume that $c \geq1$, the other case being similar. We use
Theorem~\ref{strengthened-weak}
for the function $x \mapsto x\log x,$ which is continuous and of
polynomial growth on the domain $\R_+$, and for the empirical spectral
measures of the matrices $cnZ$. It follows that, almost surely when $n
\to\iy$,
\[
\frac{1}{n}\sum_{i=1}^n cn\lambda_i\log(cn\lambda_i)=\int t\log t
\,d\pi_c (t)+o(1),\vspace*{-1pt}
\]
where $\lambda_1 \geq\cdots\geq\lambda_n$ are the eigenvalues of $Z$.\vadjust{\goodbreak}

Simplifying this expression and using the value of the right-hand side
integral from equation~\eqref{eq:entropy-free-poisson}, we have
\[
H(Z) = -\sum_{i=1}^n \lambda_i\log\lambda_i=\log n - \frac{1}{2c}+o(1),
\]
completing the proof.
\end{pf}

A formula of Page~\cite{page} states that the mean entropy of a random
density matrix $Z^{(n,k)} \in\M_n(\C)$ obtained by tracing out a
$k$-dimensional environment is given by (here, $n \leq k$ are fixed)
\[
\E H\bigl(Z^{(n,k)}\bigr) = \sum_{j=k+1}^{nk} \frac{1}{j} - \frac{n-1}{2k}.
\]
We could obtain a weaker version of Corollary~\ref{cor-entropy} from
Page's formula by letting $n$ tend to infinity and using the dominated
convergence theorem.

\section{Asymptotics of a single random quantum channel for general
states}\label{sec:general_inputs}

\subsection{The model}

We are interested in single random quantum channels and
study the asymptotic behavior of the output of such channels for more
general input states than rank-one projectors.
The Gaussian planar expansion cannot be used in the more general cases,
so we need the Weingarten planar expansion.
We may consider the general model
%
%
\begin{equation}\label{eq:trace_scaling}
\trace_\beta(X) \sim(n^s)^{\#\beta}u^{\#\beta}\phi_\beta(x),
\end{equation}
where $s,u \in\R$ are fixed parameters and $x$ is a random variable
in some noncommutative probability space with trace $\phi$.
In this section, we will deal only with two special cases of interest
of the above formula. The first is motivated by quantum information
theory: $X$ is a rank-$r$ projector. This choice corresponds to $s=0$,
$u=r$ and $x=r^{-1}$. The second special case we consider will seem
natural to the reader with a free probabilistic background: $X$
converges in moments to a noncommutative random variable $x$. To obtain
this particular case from formula~(\ref{eq:trace_scaling}), we have to
put $s=u=1$ (this amounts to taking a normalized trace in the left-hand
side). Note, however, that such an input matrix is not normalized, and
we have to take into account the trace-one restriction for quantum states.

Let us recall here the formula for the moments of the output $Z=\Phi
(X)$ of a random quantum channel (see~\cite{collins-nechita-1}):
%
%
\begin{equation}\label{eq:moments_Z}
\E[\trace(Z^p)] = \sum_{\alpha, \beta\in\S_p} k^{\# \alpha}
n^{\#(\gamma^{-1} \alpha)} \trace_\beta(X) \Wg(\alpha\beta^{-1}),
\end{equation}
where $\gamma$ is the full-cycle permutation $\gamma= (p \; p-1 \;
\cdots\; 2 \; 1) \in\S_p$.

\subsection{Rank-$r$ projectors}
Plugging, for all $\beta\in\S_p$, $\trace_\beta(X) = r^{\#\beta
}r^{-p} = r^{-|\beta|}$ into the previous equation, we obtain
%
%
\begin{equation}\label{eq:moments_Z_rank_r}
\E[\trace(Z^p)] = \sum_{\alpha, \beta\in\S_p} k^{\# \alpha}
n^{\#(\gamma^{-1} \alpha)} r^{-|\beta|} \Wg(\alpha\beta^{-1}).
\end{equation}

We study, as usual, the three following asymptotic regimes: $n$ fixed,
$k \to\iy$; $k$~fixed, $n \to\iy$; $n,k \to\iy$, $k/n \to c$.

\begin{proposition}
Depending on the asymptotic regime, the almost sure behavior of $Z$ is
given as follows:
\begin{enumerate}[(III)]
\item[(I)] when $n$ is fixed and $k \to\iy$, the output density
matrix $Z$ converges almost surely to the maximally mixed state
\[
\rho_* = \frac{1}{n}\I_n;
\]
\item[(II)] when $k$ is fixed and $n \to\iy$, the output density
matrix $Z$, restricted to its support of dimension $rk,$ converges to
$1/(rk) \I_{rk}$;
\item[(III)] finally, in the third regime $k/n \to c$, the empirical
spectral distribution~of the matrix $rkZ$ converges to a free Poisson
distribution of parameter~$rc$.
\end{enumerate}
\end{proposition}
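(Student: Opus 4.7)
The plan is to start from the Weingarten formula \eqref{eq:moments_Z_rank_r} for the moments of $Z$ and apply the asymptotic expansion of the Weingarten function from Theorem~\ref{thm:mob}. Writing $\Wg(nk,\sigma) \sim (nk)^{-p-|\sigma|}\Mob(\sigma)$ and using $\#\alpha = p-|\alpha|$, the expectation becomes, to leading order,
\[\E[\trace(Z^p)] \sim \sum_{\alpha,\beta \in \S_p} k^{-|\alpha|-|\alpha\beta^{-1}|}\, n^{-|\gamma^{-1}\alpha|-|\alpha\beta^{-1}|}\, r^{-|\beta|}\, \Mob(\alpha\beta^{-1}).\]
In each of the three regimes the limit will be read off by optimizing the exponents of the scaling parameters appearing in this sum.

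For regime (I), with $n$ fixed and $k\to\iy$, only pairs with $|\alpha|+|\alpha\beta^{-1}|=0$ survive, forcing $\alpha=\beta=\id$; the remaining factor is $n^{-|\gamma|}=n^{1-p}$, which matches the moments of $\delta_{1/n}$, i.e.\ the maximally mixed state. For regime (II), symmetrically, only $\alpha=\beta=\gamma$ survives, producing $(rk)^{-|\gamma|}r^{0}=(rk)^{1-p}$, the moments of the uniform distribution on $rk$ atoms of mass $1/(rk)$.

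The interesting case is regime (III). Here I would rescale to $\tilde Z = rk\,Z$ (since the non-zero eigenvalues of $Z$ are of order $1/(rk)$) and consider $\E[\tracenorm_n(\tilde Z^p)] = n^{-1}(rk)^p\,\E[\trace(Z^p)]$. After substituting $k=cn$, the power of $n$ attached to each pair $(\alpha,\beta)$ simplifies to $p-1-|\alpha|-|\gamma^{-1}\alpha|-2|\alpha\beta^{-1}|$. The triangle inequality from Lemma~\ref{lem:S_p} gives $|\alpha|+|\gamma^{-1}\alpha| \geq |\gamma| = p-1$, while $|\alpha\beta^{-1}|\geq 0$, so this exponent is $\leq 0$, with equality iff $\alpha$ lies on the geodesic $\id\to\gamma$ and $\beta=\alpha$. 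Using Lemma~\ref{lem:S_p} again to parametrize those geodesic permutations by $NC(p)$, and noting that $p-|\alpha|=\#\alpha$ and $\Mob(\id)=1$, the surviving sum is
\[\E[\tracenorm_n(\tilde Z^p)] \longrightarrow \sum_{\sigma\in NC(p)} (rc)^{\#\sigma},\]
which is the $p$-th moment of the free Poisson distribution $\pi_{rc}$ via the moment-cumulant formula \eqref{eq:moment-cumulant-formula}.

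To upgrade the convergence in expectation to the stated almost sure convergence of empirical spectral measures, I would follow the Chebyshev--Borel--Cantelli strategy already used for the rank-one case: compute $\E[\tracenorm_n(\tilde Z^p)^2]$ using the double cycle $\gamma_2$ of \eqref{eq:gamma_2}, show that geodesic contributions split as $\alpha=\alpha'\oplus\alpha''$, $\beta=\beta'\oplus\beta''$ across the two blocks preserved by $\gamma_2$, and apply a parity argument on $(|\alpha|+|\gamma_2^{-1}\alpha|)\bmod 2$ (together with the analogous length for the $\beta$-summation) to kill the potential $O(n^{-1})$ correction, yielding covariances of order $O(n^{-2})$. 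The main obstacle relative to the pure state proof is precisely this last step: the Weingarten sum now runs over pairs of permutations with the $\Mob$ signs and the extra factor $r^{-|\beta|}$, so the bookkeeping required to factorize the dominant term as a square and to verify the parity cancellation of the sub-leading order is heavier, but the underlying geodesic mechanism is unchanged.
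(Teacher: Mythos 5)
Your proposal reproduces the paper's proof essentially verbatim: expand the Weingarten sum with the $\Mob$ asymptotics, optimize the exponent of the large parameter in each regime (forcing $\alpha=\beta=\id$, then $\alpha=\beta=\gamma$, then $\alpha=\beta$ geodesic), and identify the surviving geodesic sum with the moments of the free Poisson law of parameter $rc$ via the moment--cumulant formula. The paper dispatches almost sure convergence with a one-line appeal to Borel--Cantelli, and your variance/parity plan (splitting along the two cycles of $\gamma_2$ and noting the $2|\alpha\beta^{-1}|$ factor preserves the parity of the $n$-exponent) is a correct elaboration of exactly that sketch.
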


\begin{pf}
Using the Weingarten asymptotic $\Wg(\alpha\beta^{-1}) \sim
(nk)^{-p-|\alpha\beta^{-1}|}$, the exponent of $k$ in equation (\ref
{eq:moments_Z_rank_r}) is given by $\# \alpha- p - |\alpha\beta
^{-1}|$. This reaches its maximum of zero when $\alpha=\beta=\id$.
Hence, to the first-order in $k$, we have
\[
\E[\trace(Z^p)] = n^{1-p} + o(1),
\]
and the conclusion follows.

The second regime is very similar, and we ultimately obtain (this time
up to the first order in $n$)
\[
\E[\trace(Z^p)] = (rk)^{1-p} + o(1).
\]

As for the third regime, making the substitution $k=cn$, we obtain the
following asymptotic relation:
\[
\E[\trace(Z^p)] \sim\sum_{\alpha, \beta\in\S_p} r^{-|\beta|}
c^{-(|\alpha| + |\alpha\beta^{-1}|)} n^{-(|\alpha| + |\gamma^{-1}
\alpha| + 2|\alpha\beta^{-1}|)} \Mob(\alpha\beta^{-1}).
\]
The exponent of the large parameter $n$ in the last formula is
minimized when $\id\to\alpha=\beta\to\gamma$ is a geodesic in $\S
_p$. Hence,
\[
\E[\trace(Z^p)] \sim n^{1-p}\sum_{\id\to\alpha\to\gamma}
(rc)^{-|\beta|} \Mob(\alpha\beta^{-1}).
\]
Thus, the normalized trace of the $p$th power of the matrix $rkZ$
converges to
\[
\sum_{\id\to\alpha\to\gamma} (rc)^{\# \beta} = \sum_{\sigma\in
\mathit{NC}(p)} (rc)^{\# \sigma}\vadjust{\goodbreak}
\]
and we easily recognize the moment--cumulant formula for the
Marchenko--Pastur distribution of parameter $rc$
(see Section
\ref{reminders-wishart}).

The above results have been proven to hold for the convergence in moments.
Borel--Cantelli techniques (see~\cite{collins-nechita-1} for a sample)
can be easily used to show that the stronger almost sure convergence
holds in all three cases.
\end{pf}

\subsection{Normalized macroscopic inputs}

We now consider matrices $X$ which have a macroscopic scaling $\trace
(X^p) \sim n \cdot\phi(x^p)$,
where $x$ is some noncommutative random variable. We have, of course,
to normalize such input matrices and we shall consider
\[
\tilde X = \frac{X}{\trace X}.
\]
With this normalization, the moments of the output matrix $Z = \Phi
(\tilde X)$ are given by
\[
\E[\trace(Z^p)] = \E[\trace(\Phi(\tilde X)^p)] = \E\biggl[ \trace
\frac{\Phi(X)^p}{(\trace X)^p} \biggr] = \frac{\E[\trace(\Phi
(X)^p)]}{(\trace X)^p}.
\]

As in the previous section, we consider different asymptotic regimes
for the integer parameters $n$ and $k$.
However, it turns out that the $k$ fixed, $n \to\iy$ regime is more
involved, and its understanding requires some more advanced free
probabilistic tools. To an integer $k$ and a probability measure~$\mu
$, we associate the measure $\mu_{(k)}$ defined by
\[
\mu_{(k)} = \biggl( 1 - \frac1 k \biggr) \delta_0 + \frac1 k \mu.
\]

\begin{proposition}
The almost sure behavior of the output matrix $Z=\Phi(\tilde X)$ is
given as follows:
\begin{enumerate}[(III)]
\item[(I)] when $n$ is fixed and $k \to\iy$, $Z$ converges almost
surely to the maximally mixed state
\[
\rho_* = \frac{1}{n}\I_n;
\]
\item[(II)] when $k$ is fixed and $n \to\iy$, the empirical spectral
distribution of $\bar\mu k n Z$ converges to the probability measure
$\nu= [\mu_{(k)}]^{\boxplus k^2}$, where $\boxplus$ denotes the free
additive convolution operation, $\mu$ is the probability distribution
of $x$ with respect to $\phi$: $\phi(x^p) = \int t^p \,d\mu(t)$ and
$\bar\mu$ is the mean of $\mu$, $\bar\mu= \phi(x)$;
\item[(III)] when $n,k \to\iy$ and $k/n \to c$, the empirical
spectral distribution of the matrix $nZ$ converges to the Dirac mass
$\delta_1$.
\end{enumerate}
\end{proposition}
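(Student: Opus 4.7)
The plan is to apply the Weingarten formula \eqref{eq:moments_Z} from Section 5 together with the macroscopic scaling $\trace_\beta(X) \sim n^{\#\beta} \phi_\beta(x)$ and the normalization $(\trace X)^p \sim (n\bar\mu)^p$. After substituting the Weingarten asymptotic from Theorem \ref{thm:mob}, the moments of the output take the form
\[\E[\trace(Z^p)] \sim \bar\mu^{-p} \sum_{\alpha, \beta \in \S_p} k^{-|\alpha| - |\alpha\beta^{-1}|}\, n^{-|\gamma^{-1}\alpha| - |\beta| - |\alpha\beta^{-1}|}\, \phi_\beta(x)\, \Mob(\alpha\beta^{-1}).\]
In each regime I will identify the dominant pairs $(\alpha,\beta)$ by minimizing the relevant exponents via the triangle inequality on $\S_p$ provided by Lemma \ref{lem:S_p}, and the almost sure upgrade will follow from the variance-plus-Borel-Cantelli argument used earlier in the paper.

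Regimes (I) and (III) are straightforward. For (I), the only pair saturating the $k$-exponent is $\alpha = \beta = \id$, yielding $\E[\trace(Z^p)] \to n^{1-p}$ and hence $Z \to \I_n/n$. For (III), after the substitution $k = cn$, the chained inequalities $|\beta| + |\alpha\beta^{-1}| \geq |\alpha|$ and $|\alpha| + |\gamma^{-1}\alpha| \geq p-1$ again force $\alpha = \beta = \id$ as the unique maximizer, so $\E[\trace((nZ)^p)] \sim n$, i.e.\ $nZ \to \delta_1$.

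Regime (II) is the substantive case. With $k$ fixed and $n \to \iy$, the $n$-exponent in $\E[\trace((\bar\mu k n Z)^p)] = (\bar\mu k n)^p\, \E[\trace(Z^p)]$ is at most $1$, with equality precisely when $\id \to \beta \to \alpha \to \gamma$ is a geodesic in $\S_p$. Lemma \ref{lem:S_p} puts such geodesic triples in bijection with nested non-crossing partitions $\rho \leq \pi$ in $NC(p)$, with $\#\alpha = \#\pi$, $\#\beta = \#\rho$, and $\Mob(\alpha\beta^{-1})$ equal to the Möbius function $\mathrm{mob}(\rho,\pi)$ of the non-crossing partition lattice. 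Using the geodesic identity $|\alpha\beta^{-1}| = |\alpha| - |\beta|$, the leading coefficient of $n$ should rewrite as
\[\sum_{\pi \in NC(p)} k^{2\#\pi} \sum_{\rho \leq \pi} \mathrm{mob}(\rho, \pi) \prod_{V \in \rho} k^{-1}\phi(x^{|V|}).\]

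The final and key step is a free-probabilistic reinterpretation: since $k^{-1}\phi(x^{|V|}) = m_{|V|}(\mu_{(k)})$ for every block of size $\geq 1$, Möbius inversion on $NC(p)$ turns the inner sum into the partitioned free cumulant $\kappa_\pi(\mu_{(k)})$, and the total becomes $\sum_{\pi \in NC(p)} \prod_{V \in \pi} k^2 \kappa_{|V|}(\mu_{(k)})$, which is by definition the $p$-th moment of the measure with free cumulants $k^2 \kappa_n(\mu_{(k)})$, namely $\nu = [\mu_{(k)}]^{\boxplus k^2}$. I expect the main obstacle to be the combinatorial bookkeeping in regime (II): one must simultaneously track the two nested geodesic constraints coming from the Weingarten formula, the powers of $k$ coming from $\#\alpha$ (the Kraus sum) and from $\#\beta$ (the input moments), and the Möbius weight from $\Mob$. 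Once this dictionary between $\S_p \times \S_p$-Weingarten sums and the non-crossing partition lattice is in place, the identification of the limit as a free additive convolution is essentially forced by moment-cumulant inversion.
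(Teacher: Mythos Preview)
Your proposal is correct and follows essentially the same route as the paper: in each regime you identify the dominant $(\alpha,\beta)$ pairs via the same geodesic inequalities in $\S_p$, and in regime~(II) you make the same identification $k^{-\#\sigma}\phi_\sigma(x)=\phi_\sigma(\mu_{(k)})$ followed by moment--cumulant inversion to read off the free cumulants $\kappa_p(\nu)=k^2\kappa_p(\mu_{(k)})$ of the limit law. The paper likewise leaves the Borel--Cantelli upgrade to the reader, so your sketch matches the level of detail given there.
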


\begin{pf}
We start with the simplest asymptotic regime, $n$ fixed and $k \to\iy
$. Plugging the scaling for $\trace_\beta(X)$ into formula (\ref
{eq:moments_Z}), we get
\[
\E[\trace(Z^p)] \sim n^{-p} \phi(x)^{-p}\!\sum_{\alpha, \beta\in\S
_p}\!\! k^{\# \alpha} n^{\#(\gamma^{-1} \alpha)} n^{\#\beta} \phi
_\beta(x) (nk)^{-p-|\alpha\beta^{-1}|} \Mob(\alpha\beta^{-1}).\vadjust{\goodbreak}
\]
In order to find the leading term in the preceding sum, we have to
minimize the exponent of $k$, $|\alpha|+|\alpha\beta^{-1}|$. This
expression attains its minimum 0 at $\alpha=\beta=\id$. In the end,
we find $\E[\trace(Z^p)] \sim n^{1-p}$ and conclude that the output
matrix $Z$ converges to the maximally mixed state $\rho_* = \I_n / n$.

Let us now look at the second regime, $k$ fixed and $n \to\iy$. The
asymptotic moments of $Z$ are given by
\[
\E[\trace(Z^p)]\sim\phi(x)^{-p}\!\sum_{\alpha, \beta\in\S_p}\!\!
k^{-(|\alpha| + |\alpha\beta^{-1}|)}n^{-(|\beta| + |\alpha\beta
^{-1}| + |\gamma^{-1} \alpha|)} \phi_\beta(x) \Mob(\alpha\beta^{-1}).
\]
The dominating terms in the preceding sum are given by permutations
such that $|\beta| + |\alpha\beta^{-1}| + |\gamma^{-1} \alpha|$ is
minimal. Permutations $(\alpha, \beta)$ which saturate the triangle
inequality $|\beta| + |\alpha\beta^{-1}| + |\gamma^{-1} \alpha|
\geq|\gamma| = p-1$ are elements of the geodesic $\id\to\beta\to
\alpha\to\gamma$ and can be put in bijection with noncrossing
partitions $\sigma\leq\tau\in \mathit{NC}(p)$ using Lemma~\ref{lem:S_p}. We obtain
\[
\frac{1}{n}\E[\trace((\bar\mu k n Z)^p)]\sim\sum_{\sigma\leq
\tau\in \mathit{NC}(p)} k^{2\#\tau- \#\sigma} \phi_\sigma(x) \Mob(\sigma,
\tau) .
\]
Using the fact that $k^{-\#\sigma} \phi_\sigma(x) = \phi_\sigma
(\mu_{(k)})$ and applying the moment--cumulant formula (\cite
{nica-speicher}, page 175), we get
\begin{eqnarray*}
\frac{1}{n}\E[\trace((\bar\mu k n Z)^p)]&\sim&\sum_{ \tau\in
\mathit{NC}(p)} k^{2\#\tau} \mathop{\mathop{\sum}_{\sigma\in \mathit{NC}(p) }}_{
\sigma\leq
\tau}\phi_\sigma\bigl(\mu_{(k)}\bigr) \Mob(\sigma, \tau)\\
 &=& \sum_{ \tau
\in \mathit{NC}(p)} k^{2\#\tau} \kappa_\tau\bigl(\mu_{(k)}\bigr) ,
\end{eqnarray*}
where $\kappa$ denotes the free cumulant. We conclude that the random
matrix $\bar\mu k n Z$ converges in distribution to a probability
measure $\nu$ which has free cumulants $\kappa_p(\nu) = k^2 \kappa
_p(\mu_{(k)})$, and the conclusion follows.

We now turn to the third regime, where both $n$ and $k$ grow to
infinity at a constant ratio $c>0$. After making the substitution
$k=cn$, we obtain the following equivalent:
\begin{eqnarray*}
&&\E[\trace(Z^p)]\\
&& \qquad  \sim\phi(x)^{-p}\sum_{\alpha, \beta\in\S_p}
n^{-(|\alpha|+|\gamma^{-1}\alpha|+|\beta|+2|\alpha\beta^{-1}|)}
c^{-(|\alpha|+|\alpha\beta^{-1}|)}\phi_\beta(x) \Mob(\alpha\beta^{-1}).
\end{eqnarray*}
The expression to minimize in this case is $|\alpha|+|\gamma
^{-1}\alpha|+|\beta|+2|\alpha\beta^{-1}|$. By the triangle inequality,
(cf. Lemma~\ref{lem:S_p}),
the sum of the first two terms is at least $|\gamma| = p-1$ and the
other terms are positive; hence, the (negative) exponent of $n$ is at
least $p-1$, and the bound is reached for $\alpha=\beta=\id$. To the
first-order in $n$, the asymptotic moments of $Z$ are
\[
\E[\trace(Z^p)] \sim n^{1-p} \qquad  \forall p \geq1,
\]
which is equivalent to the statement
\[
\lim_{n \to\iy}\E[\tracenorm_n((nZ)^p)] = 1 \qquad  \forall p \geq1.
\]

In all the cases treated above, we leave the proof of the almost sure
convergence to the reader.
\end{pf}

\begin{remark}
Let us observe that for the regimes (I) and (III) studied above,
the limit distribution of the output does not depend on the limit of
the input distribution. The result obtained in the second regime could
have been obtained in a more direct manner, using the powerful tools of
free probability. For simplicity, let us forget about the normalization
of the input matrix and observe that the limit distribution of $X
\otimes Y$ is $\mu_{(k)}$, if $\mu$ is the limit distribution of $X$
and $Y$ is a $k \times k$ rank-one projector. The partial trace of the
randomly rotated input matrix is equal to the sum of its $k$ $n\times
n$ diagonal blocks. Each block is a free compression of parameter $1/k$
(which accounts for a free additive convolution power of $k$), and the
blocks are free. Taking the sum of the free blocks explains the other
factor $k$ appearing as an exponent for the free additive convolution.
\end{remark}

\section{Tensor products of quantum channels}

\subsection{Motivation and existing results}\label{sec:product_channels_intro}
When studying the question of the additivity of minimal output
entropies, it is natural to consider products of random quantum channels.

Before looking in detail at some specific models, let us observe that
if one chooses an input state which factorizes $X_{12} = X_1 \otimes
X_2$, then
\[
[\Phi_1 \otimes\Phi_2](X_{12}) = \Phi_1(X_1) \otimes\Phi_2(X_2),
\]
and there is no correlation (classical or quantum) between the
channels. In order to avoid such trivial situations, we must choose an
input state which is entangled. An obvious choice (given that $n_1 =
n_2 = n$) is to take $X_{12} = E_n$, the $n$-dimensional Bell state,
and we shall use this state in what follows.

 Winter and  Hayden observed in~\cite{hayden-winter} that it is
relevant in this framework to introduce the further symmetry $U_2 = \ol
U_1$, as it ensures that at least one eigenvalue is always large.
In~\cite{collins-nechita-1}, using the channel model inspired by the
ideas of Hayden and Winter, it was proven that the bounds on the
eigenvalues could be improved as follows.

\begin{theorem}
\label{thm:bell-phenomenon}
In the $k$ fixed, $n \to\iy$ regime, the eigenvalues of the matrix
$Z$ converge \textit{almost surely} toward:
\begin{itemize}
\item$\frac{1}{k} + \frac{1}{k^2} - \frac{1}{k^3}$, with
multiplicity one;
\item$\frac{1}{k^2} - \frac{1}{k^3}$, with multiplicity $k^2-1$;
\item$0$, with multiplicity $n^2 - k^2$.\vadjust{\goodbreak}
\end{itemize}

In the asymptotic regime where $n$ is fixed and $k \to\iy$, the
random matrix~$Z$ converges to the chaotic state
\[
\rho_* = \frac{\I_{n^2}}{n^2}.
\]
\end{theorem}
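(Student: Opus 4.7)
The plan is to compute $\E[\trace(Z^p)]$ via the Weingarten graphical calculus of Theorem~\ref{thm:Wg_diag}, match the resulting moments with those of the claimed atomic limits, and then upgrade to almost sure convergence by a variance/Borel--Cantelli argument, exactly as in the single-channel analysis earlier in the paper.

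First I would draw the diagram for $\trace(Z^p)$, where $Z=[\Phi^{U}\otimes\Phi^{\bar U}](E_n)$. Expanding the $p$-th power and the tensor product, the diagram contains $2p$ copies of $U$ and $2p$ copies of $\bar U$ (one pair per channel per power), together with $p$ Bell-state vertices and their $p$ duals, $2p$ rank-one environment inputs $Y$, and the cyclic trace wiring $\gamma_{2p}$. The Bell-state and its conjugate introduce ``ladder'' wires that tie each $\Phi^U$ replica to the corresponding $\Phi^{\bar U}$ replica. Applying Theorem~\ref{thm:Wg_diag} yields
\[
\E[\trace(Z^p)] \;=\; \sum_{\alpha,\beta\in\S_{2p}} n^{a(\alpha,\beta)}\, k^{b(\alpha,\beta)}\, \Wg(nk,\alpha\beta^{-1}),
\]
with the exponents $a,b$ read off from the number of $\C^n$- and $\C^k$-loops in $\D_{\alpha,\beta}$.

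For regime (I), with $k$ fixed and $n\to\iy$, I would substitute the asymptotic $\Wg(nk,\sigma)\sim (nk)^{-(2p+|\sigma|)}\Mob(\sigma)$ and use Lemma~\ref{lem:S_p} to isolate the pairs $(\alpha,\beta)$ that maximize the power of $n$. The Bell wiring effectively couples the two halves of the diagram, so that the $n$-leading contributions are parametrised by pairs of permutations in $\S_p$ living on two coupled geodesics. Carrying out the resulting sum and simplifying, the limit of $\E[\trace(Z^p)]$ should take the closed form
\[
\Bigl(\tfrac{1}{k}+\tfrac{1}{k^2}-\tfrac{1}{k^3}\Bigr)^{\!p} \;+\; (k^2-1)\Bigl(\tfrac{1}{k^2}-\tfrac{1}{k^3}\Bigr)^{\!p},
\]
which identifies the limit spectral measure with the announced atomic distribution. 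For regime (II), with $n$ fixed and $k\to\iy$, only the trivial pair $\alpha=\beta=\id$ survives at leading order (all other pairs lose an extra factor of $(nk)^{-1}$ via $\Wg$ without being compensated by $k$-loops), giving $\E[\trace(Z^p)]\to n^{2-2p}$, the moments of the maximally mixed state $\I_{n^2}/n^2$.

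Finally, to promote convergence in expectation to almost sure convergence, I would compute $\E[(\trace Z^p)^2]$ with the same calculus (now the diagram has two independent cyclic halves and the sum runs over $\S_{4p}\times\S_{4p}$), show that the $n$-leading asymptotic factorises, and conclude $\mathrm{Var}\,\trace(Z^p)=O(n^{-2})$ in regime (I), resp.\ $O(k^{-2})$ in regime (II); Borel--Cantelli then yields almost sure convergence of every moment and hence of the empirical spectral measure.

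The main obstacle will be regime (I): one must track carefully how the Bell-state wiring couples the two channel replicas so that, out of a sum over $\S_{2p}\times\S_{2p}$, one extracts both a finite limit in $n$ and the precise polynomial in $1/k$ whose coefficients match the three claimed eigenvalues with their multiplicities. This is exactly the ``Bell phenomenon'' advertised in the statement: a single large eigenvalue of order $1/k$ coexisting with $k^2-1$ eigenvalues of order $1/k^2$. Teasing this apart amounts to distinguishing, among the dominant geodesic pairs, those permutations that respect the Bell pairing of the two replicas from those that do not; the former contribute to the isolated eigenvalue, the latter to the bulk of $k^2-1$ equal eigenvalues.
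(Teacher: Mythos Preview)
This theorem is not proved in the present paper. It is quoted from the authors' earlier work \cite{collins-nechita-1} (see the sentence immediately preceding the statement: ``In \cite{collins-nechita-1} \ldots\ it was proved that the bounds on the eigenvalues could be improved as follows''), so there is no in-paper proof to compare against.

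That said, your outline is the right one and matches the methodology of \cite{collins-nechita-1}. The exact moment formula you are aiming for is precisely equation~\eqref{eq:bi_canal_UUbar_det} of the present paper,
\[
\E[\trace(Z^p)] \;=\; \sum_{\alpha,\beta\in\S_{2p}} k^{\#\alpha}\, n^{\#(\alpha\gamma^{-1})+\#(\beta\delta)-p}\,\Wg(nk,\alpha\beta^{-1}),
\]
where $\gamma$ is the two-cycle cyclic wiring and $\delta$ the product of $p$ transpositions coming from the Bell wires. Your description of regime~(II) is correct: only $\alpha=\beta=\id$ survives at leading order in $k$, giving $n^{2-2p}$. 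For regime~(I), the honest work you flag as the ``main obstacle'' is exactly what the proof in \cite{collins-nechita-1} carries out: one minimizes $|\alpha|+|\alpha\gamma^{-1}|+|\beta\delta|+2|\alpha\beta^{-1}|$ (after the Weingarten asymptotic), identifies the geodesic set, and then sums the resulting polynomial in $k^{-1}$ to recover the closed form you wrote. Your proposal correctly anticipates that the Bell pairing $\delta$ is what singles out the isolated eigenvalue; the variance/Borel--Cantelli step is likewise standard and as in the rest of this paper. The only thing missing from your proposal is the actual execution of the regime-(I) combinatorics, but the strategy is sound.
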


If we look for optimal bounds for the minimum output entropy of $\Phi
\otimes\overline{\Phi}$, then there is no mathematical proof that
$U_2 = \ol U_1$ is the best choice. Actually, this choice of
probability measure on $\U(n)\times\U(n)$
does not have full support and we cannot rule out that the maximum for
the minimum output entropy is outside the support of the probability measure.
This is what motivates the introduction of the example in which $U_1$
and $U_2$ are independent unitary matrices.
As we will see, this does not yield improvements on the example of
Winter with high probability.
More strikingly, in the regimes that we consider, we will see that the
constraint $U_2 = \ol U_1$ yields no significant
improvement to the asymptotic behavior of the von Neumann entropies,
and this suggests that the simpler random model where $U_1$ and $U_2$
are independent could be a candidate for additivity violation with high
probability.

In the forthcoming subsections we analyze both models (independent and
conjugate unitaries) in a different asymptotic regime, where both
parameters $n$ and $k$ grow to infinity at a constant ratio $k/n \to
c$. The model where the quantum channels are independent has received
less attention from the quantum information community; here, we show
that it is intimately connected to the (more interesting) case of
conjugate channels, by comparing eigenvalue profiles for outputs of
channels from the two families.

\subsection{Independent interaction unitaries}

Here, we consider two \textit{independent} realizations $U_1 = U$ and
$U_2 = V$ of Haar-distributed unitary random matrices on $\U(nk)$. For
both channels the state of the environment is a rank-one projector and
we are interested in the $n^2 \times n^2$ random matrix
\[
Z = [\Phi^U \otimes\Phi^V] (E_n),
\]
where $E_n$ is the maximal entangled Bell state
\[
E_n = \frac{1}{n} \sum_{i,j=1}^n | e_i \rangle\langle e_j | \otimes
| e_i \rangle\langle e_j |.
\]

The diagram associated with the $(2,2)$ tensor $Z$ appears in Figure \ref
{fig:bi_channel_UV}.

%
\begin{figure}

\includegraphics{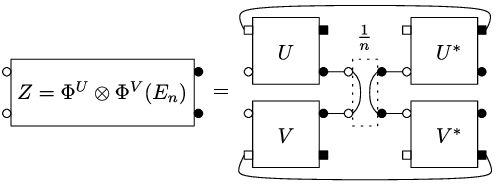}

\caption{$Z = \Phi^U \otimes\Phi^V (E_n).$}
\label{fig:bi_channel_UV}
\end{figure}

We compute the moments $\E[\trace(Z^p)]$ for all $p \geq1$ using the
graphical method. We start, as depicted in Figure \ref
{fig:bi_channel_UV}, by replacing $U^*$ (resp., $V^*$) boxes by $\bar U$
(resp., $\bar V$) boxes.
Notice that
there are two types of boxes corresponding to the independent random
unitary matrices $U$ and $V$ (when computing the $p$th moment of $Z$,
there are $p$ boxes of each type). This has two important consequences:
when expanding the diagram in order to compute the expectation\vadjust{\goodbreak} of the
trace, we can only pair $U$ boxes with $\bar U$ boxes and $V$ boxes
with $\bar V$ boxes; ``cross-pairings'' between ``$U$'' boxes and
``$V$'' boxes are not allowed by the expansion algorithm.
In addition, we have to index the Weingarten sum by two pairs of
permutations, one for each type of box (we shall denote them by $\alpha
_U, \beta_U, \alpha_V, \beta_V \in\S_p$). The four permutations
are responsible for pairing boxes in the following ways ($1 \leq i \leq p$):
\begin{longlist}[(3)]
\item[(1)] the inputs of the $i$th $U$-box are paired with the inputs of the
$\alpha_U(i)$th $\bar U$ box;
\item[(2)] the outputs of the $i$th $U$-box are paired with the outputs of
the $\beta_U(i)$th $\bar U$ box;
\item[(3)] the inputs of the $i$th $V$-box are paired with the inputs of the
$\alpha_V(i)$th $\bar V$ box;
\item[(4)] the outputs of the $i$th $V$-box are paired with the outputs of
the $\beta_V(i)$th $\bar V$ box.
\end{longlist}
Since our diagram consists only of unitary matrices (there are no
constant nontrivial tensors), the result of the graph expansion is a
(sum over a) collection of loops, multiplied by some scalar factor. The
different contributions of a general quadruple $(\alpha_U, \beta_U,
\alpha_V, \beta_V) \in\S_p^4 $ are given by (recall that circles
correspond to $n$-dimensional spaces and squares correspond to
$k$-dimensional spaces):
\begin{longlist}[(8)]
\item[(1)] loops from $\square U$ and $\bar U \square$: $k^{\# \alpha_U}$;
\item[(2)] loops from $\circ U$ and $\bar U \circ$: $n^{\#(\gamma
^{-1}\alpha_U)}$;
\item[(3)] loops from $U \blacksquare$ and $\blacksquare\bar U$: none;
\item[(4)] loops from $U \bullet$, $\bullet\bar U$, $V \bullet$ and
$\bullet\bar V$: $n^{\# (\beta_U^{-1}\beta_V)}$;
\item[(5)] loops from $\square V$ and $\bar V \square$: $k^{\# \alpha_V}$;
\item[(6)] loops from $\circ V$ and $\bar V \circ$: $n^{\#(\gamma
^{-1}\alpha_V)}$;
\item[(7)] normalization factors $1/n$ from the Bell matrices $E_n$: $n^{-p}$;\vspace*{1pt}
\item[(8)] Weingarten weights for the $U$-matrices: $\Wg(\alpha_U\beta_U^{-1})$;
\item[(9)] Weingarten weights for the $V$-matrices: $\Wg(\alpha_V\beta_V^{-1})$.
\end{longlist}
Adding all these contributions, we obtain an exact closed-form
expression, as follows.\vadjust{\goodbreak}

\begin{proposition}
The moments of the random variable $Z$ can be computed as follows:
%
%
\begin{eqnarray}\label{eq:bi_canal_UV_det}
&&\E[\trace(Z^p)] = \sum_{\alpha_U, \beta
_U, \alpha_V, \beta_V \in\S_p} k^{\#
\alpha_U + \# \alpha_V}n^{\#(\gamma^{-1}\alpha_U) + \#(\gamma
^{-1}\alpha_V) + \# (\beta_U^{-1}\beta_V)-p}\nonumber
\\[-8pt]
\\[-8pt]
&&\hphantom{\E[\trace(Z^p)] = \sum_{\alpha_U, \beta
_U, \alpha_V, \beta_V \in\S_p}}
{}\times\Wg(\alpha_U\beta
_U^{-1})\Wg(\alpha_V\beta_V^{-1}).
\nonumber
\end{eqnarray}
\end{proposition}

Here, we study the asymptotic regime $n,k \to\iy$, $k/n \to c >0$.
Our main theorem is as follows.

\begin{theorem}
\label{thm:asympt-UV}
Almost surely, in the regime $n \to\iy$, $k \sim cn$, the
distribution of the output matrix $c^2n^2Z$ converges toward a free
Poisson law with parameter $c^2$.
\end{theorem}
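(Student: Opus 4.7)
The starting point is the closed-form moment formula \eqref{eq:bi_canal_UV_det}. My plan is to (i) substitute $k=cn$ and plug in the Weingarten asymptotic $\Wg(\sigma)=(nk)^{-p-|\sigma|}(\Mob(\sigma)+O(n^{-2}))$, (ii) identify the leading $n$-order configurations by a triangle-inequality argument in $\S_p$, (iii) check that the resulting moments coincide with those of a free Poisson of parameter $c^2$, and (iv) upgrade convergence in expectation to almost sure convergence via a variance estimate combined with Borel--Cantelli.

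After substitution, using $\#\alpha = p-|\alpha|$, the exponent of $n$ in each summand of $\frac{1}{n^2}\E[\trace((c^2n^2Z)^p)]$ becomes
\[2p-2-(|\alpha_U|+|\gamma^{-1}\alpha_U|)-(|\alpha_V|+|\gamma^{-1}\alpha_V|)-2|\alpha_U\beta_U^{-1}|-2|\alpha_V\beta_V^{-1}|-|\beta_U^{-1}\beta_V|.\]
By the triangle inequality in $\S_p$ (Lemma \ref{lem:S_p}), $|\alpha_X|+|\gamma^{-1}\alpha_X|\geq |\gamma|=p-1$ for $X\in\{U,V\}$, so this exponent is bounded above by $0$, with equality iff $\alpha_U,\alpha_V$ lie on the geodesic $\id\to\gamma$, $\alpha_U=\beta_U$, $\alpha_V=\beta_V$, and $\beta_U=\beta_V$. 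Thus, surviving configurations are parametrized by a single permutation $\alpha=\alpha_U=\alpha_V=\beta_U=\beta_V$ on the geodesic $\id\to\gamma$; by Lemma \ref{lem:S_p}, these are in bijection with non-crossing partitions $\sigma\in NC(p)$, and for each such $\alpha$ the Weingarten factors evaluate to $\Mob(\id)^2=1$.

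Tracking the powers of $c$ on this surviving set---$2p-2|\alpha|$ from $k^{\#\alpha_U+\#\alpha_V}$, $-p$ from each Weingarten, and $2p$ from the rescaling $(c^2)^p$, summing to $2\#\alpha$---yields
\[\lim_{n\to\iy}\frac{1}{n^2}\E[\trace((c^2n^2Z)^p)] = \sum_{\sigma\in NC(p)}(c^2)^{\#\sigma},\]
which is precisely the $p$-th moment-cumulant expansion of the free Poisson distribution of parameter $c^2$ (all free cumulants equal to $c^2$), as recalled in Section \ref{reminders-wishart}.

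For almost sure convergence, I would repeat the Chebyshev--Borel-Cantelli scheme used in the single-channel case: compute $\E[\trace_n((c^2n^2Z)^p)^2]$ by introducing the two-cycle permutation $\gamma_2\in\S_{2p}$ and arguing that at leading order the geodesic configurations split into two independent copies, giving the square of the mean; then a parity argument on lengths in $\S_{2p}$ kills the $n^{-1}$ correction, so the variance is $O(n^{-2})$ and the resulting series is summable. The main obstacle is the combinatorial bookkeeping in step (ii): one must verify carefully that the full rigidity $\alpha_U=\alpha_V=\beta_U=\beta_V$ on a geodesic is forced, in particular that the coupling term $|\beta_U^{-1}\beta_V|$ combined with the two Weingarten length penalties produces no unexpected cancellations on off-geodesic configurations---this is exactly the type of ``new cancellation'' phenomenon the introduction alludes to, and it is precisely what distinguishes this analysis from a naive moment estimate.
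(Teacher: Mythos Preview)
Your moments-in-expectation argument is correct and matches the paper's approach exactly: same substitution $k\sim cn$, same Weingarten asymptotic, same triangle-inequality bound on $\mathcal P_n$, same identification of the surviving geodesic configurations $\alpha_U=\beta_U=\alpha_V=\beta_V=\alpha$ with $\id\to\alpha\to\gamma$, and the same recognition of $\sum_{\sigma\in NC(p)}(c^2)^{\#\sigma}$ as the free Poisson moments.

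There is, however, a genuine gap in your almost-sure argument. You claim that a parity argument on lengths in $\S_{2p}$ kills the $n^{-1}$ correction in the variance. This works in the single-channel setting because the exponent of $n$ there is $|\alpha|+|\gamma_2^{-1}\alpha|$, which has fixed parity. In the present bi-channel model the exponent $\mathcal P_n$ contains the extra coupling term $|\beta_U^{-1}\beta_V|$, and the configurations $\id\to\alpha_U=\beta_U\to\gamma$, $\id\to\alpha_V=\beta_V\to\gamma$ with $|\alpha_U^{-1}\alpha_V|=1$ give $\mathcal P_n=2p-1$; these produce a nonzero $n^{-1}$ term in $\E[\tracenorm_{n^2}((c^2n^2Z)^p)]$. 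The paper's appendix does not invoke parity here; instead it computes this $n^{-1}$ term explicitly, then performs the analogous computation for $\E[(\tracenorm_{n^2}((c^2n^2Z)^p))^2]$ using $\gamma_2$, and verifies that the $n^0$ and $n^{-1}$ terms match so that the variance is $O(n^{-2})$. Your variance argument needs this explicit matching, not a parity shortcut.

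A minor comment: the ``new cancellation'' phenomenon you mention at the end refers to Theorem~\ref{thm:asmpt-QZQ} (the $QZQ$ analysis for conjugate channels), not to the present independent-channel theorem. Here there are no cancellations among off-geodesic terms; they are simply of strictly lower order in $n$.
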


\begin{pf}
We start by replacing $k$ by $c n$ in equation (\ref
{eq:bi_canal_UV_det}) and obtain
\[
\E[\trace(Z^p)] \sim\sum_{\alpha_U, \beta_U, \alpha_V, \beta_V
\in\S_p} n^{-\mathcal P_n} c^{-\mathcal P_c} \Mob(\alpha_U\beta
_U^{-1})\Mob(\alpha_V\beta_V^{-1}),
\]
where
\[
\mathcal P_n = |\alpha_U| + |\alpha_V| + |\gamma^{-1}\alpha_U| +
|\gamma^{-1}\alpha_V| + |\beta_U^{-1}\beta_V| + 2|\alpha_U\beta
_U^{-1}| + 2|\alpha_V\beta_V^{-1}|
\]
and
\[
\mathcal P_c = |\alpha_U| + |\alpha_V| + |\alpha_U\beta_U^{-1}| +
|\alpha_V\beta_V^{-1}|.
\]

Since we are interested in the asymptotic $n \to\iy$ ($c$ is a
constant), we want to minimize $\mathcal P_n$. The following
inequalities are standard
(cf. Lemma~\ref{lem:S_p}):
%
\begin{eqnarray}
\label{eq:UV_ineq_first}|\alpha_U| + |\gamma^{-1}\alpha_U| &\geq&
p-1;\\
|\alpha_V| + |\gamma^{-1}\alpha_V| &\geq& p-1;\\
\label{eq:UV_ineq_last}
|\beta_U^{-1}\beta_V| , 2|\alpha_U\beta
_U^{-1}| , 2|\alpha_V\beta_V^{-1}| &\geq&0,
\end{eqnarray}
and thus $\mathcal P_n \geq2p-2$ with equality iff $\alpha_U = \beta
_U = \alpha_V = \beta_V = \alpha$ and $\alpha$ is on a geodesic
between $\id$ and $\gamma$. By choosing the obvious $n^2$ rescaling,
we get
\[
\lim_{n,k \to\iy}\E\biggl[\frac{1}{n^2}\trace((c^2n^2Z)^p)
\biggr] = \!\sum_{\alpha\mathrm{\ geodesic}}\! c^{2p-2|\alpha|} =\!
\sum_{\alpha
\mathrm{\ geodesic}}\! c^{2\# \alpha} =\! \sum_{\sigma\in \mathit{NC}(p)}\! c^{2\#
\sigma},
\]
and we recognize in the last sum the $p$th moment of the free Poisson
distribution with parameter $c^2$. This shows that the the matrix $Z$
converges \textit{in moments} to the limiting Marchenko--Pastur
distribution. The argument for the almost sure convergence relies on
the Borel--Cantelli lemma and can be found in the \hyperref[appm]{Appendix}.
\end{pf}

The von Neumann entropy of the output can be calculated in a fashion
similar to Corollary~\ref{cor-entropy}.

\begin{proposition}
\label{prop:entropy_independent}
Almost surely, in the limit $n \to\iy$, the von Neumann entropy of
the matrix $Z$ satisfies
\[
H(Z) =
\cases{\displaystyle
2\log n - \frac{1}{2c^2}+o(1), & \quad   if   $c\geq1$,\cr\displaystyle
2\log(cn) - \frac{c^2}{2}+o(1), & \quad   if   $0<c<1$.
}
\]
\end{proposition}

%
\begin{figure}

\includegraphics{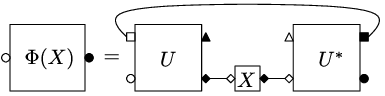}

\caption{A quantum channel with asymmetric input and output tensor structure.}
\label{fig:channel_d}
\end{figure}

Let us now consider a slightly generalized model of random quantum
channels. We introduce channels $\Phi\dvtx  \M_d(\C) \to\M_n(\C)$
which have a different tensor product structure at their input and
output. Here, $d$ is an integer parameter, and we shall always suppose
that $d | nk$. The diagram associated with such a channel is
depicted in Figure~\ref{fig:channel_d}, where diamond-shaped labels
correspond to $d$-dimensional vector spaces and triangle-shaped
decorations denote spaces of dimension $d'=nk/d$. The above analysis
for a product of independent channels is easily adapted to this more
general situation:\looseness=1
\[
\E[\trace(Z^p)] \sim\sum_{\alpha_U, \beta_U, \alpha_V, \beta_V
\in\S_p} n^{-\tilde{\mathcal P_n}} d^{-\tilde{\mathcal P_d}}
c^{-\tilde{\mathcal P_c}} \Mob(\alpha_U\beta_U^{-1})\Mob(\alpha
_V\beta_V^{-1}),
\]\looseness=0
where
\begin{eqnarray*}
\tilde{\mathcal P_n} &=& |\alpha_U| + |\alpha_V| + |\gamma^{-1}\alpha
_U| + |\gamma^{-1}\alpha_V| + 2|\alpha_U\beta_U^{-1}| + 2|\alpha
_V\beta_V^{-1}|,
\\
\tilde{\mathcal P_d} &=& |\beta_U^{-1}\beta_V|  \quad \mbox{and} \quad
\tilde{\mathcal P_c} = |\alpha_U| + |\alpha_V| + |\alpha
_U\beta_U^{-1}| + |\alpha_V\beta_V^{-1}|.
\end{eqnarray*}

\begin{remark}
If $d=d(n)$ is a function of $n$ such that $\lim_{n \to\iy} d(n) =
\iy$, then the considerations in Theorem~\ref{thm:asympt-UV} carry
over to this case and we obtain exactly the same limit, a~free Poisson
distribution with parameter $c^2$. The function $d=d(n)$ does not play
any role in this situation.
\end{remark}

On the other hand, if the parameter $d$ is constant (inputs of fixed
dimension), then the limiting behavior changes. Indeed, the minimizing
constraint $|\beta_U^{-1}\beta_V| = 0$ disappears, and the
contributing quadruples of permutations become uncoupled: $\id\to
\alpha_U=\beta_U \to\gamma$ and $\id\to\alpha_V=\beta_V \to
\gamma$. In conclusion, the asymptotic moments in this case are given
by the formula, which we summarize in the following proposition.
\begin{proposition}
If $d$ is constant, the limiting distribution of $c^2n^2Z$ also exists
and its limit moments are given by
\[
\frac{1}{n^2}\E[\trace((c^2n^2Z)^p)] \sim\mathop{\mathop{\sum
}_{\id\to
\alpha_U=\beta_U \to\gamma}}_{ \id\to\alpha_V=\beta_V \to
\gamma
} c^{\#\alpha_U + \#\alpha_V} d^{-|\alpha_U^{-1}\alpha_V|}.
\]
\end{proposition}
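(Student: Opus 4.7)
The plan is to mimic the asymptotic analysis of Theorem~\ref{thm:asympt-UV}, starting from the generalized Weingarten expansion
\[\E[\trace(Z^p)] \sim \sum_{\alpha_U, \beta_U, \alpha_V, \beta_V \in \S_p} n^{-\tilde{\mathcal P_n}} d^{-\tilde{\mathcal P_d}} c^{-\tilde{\mathcal P_c}} \Mob(\alpha_U\beta_U^{-1})\Mob(\alpha_V\beta_V^{-1})\]
that was derived just above the proposition. The essential difference with the $d = n$ setting is that, since $d$ is now a fixed constant, the factor $d^{-\tilde{\mathcal P_d}} = d^{-|\beta_U^{-1}\beta_V|}$ contributes only a bounded weight to each term, and therefore plays no role in the minimization of the large-$n$ exponent. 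Consequently the coupling constraint $\beta_U^{-1}\beta_V = \id$ that appeared in Theorem~\ref{thm:asympt-UV} disappears here, and the surviving quadruples of permutations ``decouple'' into two independent geodesic choices -- weighted however by the residual finite factor $d^{-|\alpha_U^{-1}\alpha_V|}$, which is what produces the non-trivial double-sum structure in the final formula.

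The first step is to apply the standard distance inequalities from Lemma~\ref{lem:S_p}:
\begin{align*}
|\alpha_U| + |\gamma^{-1}\alpha_U| &\geq p-1,\\
|\alpha_V| + |\gamma^{-1}\alpha_V| &\geq p-1,\\
|\alpha_U\beta_U^{-1}|,\ |\alpha_V\beta_V^{-1}| &\geq 0.
\end{align*}
Adding these yields $\tilde{\mathcal P_n} \geq 2p-2$, with equality precisely when $\alpha_U = \beta_U$, $\alpha_V = \beta_V$, and both $\alpha_U, \alpha_V$ lie on geodesics from $\id$ to $\gamma$; equivalently, when they correspond (via Lemma~\ref{lem:S_p}) to non-crossing partitions of $\{1,\dots,p\}$.

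The second step is to evaluate the contributions at these minimizing quadruples. Since $\alpha_U\beta_U^{-1} = \alpha_V\beta_V^{-1} = \id$, one has $\Mob(\alpha_U\beta_U^{-1}) = \Mob(\alpha_V\beta_V^{-1}) = 1$, and $c^{-\tilde{\mathcal P_c}}$ reduces to $c^{-|\alpha_U|-|\alpha_V|}$. Combining this with the global rescaling prefactor $c^{2p}n^{2p-2}$ coming from $\frac{1}{n^2}\trace((c^2n^2Z)^p) = c^{2p}n^{2p-2}\trace(Z^p)$, and using $\#\alpha = p-|\alpha|$, the $c$-weight simplifies to $c^{\#\alpha_U + \#\alpha_V}$. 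The residual $d$-weight is $d^{-|\beta_U^{-1}\beta_V|} = d^{-|\alpha_U^{-1}\alpha_V|}$, and the $n$-factors cancel exactly ($n^{2p-2} \cdot n^{-(2p-2)} = 1$). The announced formula follows.

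The only subtlety, rather than an obstacle, is to verify that the subleading quadruples (those with $\tilde{\mathcal P_n} > 2p-2$) contribute only a vanishing $O(n^{-1})$ error. This is routine: the number of quadruples in $\S_p^4$ is uniformly bounded in $n$, the Möbius weights are bounded on each conjugacy class, and all remaining $d$- and $c$-exponents are non-negative when $d, c \geq 1$ (and a constant factor otherwise); hence the error is dominated by the gap in the $n$-exponent. Almost sure convergence, if desired, can be obtained by the same Borel--Cantelli argument used at the end of the proof of Theorem~\ref{thm:asympt-UV}, applied to the covariance series, which one checks is summable by a parity argument on the exponent $\tilde{\mathcal P_n}$ associated to the doubled permutation $\gamma_2$.
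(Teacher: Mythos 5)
Your argument is correct and is essentially the argument the paper intends (the paper only sketches it in one sentence preceding the proposition): starting from the given asymmetric Weingarten expansion, note that the constant-$d$ weight drops out of the $n$-exponent minimization, saturate $\tilde{\mathcal P_n}\geq 2p-2$ via the geodesic condition $\id\to\alpha_U=\beta_U\to\gamma$ and $\id\to\alpha_V=\beta_V\to\gamma$, and read off the $c$- and $d$-weights. Your bookkeeping with the prefactor $c^{2p}n^{2p-2}$ and the cancellation of $\Mob(\id)=1$ matches the stated formula, so there is nothing to add.
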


\begin{question}
We are not able to identify this distribution, even though its
properties look new. We wonder whether this distribution could be
related to generalized convolutions of Bo{\.z}ejko and coworkers
(cf.~\cite{bozejko}).
\end{question}

\subsection{Conjugate interaction unitaries}
\label{subsec:conjugate}
To conclude, we consider the tensor product
of two \textit{conjugate} random quantum channels. As was emphasized
in Section~\ref{sec:product_channels_intro}, product channels $\Phi_U
\otimes\Phi_{\ol U}$ have very interesting eigenvalues statistics and
have received a lot of attention in the last years because of their
usefulness in providing counterexamples to different additivity conjectures.
The purpose of this section is to obtain a description of the behavior
of such channels in the regime where both $n$ and $k$ grow to infinity
at a constant ratio $c \in(0, \iy)$.

Hayden and Winter
remarked in~\cite{hayden-winter} that such a conjugate product channel
has a very important property: the output of the maximally entangled
state over the input space has a ``large'' eigenvalue, of size at least
$1/(cn)$.
The results of~\cite{collins-nechita-1} show that one expects for this
model a large eigenvalue $\lambda_1 = 1/(cn) + o(1/n)$ and $(n^2-1)$
smaller eigenvalues.
The purpose of this section is to show that this is indeed the case.
Actually, we can prove that
the random matrix under study has eigenvalues on two scalings: $1/n$
and $1/n^2$. In the next theorem, we compute the moments of the output
matrix $Z$ up to the first order in $n$.

\begin{theorem}
\label{thm:asympt-Z}
Fix some scaling constant $c>0$ and consider a sequence of
random quantum channels $\Phi_{n,k}$,
where $n,k \to\iy$ and $k/n \to c$. The asymptotic moments of the
output matrix $Z=\Phi\otimes\overline{\Phi}(E_n)$ are given by
\begin{eqnarray*}
\trace(Z ) &=& 1;\\
\E\trace((cnZ)^2 ) &=& 2+c^2+O(n^{-1});\\
\E\trace((cnZ)^p ) &=& 1+O(n^{-1}) \qquad  \forall p \geq3.
\end{eqnarray*}
\end{theorem}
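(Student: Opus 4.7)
The $p=1$ case is immediate: $\Phi$ and $\bar\Phi$ are trace preserving, so $Z=(\Phi\otimes\bar\Phi)(E_n)$ is a density matrix and $\trace(Z)=1$. For $p\geq 2$, my plan is to apply the Weingarten graphical calculus of Theorem~\ref{thm:Wg_diag}. After converting the $U^*$ and $\bar U^*$ blocks to $\bar U$ and $U$ blocks respectively, the diagram for $\trace(Z^p)$ contains $2p$ boxes of type $U$ and $2p$ of type $\bar U$, coupled through a single Haar unitary (in contrast with the four permutations needed in the independent case of Theorem~\ref{thm:asympt-UV}), so the expansion runs over a single pair $(\alpha,\beta)\in\S_{2p}\times\S_{2p}$. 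This yields an exact expression of the form
\begin{equation*}
\E\trace(Z^p)=n^{-p}\sum_{\alpha,\beta\in\S_{2p}} n^{\ell_n(\alpha,\beta)}\,k^{\ell_k(\alpha,\beta)}\,\Wg(\alpha\beta^{-1}),
\end{equation*}
where the prefactor $n^{-p}$ accounts for the normalisations of the $p$ Bell states and $\ell_n,\ell_k$ count loops of dimensionalities $n$ and $k$, each being of the form $\#\sigma$ for some permutation built from $\alpha$, $\beta$, the trace cycle $\gamma_p$ and the Bell pairing.

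Substituting $k=cn$ and using the asymptotic $\Wg(\sigma)\sim n^{-4p}c^{-2p-|\sigma|}\Mob(\sigma)$ from Theorem~\ref{thm:mob}, each summand becomes a monomial in $n^{-1}$ and $c$ whose $n$-exponent is a linear combination of lengths in $\S_{2p}$ of the form $|\alpha|,|\beta|,|\alpha\beta^{-1}|,|\alpha^{-1}\gamma_\ast|$, where $\gamma_\ast$ encodes the combined Bell/trace pairing. The triangle inequalities of Lemma~\ref{lem:S_p} give a lower bound on this exponent, and the configurations that saturate it fall into two families. The first family, present for every $p$, comes from the unique pairing compatible with the Bell state and produces the ``large-eigenvalue'' contribution $\lambda_1^p\simeq(cn)^{-p}$; after rescaling by $(cn)^p$ it gives the universal $1+O(n^{-1})$. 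The second family comes from pairings compatible with the trace cycle $\gamma_p$ and reduces, via the non-crossing bijection in Lemma~\ref{lem:S_p}, to a sum over $NC(p)$ that evaluates to $m_p(c^2)/c^p$, the $p$-th moment of a Marchenko--Pastur law of parameter $c^2$ divided by $c^p$.

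For $p\geq 3$, the second family contributes, after the rescaling by $(cn)^p$, at order $n^{2-p}=O(n^{-1})$, hence gets absorbed into the error term and only the universal $1$ remains. For $p=2$ the two families both contribute at order $1$, and their sum equals $1+(c^2+c^4)/c^2=2+c^2$, since the second moment of a Marchenko--Pastur of parameter $c^2$ is $c^2+c^4$. The main obstacle will be the length-function bookkeeping that pinpoints these two saturating families and rules out every intermediate configuration that could perturb the clean value $2+c^2$: this demands a careful parity and geodesicity argument inside $\S_4$, qualitatively different from the independent case of Theorem~\ref{thm:asympt-UV} because the single Weingarten weight now couples the $U$- and $\bar U$-pairings rather than letting them factor. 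Pushing the same method to the subleading orders would, as indicated in the introduction, recover the Marchenko--Pastur fluctuations for the $n^2-1$ small eigenvalues; here we stop at the level needed for the three identities.
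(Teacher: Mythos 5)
Your overall strategy is the same as the paper's: write the exact Weingarten formula with a single pair $(\alpha,\beta)\in\S_{2p}^2$, substitute $k=cn$, and minimize the resulting exponent of $n$ over all $(\alpha,\beta)$. The intuition about "two families" (a Bell-state pairing producing the $(cn)^{-p}$ contribution, and a trace-cycle family producing Marchenko--Pastur moments) is qualitatively right and even reproduces the correct $p=2$ arithmetic. But the part you defer -- "the length-function bookkeeping that pinpoints these two saturating families and rules out every intermediate configuration" -- is precisely the content of the paper's proof, and without it the statement is not established.

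Concretely, two problems. First, the Weingarten asymptotic is stated with the wrong $n$-exponent: since the Haar unitary lives on $\U(nk)$ and $\sigma\in\S_{2p}$, one has $\Wg(nk,\sigma)\sim (nk)^{-(2p+|\sigma|)}\Mob(\sigma)=c^{-(2p+|\sigma|)}n^{-4p-2|\sigma|}\Mob(\sigma)$, not $n^{-4p}c^{-2p-|\sigma|}\Mob(\sigma)$. The missing $n^{-2|\sigma|}$ supplies the $2|\alpha\beta^{-1}|$ term in the exponent to be minimized, which is essential to forcing $\alpha=\beta$ at the extremum; if you drop it, the minimization problem changes and no longer singles out the correct configurations. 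Second, and more seriously, after reducing (via two triangle inequalities forcing $\alpha=\beta$) to minimizing $S_1(\beta)=|\beta|+|\beta\gamma^{-1}|+|\beta\delta^{-1}|$ over $\beta\in\S_{2p}$, the paper needs to prove $S_1(\beta)\geq 2p$ for all $\beta$ and that the minimizer is unique ($\beta=\delta$) when $p\geq 3$, but that three minimizers $\{\id,\delta,\gamma\}$ appear when $p=2$. That step is not a parity-only count: it requires introducing the auxiliary full cycle $\gammat$ (on which both $\delta$ and $\gamma$ lie geodesically), splitting according to whether $\beta$ lies on the $\id\to\gammat$ geodesic, using the Kreweras-complement criterion to decide the sign of $|\beta\gamma^{-1}|-|\beta\gammat^{-1}|$, and a fixed-point-free argument giving $|\beta\delta^{-1}|\geq p$ when $\beta\leq\gamma$. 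Your sketch does not contain any of this, and without it one cannot exclude intermediate $\beta$'s that might contribute at the leading order and change the constant $2+c^2$ or destroy the $1+O(n^{-1})$ for $p\geq 3$. The "two families" picture is also slightly misleading as a description of the minimizers: for $p\geq 3$ the second family contributes only at strictly lower order (so the minimizer is a single configuration, not two families), and for $p=2$ the minimizers are $\{\id,\delta,\gamma\}$, which do not exhaust the $\id\to\gamma$ geodesic.
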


\begin{remark}
Before we prove this result, we would like to point out to readers
aware of random matrix theory and matrix integrals
that the symbol $O(n^{-1})$ is actually optimal. One can check by
inspection that there are terms of order $n^{-1}$ in the expansion
of the quantities of the theorem. This observation stresses the fact
that the matrix model $Z$ does not
behave like a usual unitarily invariant matrix model, but rather like
an orthogonal matrix model, even though the underlying invariance
group is the unitary group.\vadjust{\goodbreak} This technicality explains why we can only
obtain convergence in probability
of the rescaled largest eigenvalue and not the almost sure convergence.
\end{remark}

\begin{pf*}{Proof of Theorem~\ref{thm:asympt-Z}}
We start from the exact expression for fixed $n$ and $k$ for the
moments of $Z$ (see~\cite{collins-nechita-1}):
%
%
\begin{equation}\label{eq:bi_canal_UUbar_det}
\E[\trace(Z^p)] = \sum_{\alpha, \beta\in\S_{2p}}k^{\# \alpha
}n^{\#(\alpha\gamma^{-1}) +\# (\beta\delta) - p}\Wg(\alpha\beta^{-1}).
\end{equation}
Since in this ``conjugate'' case, the permutations $\alpha$ and $\beta
$ act on the whole set of $2p$ boxes, we introduce a special labeling
on the boxes. The \textit{top} row of boxes (corresponding to the
channel $\Phi$) shall be labeled by $1^T, 2^T, \ldots, p^T$ and the
\textit{bottom} row by $1^B, 2^B, \ldots, p^B$. With this notation,
the permutations $\gamma$ and $\delta$ have the following expressions:
%
%
\begin{eqnarray}
\gamma&=& \bigl(p^T \; (p-1)^T \cdots1^T\bigr) \; (1^B \; 2^B \cdots p^B);
\nonumber
\\[-8pt]
\\[-8pt]
\delta&=& (1^T \; 1^B) \; (2^T \; 2^B) \cdots(p^T \; p^B).
\nonumber
\end{eqnarray}
Dropping the number-of-cycles statistics $\#(\cdot)$ in favor of
permutation lengths $|\cdot|$, replacing $k \sim cn$ and using the
standard asymptotic expansion for the Weingarten function, we have
\[
\E[\trace(Z^p)] \sim\sum_{\alpha, \beta\in\S_{2p}} c^{-(|\alpha
|+|\alpha\beta^{-1}|)}n^{p-(|\alpha|+|\alpha\gamma^{-1}|+|\beta
\delta|+2|\alpha\beta^{-1}|)}\Mob(\alpha\beta^{-1}).
\]
In order to find the first order asymptotic (in $n$) of this
expression, one has to minimize the quantity
\[
|\alpha|+|\alpha\gamma^{-1}|+|\beta\delta|+2|\alpha\beta^{-1}|
\]
over all permutations $\alpha, \beta\in\S_{2p}$. We start by
simplifying this optimization problem over two permutations by using
the following two inequalities:
%
\begin{eqnarray}
\label{eq:first_geodesic_alpha}|\alpha|+|\alpha\beta^{-1}| &\geq&
|\beta|;\\
\label{eq:second_geodesic_alpha}
|\alpha\gamma^{-1}|+|\alpha\beta
^{-1}| &\geq&|\beta\gamma^{-1}|.
\end{eqnarray}
Note that these inequalities can be simultaneously saturated by
choosing, for example, $\alpha= \beta$. So, one is left with the
following minimization problem over $\beta\in\S_{2p}$:
%
%
\begin{equation}\label{eq:min_beta}
\mbox{minimize}  \qquad  S_1(\beta) = |\beta|+|\beta\gamma
^{-1}|+|\beta\delta^{-1}|.
\end{equation}

The main ingredient in tackling this problem is the fact that both
permutations $\delta$ and $\gamma$ lie on the geodesic between the
identity permutation $\id$ and the full-cycle permutation
\[
\gammat= (p^T \cdots2^T 1^T 1^B 2^B \cdots p^B).
\]
This follows from the saturated triangle inequalities $|\delta| +
|\delta^{-1}\gammat| = p + p-1 = 2p-1$ and $|\gamma| + |\gamma
^{-1}\gammat| = 2(p-1) + 1 = 2p-1$.\vadjust{\goodbreak} If fact, one has $\gammat=
(p^T1^B)\cdot\gamma$. Under Biane's isomorphism,
(cf.~\ref{lem:S_p}), the
permutations $\delta$ and $\gamma$ correspond to the noncrossing
partitions in Figure~\ref{fig:delta_gamma_NC}.

%
\begin{figure}

\includegraphics{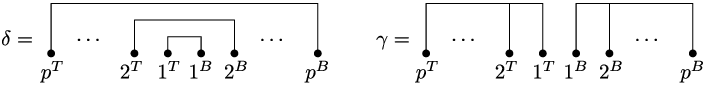}

\caption{Noncrossing partitions associated with permutations $\delta$
and $\gamma.$}
\label{fig:delta_gamma_NC}
\end{figure}

We start by looking at the following simplified minimization problem:
\[
\mbox{minimize}  \qquad S_2(\beta) = |\beta|+|\beta\delta
^{-1}|+|\beta\gammat^{-1}|.
\]
Obviously, $|\beta|+|\beta\gammat^{-1}| \geq2p-1$, with equality
iff $\beta$ lies in the geodesic between $\id$ and $\gammat$. It
follows from a parity argument that if $\beta$ is not an element of
the geodesic set $\id\to\gammat$, then $|\beta|+|\beta\gammat
^{-1}| \geq2p+1$ and hence, since in this case one has $\beta\neq
\delta$, $S_2(\beta) \geq2p+2$. If $\beta$ is a geodesic element,
then $S_2(\beta) \geq2p-1$, with equality iff $\beta=\delta$.

Since the permutations $\gamma$ and $\gammat$ are at distance one,
the same holds for $\beta\gamma^{-1}$ and $\beta\gammat^{-1}$:
\[
\beta\gamma^{-1} = \beta\gammat^{-1} \cdot(p^T1^B).
\]
We have $|\beta\gamma^{-1}| = |\beta\gammat^{-1}| \pm1$ and, even
more precisely,
\[
|\beta\gamma^{-1}| =
\cases{\displaystyle
|\beta\gammat^{-1}| - 1, & \quad  if   $p^T$   and   $1^B$
  are in the same block of   $\beta\gammat^{-1}$, \cr\displaystyle
|\beta\gammat^{-1}| + 1, & \quad  otherwise.
}
\]
Note that the condition appearing in the first case can be restated in
the following, simpler way. It is known that since $\beta$ is a
geodesic element, $\beta^{-1}\gammat$ is also on the geodesic, and
the noncrossing partition associated with $\beta^{-1}\gammat$ is the
Kreweras complement of the partition associated with $\beta$ (see
\cite{nica-speicher}, Lecture 9, for a definition of the Kreweras
complement of a noncrossing partition). The condition that $p^T$ and
$1^B$ belong to the same block of $K(\beta)$ is depicted in Figure
\ref{fig:kreweras} and it is easily seen to be equivalent to $\beta
\leq\gamma$ (the permutations are compared here via their associated
partitions).

%
\begin{figure}[b]

\includegraphics{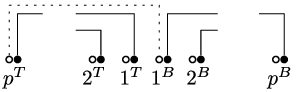}

\caption{Kreweras complement of $\beta.$}
\label{fig:kreweras}
\end{figure}

It follows that $S_1(\beta) = S_2(\beta) \pm1$ and, in order to
conclude, one needs to look at the position of the permutation $\beta$
with respect to the geodesic $\id\to\gammat$. If $\beta$ is not an
element of $\id\to\gammat$, then $S_1(\beta) \geq S_2(\beta) - 1
\geq2p+2-1=2p+1,$ which is enough to conclude. We assume from now on
that $\beta$ is a geodesic element. If $|\beta\gamma^{-1}| = |\beta
\gammat^{-1}|+1$, then $S_1(\beta) = S_2(\beta) + 1 \geq2p$, with
equality if and only if $\beta= \delta,$ and the conclusion follows.
The most difficult case is when $|\beta\gamma^{-1}| = |\beta\gammat
^{-1}|-1,$ which is equivalent to the fact that $p^T$ and $1^B$ are in
different blocks of $\beta\gammat^{-1}$. We get $S_1(\beta) = 2p-2 +
|\beta\delta^{-1}|$. We claim that for any geodesic $\beta$ such
that $\beta\leq\gamma$, $|\beta\delta^{-1}| \geq p$.
This follows from the fact that the permutation $\beta\delta
^{-1}=\beta\delta$ has no fixed points: any index $x^T$ is mapped by
$\delta$ to $x^B$, which is then mapped by $\beta$ to some $y^B \neq
x^T$ (the same holds for bottom indices). Since it has no fixed points,
each cycle of $\beta\delta^{-1}$ has cardinality at least 2, and thus
$\beta\delta^{-1}$ has at most $p$ cycles, which implies $|\beta
\delta^{-1}| = 2p - \#(\beta\delta^{-1}) \geq p$. We conclude that
if a geodesic permutation $\beta$ verifies $|\beta\gamma^{-1}| =
|\beta\gammat^{-1}|-1$, then $S_1(\beta) \geq2p-2+p \geq2p+1$ for
$p \geq3$.

So far, we have shown that the inequality $S_1(\beta) \geq2p$ holds
for all $\beta$ and $p$. Moreover, for $p\geq3$, we have shown that
equality holds if and only if $\beta= \delta$. For $p=2$, using an
exhaustive search in $\S_4$, we can identify the permutations which
saturate the equality: $\beta\in\{\id, \delta, \gamma\}$.

Now that we have completely solved the minimization problem for $\beta
$, let us go back to equations~(\ref{eq:first_geodesic_alpha}) and (\ref
{eq:second_geodesic_alpha}) and find, for each minimizing $\beta$, the
values of $\alpha$ which saturate both inequalities. For $\beta
=\delta$, the geodesic $\id\to\delta$ has a very simple expression
since $\delta$ is a product of transpositions with disjoint support
(see the proof of Theorem 6.3 in~\cite{collins-nechita-1}):
\[
\id\to\alpha\to\delta \quad \Longleftrightarrow \quad \exists
\varnothing\subseteq A \subseteq\{1, 2, \ldots, p\}  \quad \mbox{such
that} \quad  \alpha= \prod_{i \in A} (i^T i^B).
\]
Obviously, we have $\alpha\delta^{-1} = \prod_{i \notin A} (i^T
i^B)$, and thus formula~(\ref{eq:second_geodesic_alpha}) reads
$|\alpha\gamma^{-1}| + p-|A| = p$. Writing explicitly $\alpha\gamma
^{-1}$, we can show that
\[
\#(\alpha\gamma^{-1}) =
\cases{\displaystyle
1, & \quad  if   $A=\varnothing$, \cr\displaystyle
|A|, & \quad  otherwise.
}
\]
Obviously, $A=\varnothing$ does not verify the equality, so one is left
with $2p-|A|=|A| \Rightarrow|A|=p$ and hence $\alpha= \delta=\beta
$. The other two cases for $p=2$ ($\beta=\id$ and $\beta=\gamma$)
are trivial and yield the same result $\alpha=\beta$. In conclusion,
for $p \geq3$, we obtain
\[
\E[\trace(Z^p)] = c^{-p}n^{-p} + o(n^{-p})
\]
and for $p=2$,
\[
\E[\trace(Z^2)] = (1+2c^{-2})n^{-2} + o(n^{-2}),
\]
which completes the proof.
\end{pf*}

At this point, the description of the random matrix $Z$ is not
complete: the moment information of the preceding theorem allows us to
infer that there are at least some eigenvalues on the scale of $1/n$
and that the rest of the spectrum is distributed on lower scales, such
as $1/n^2$. Hayden\vadjust{\goodbreak} and Winter's proof of the existence of a \textit
{large} eigenvalue contains, as a byproduct, some information on the
eigenvector for this particular eigenvalue. Indeed, they use the
projection on the Bell state to obtain a lower bound for the largest
eigenvalue of $Z$, so one can use this projector to obtain more precise
information on the eigenvalue distribution of $Z$.

%
\begin{figure}[b]

\includegraphics{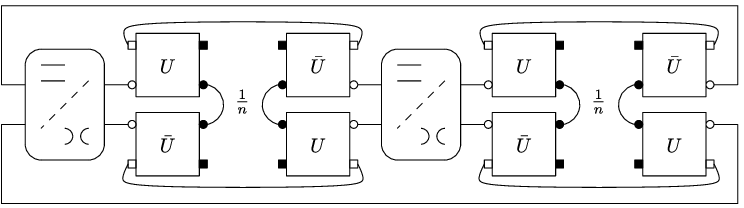}

\caption{Developing $\trace(n^2QZQ)^2$.}
\label{fig:QZQ_choice}
\end{figure}

In order to obtain information on the rest of the spectrum, we
introduce the orthogonal projection $Q = \I- E$, where $E$ is the
maximally entangled state. Using the (rank $n^2-1$) projector $Q$, we
shall obtain some information on the smallest $n^2-1$ eigenvalues of
the output matrix $Z$.

\begin{theorem}\label{thm:asmpt-QZQ}
Almost surely, the matrix $c^2n^2QZQ$ converges in distribution, to a
Marchenko--Pastur law with parameter $c^2$.
\end{theorem}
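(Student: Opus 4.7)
The plan is to compute $\E\trace((c^{2}n^{2}QZQ)^{p})$ by Weingarten calculus, using the combinatorial framework of the proof of Theorem \ref{thm:asympt-Z}. Since $Q^{2}=Q$ and the trace is cyclic, $\trace((QZQ)^{p})=\trace((ZQ)^{p})$, and expanding $Q=\I-E$ produces the signed sum
\[\trace((ZQ)^{p})=\sum_{S\subseteq\{1,\ldots,p\}}(-1)^{|S|}\trace(ZB^{S}_{1}\cdots ZB^{S}_{p}),\]
with $B^{S}_{i}=E$ if $i\in S$ and $B^{S}_{i}=\I$ otherwise. Since $E=\ket{E_n}\bra{E_n}$ has rank one, a direct telescoping shows that for any non-empty $S$ of size $m$,
\[\trace(ZB^{S}_{1}\cdots ZB^{S}_{p})=\prod_{j=1}^{m}a_{g_{j}(S)},\qquad a_{g}:=\bra{E_n}Z^{g}\ket{E_n},\]
where $(g_{1},\ldots,g_{m})$ is the cyclic gap composition of $p$ induced by the positions of the $E$'s in $S$. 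The $S=\emptyset$ summand is $\trace(Z^{p})$, whose asymptotics are already supplied by Theorem \ref{thm:asympt-Z}.

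Each expectation $\E[\prod_{j}a_{g_{j}}]$ is then evaluated by the same graphical machinery as in Theorem \ref{thm:asympt-Z}: the $p$ channel blocks contribute $2p$ unitary tensors, the Bell inputs of each $Z$ together with the $m$ Bell-state contractions from the $E$-insertions produce additional constant tensors, so the Weingarten expansion is a sum over $(\alpha,\beta)\in\S_{2p}\times\S_{2p}$ weighted by $k^{\#\alpha}n^{L_{S}(\alpha,\beta)}\Wg(\alpha\beta^{-1})$, where $L_{S}$ is the length functional of Theorem \ref{thm:asympt-Z} adjusted for the $m$ additional Bell pairings. Substituting $k=cn$ and using the standard asymptotic expansion of $\Wg$, each $\E[\prod a_{g_j}]$ reduces to a minimisation of $L_{S}$ on $\S_{2p}\times\S_{2p}$. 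Heuristically, the $(cn)^{-p}$ scale of $\E[\prod a_{g_j}]$ is the large-eigenvalue contribution $\lambda_1^p$ (with $\lambda_1\sim 1/(cn)$ from Hayden-Winter), and it carries the same prefactor for every $S$; the binomial identity $\sum_{m=0}^{p}(-1)^{m}\binom{p}{m}=0$ then annihilates the entire $(cn)^{-p}$ order -- which is exactly how the projector $Q$ removes $\lambda_1$ on the moment side.

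What remains are sub-leading contributions of order $n^{-(2p-2)}$, and the heart of the proof is to show that their signed sum equals $\sum_{\sigma\in NC(p)}c^{2\#\sigma}$, the $p$-th moment of the Marchenko-Pastur law of parameter $c^{2}$. On the Weingarten side, the surviving $(\alpha,\beta)$-configurations should be those lying on geodesics between $\id$ and the doubled full cycle $\gammat$, and one hopes to establish a bijection between these geodesic quadruples -- collected across all subsets $S$ -- and $NC(p)$ via the Biane correspondence of Lemma \ref{lem:S_p}. This is consistent with Theorem \ref{thm:asympt-UV}: projecting out the Bell direction heuristically decouples the two conjugate channels and should restore the independent-channel bulk MP($c^{2}$). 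The case $p=2$ is verified separately by direct evaluation of $\E\trace(Z^{2})$, $\E a_{2}$ and $\E a_{1}^{2}$. Almost-sure convergence is then obtained by the Borel-Cantelli / variance argument used throughout the paper: a parity argument on $\S_{4p}$ forces the variance of $\trace((c^{2}n^{2}QZQ)^{p})$ to be $O(n^{-2})$, hence summable.

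The hardest step is the sub-leading identification. Cancelling the $(cn)^{-p}$ order is an automatic consequence of the binomial identity, but assembling the surviving signed sum into the free Poisson generating function is not a standard moments-cumulants or connectedness manoeuvre, as stressed in the introduction: for each individual $S$ the sub-leading piece is not itself an MP($c^{2}$) moment, and only the full signed sum over $S$, together with the M\"obius weights $\Mob(\alpha\beta^{-1})$ on the near-geodesic permutation pairs, reproduces $\sum_{\sigma\in NC(p)}c^{2\#\sigma}$. Making this precise will require a refinement of the $\gamma$-vs-$\gammat$ Kreweras-complement analysis from the proof of Theorem \ref{thm:asympt-Z}, now tracking how the $m$ Bell-contraction insertions modify the geodesic structure in $\S_{2p}$ across the different summands $S$.
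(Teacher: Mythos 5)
Your overall strategy (expand $Q=\I-E$, apply Weingarten calculus, look for a signed cancellation across the $2^{p}$ summands) is the same as the paper's, but the cancellation mechanism you propose is materially weaker than what the paper actually establishes, and the gap you flag at the end is exactly where your approach breaks down.

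The paper's key lemma is an \emph{exact, non-asymptotic} statement at the permutation level: for every $\alpha$ in the set $\V$ of ``vertical-line'' permutations (those for which $\alpha\delta$ has a fixed point), the inner sum
\[\sum_{f\in\F_{p}}(-1)^{|f^{-1}(E)|}\,n^{-(|f^{-1}(E)|+|\alpha\fhat^{-1}|)}\]
vanishes \emph{identically} in $n$ and $k$. This is proved by the sign-reversing involution $T_{i}$ (flip the choice function at a position $i$ witnessing $\alpha\in\V$), which pairs each $f$ with $T_{i}f$ so that the two contributions cancel. This removes precisely the class of permutations responsible for the $1/(cn)$-scale eigenvalue, and it is what legitimises restricting the subsequent minimisation to $\alpha\notin\V$, where $|\alpha\delta|\ge p$ forces the $n^{-(2p-2)}$ scale and the geodesic analysis $\id\to\alpha^{T}\to\gamma^{T}$ with $\alpha^{B}=(\alpha^{T})^{-1}$ produces exactly $\sum_{\sigma\in NC(p)}c^{2\#\sigma}$.

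Your binomial identity $\sum_{m}(-1)^{m}\binom{p}{m}=0$ is a different and much coarser mechanism. It is only a statement about the top $(cn)^{-p}$ order, and it requires proving that every $\E[\prod_{j}a_{g_{j}(S)}]$ has the \emph{same} leading coefficient to within an error $o(n^{-(2p-2)})$ -- a quantitative estimate you neither state nor prove. Even granting it, the binomial cancellation does not tell you \emph{which} pairs $(\alpha,\beta)$ survive; it annihilates an asymptotic order, not a set of permutations. You then need to re-derive from scratch which terms contribute at order $n^{-(2p-2)}$, and your own text concedes this (``for each individual $S$ the sub-leading piece is not itself an MP$(c^{2})$ moment'' and the identification ``will require a refinement of the $\gamma$-vs-$\gammat$ Kreweras-complement analysis''). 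That ``refinement'' is not a refinement but the actual theorem: the $\V$-involution, the no-fixed-point bound $|\alpha\delta|\ge p$, the reduction to geodesics with $\alpha=\alpha^{T}\oplus(\alpha^{T})^{-1}$, and the matching with $NC(p)$. None of it is supplied. The rank-one factorisation $\trace(ZB^{S}_{1}\cdots ZB^{S}_{p})=\prod_{j}a_{g_{j}(S)}$ is correct and is pleasant bookkeeping, but it does not by itself shortcut any of the combinatorics, because after applying Weingarten you are summing over exactly the same $(\alpha,\beta,f)$ as the paper, just grouped differently. In short, the proposal identifies the right expansion but substitutes a heuristic for the one genuinely hard lemma and leaves the rest of the argument (including the $p=2$ exceptional case and the variance bound, for which a ``parity argument on $\S_{4p}$'' is merely asserted) unproved.
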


\begin{pf}
We compute the moments of the random matrix $c^2n^2QZQ$ and show that
they converge to the corresponding moments of the limit law:
\[
\lim_{n \to\iy} \frac{1}{n^2}\E\trace(c^2n^2QZQ)^p = \int x^p
\,d\pi_{c^2}(x).
\]

We start by replacing $Q=\I-E$ and expanding the product
\begin{eqnarray*}
&&\frac{1}{n^2}\E\trace(c^2n^2QZQ)^p\\
 && \qquad = c^{2p}n^{2p-2}\E\trace(\I
-E)Z(\I-E)Z\cdots(\I-E)Z \\
 && \qquad = c^{2p}n^{2p-2} \sum_{f \in\F_p}
(-1)^{|f^{-1}(E)|}n^{-|f^{-1}(E)|} \E\trace f(1) Z f(2)Z \cdots f(p) Z,
\end{eqnarray*}
where $\F$ is a set of the $2^p$ choice functions $f\dvtx \{1, 2, \ldots,
p\} \to\{\I, E\}$. Notice that in the last formula, each Bell
projector $E$ is multiplied by a factor~$-1/n$.

The moment $\E\trace f(1) Z f(2)Z \cdots f(p) Z$ is computed with our
graphical calculus, and the computation is similar to those in the
proof of Theorem~\ref{thm:asympt-UV} (see Figure~\ref{fig:QZQ_choice} for the case $p=2$):
\[
\E\trace f(1) Z f(2)Z \cdots f(p) Z = \sum_{\alpha, \beta\in\S
_{2p}}k^{\# \alpha}n^{\#(\alpha\fhat^{-1}) +\# (\beta\delta) -
p}\Wg(\alpha\beta^{-1}),\vadjust{\goodbreak}
\]
where $\fhat\in\S_{2p}$ is the permutation associated with the
choice function $f \in\F_p,$ describing the way $f$ connects the
different instances of the channel. The exact action of $\fhat$ can be
easily computed:
\begin{eqnarray*}
i^T &\stackrel{\fhat}{\mapsto}&
\cases{\displaystyle
(i-1)^T, & \quad   if   $f(i)=\I$,\cr\displaystyle
i^B, & \quad   if   $f(i)=E$,
}
\\
i^B &\stackrel{\fhat}{\mapsto}&
\cases{\displaystyle
(i+1)^B, & \quad   if   $f(i+1)=\I$,\cr\displaystyle
i^T, & \quad   if   $f(i+1)=E$,
}
\end{eqnarray*}
where the arithmetic operations of indices $i$ should be understood
modulo~$p$.

When trying to compute the leading order terms in the expression of  $\E
\trace(n^2QZQ)^p$, we have to understand the possible cancellations of
high powers in $n$. When writing the exact formula for the $p$th moment
and separating the $(\alpha, \beta)$ and $f$ parts, we get
\begin{eqnarray*}
 \frac{1}{n^2}\E\trace(c^2n^2QZQ)^p &=& c^{2p}\sum_{\alpha, \beta\in
\S_{2p}} n^{5p-2-|\beta\delta|}k^{2p-|\alpha|} \Wg(\alpha\beta
^{-1})\\
&&\hphantom{c^{2p}\sum_{\alpha, \beta\in
\S_{2p}}}
{}\times \sum_{f \in\F_p} (-1)^{|f^{-1}(E)|} n^{-(|f^{-1}(E)| +
|\alpha\fhat^{-1}|)}.
\end{eqnarray*}

Note that the sum over $f \in\F_p$ depends only on the permutation
$\alpha$. Next, we show that for a large class of permutations $\alpha
$ [the ones which are responsible for the \textit{large eigenvalue} of
size $1/(cn)$ of Theorem~\ref{thm:asympt-Z}], this sum is zero.
Let us introduce the set of ``vertical line permutations,''
\begin{eqnarray*}
\V&=& \{\sigma\in\S_{2p} \mid\exists i \in\{1, \ldots, p\}
\mbox{ such that } \sigma(i^T) = i^B \mbox{ or } \sigma(i^B) = i^T
\} \\
&=& \{\sigma\in\S_{2p} \mid\sigma\delta\mbox{ has at least one
fixed point}\}.
\end{eqnarray*}
Fix a permutation $\alpha\in\V$ and some index $i$ such that $\alpha
(i^T) = i^B$ or $\alpha(i^B) = i^T$. Consider the ``flip at position
$i$'' involution $T_i\dvtx \F_p \to\F_p$ which maps a choice function $f$
to the function
\[
T_i f\dvtx j \mapsto
\cases{\displaystyle
\I, & \quad   if   $j=i$   and   $f(j) = E$,\cr\displaystyle
E, & \quad   if   $j=i$   and   $f(j) = \I$,\cr\displaystyle
f(j), & \quad   if   $j \neq i$.
}
\]
We shall show that
\begin{eqnarray*}
&&\sum_{f \in\F_p} (-1)^{|f^{-1}(E)|} n^{-(|f^{-1}(E)| + |\alpha\fhat
^{-1}|)}\\
&& \qquad  = \sum_{f \in\F_p} (-1)^{|(T_i f)^{-1}(E)|} n^{-(|(T_i
f)^{-1}(E)| + |\alpha\widehat{(T_i f)}^{-1}|)},
\end{eqnarray*}
which will imply that for all $\alpha\in\V$, both sums are zero.
Since the cardinalities of the sets $f^{-1}(E)$ and $(T_i f)^{-1}(E)$
differ by\vadjust{\goodbreak} exactly one, all we need to show is that for all $f \in\F_p$,
\[
|f^{-1}(E)| + |\alpha\fhat^{-1}|= |(T_i f)^{-1}(E)| + |\alpha
\widehat{(T_i f)}^{-1}|.
\]
To this end, notice that (the order in which one multiplies the
transpositions in not important)
\[
\fhat= \prod_{j \dvtx  f(j)=E} \bigl( (j-1)^T \; j^B \bigr)
\gammat
\]
and hence $\alpha\widehat{(T_i f)}^{-1} = \alpha\fhat^{-1} \cdot
( (i-1)^T \; i^B )$. From this we find that
\[
|\alpha\widehat{(T_i f)}^{-1}| =
\cases{\displaystyle
|\alpha\fhat^{-1}| - 1, & \quad   if   $(i-1)^T$   and   $i^B$
  belong to\cr & \quad  the same orbit of   $\alpha\fhat^{-1}$,\cr\displaystyle
|\alpha\fhat^{-1}| + 1, & \quad   otherwise.
}
\]
Let us now suppose that $\alpha(i^T) = i^B$, the other case $\alpha
(i^B) = i^T$ being similar. If $f(i)=\I$, then $\fhat(i^T) =
(i-1)^T$, $\alpha\fhat^{-1}((i-1)^T )= i^B$ and thus $|\alpha\fhat
^{-1}| - |\alpha\widehat{(T_i f)}^{-1}| = 1$. On the other hand,
$f(i)=\I\Rightarrow(T_i f)(i) = E$ and then $|f^{-1}(E)| - |(T_i
f)^{-1}(E)| = -1,$ and we see that the differences compensate. The case
$f(i) = E$ is treated in a similar manner.

We have proven that for all permutations $\alpha\in\V$, the sum over
all choices $f \in\F_p$ is exactly zero; notice that the computations
we have carried out up to this point are \textit{nonasymptotic}; they
are true at fixed matrix sizes $n$ and~$k$. We now interchange the sums
over $(\alpha, \beta)$ and $f$, replace $k=cn$ and use the first-order asymptotic for the Weingarten function:
\begin{eqnarray*}
&&\frac{1}{n^2}\E\trace(c^2n^2QZQ)^p \\
&& \qquad \sim \sum
_{\alpha, \beta\in\S_{2p}, \alpha\notin\V}
n^{3p-2-(|\beta\delta| + |\alpha| + 2|\alpha\beta^{-1}|)}
c^{2p-(|\alpha| + |\alpha\beta^{-1}|)} \Mob(\alpha\beta^{-1})
\\
&& \qquad \quad  \hphantom{\sum
_{\alpha, \beta\in\S_{2p}, \alpha\notin\V}}{}\times \sum_{f \in\F_p} (-1)^{|f^{-1}(E)|} n^{-(|f^{-1}(E)| + |\alpha
\fhat^{-1}|)}.
\end{eqnarray*}

To obtain the dominant power of $n$, we must minimize the following
quantity over $(\alpha, \beta, f) \in(\S_{2p} \setminus\V) \times
\S_{2p} \times\F_p$:
\[
S(\alpha, \beta, f) = |\beta\delta| + |\alpha| + 2|\alpha\beta
^{-1}| + |f^{-1}(E)| + |\alpha\fhat^{-1}|.
\]
Since $\alpha\notin\V$, $\alpha\delta$ has no fixed point and
hence $|\alpha\delta| \geq p$. Using the facts that $|\alpha\beta
^{-1}| + |\beta\delta| \geq|\alpha\delta|$, $|\alpha\beta^{-1}|
\geq0$ and $|\alpha|+|\alpha\fhat^{-1}| \geq|\fhat|$, we obtain that
\[
S(\alpha, \beta, f) \geq p + |f^{-1}(E)| + |\fhat|
\]
with equality if and only if $\beta= \alpha$, $|\alpha\delta| = p$
and $\alpha$ is on the geodesic between $\id$ and $\fhat$. On the
other hand, we can easily compute the number\vadjust{\goodbreak} of cycles of $\fhat$:
\[
\# \fhat=
\cases{\displaystyle
2, & \quad   if   $f \equiv\I$,\cr\displaystyle
|f^{-1}(E)|, & \quad   otherwise .
}
\]
Hence, $S(\alpha, \beta, f) \geq3p-2$, with equality if and only if
$f \equiv\I$, $\beta= \alpha$, $|\alpha\delta| = p$ and $\alpha$
is a permutation on the geodesic $\id\to\hat{\I} = \gamma$. Since
$\gamma= \gamma^T \oplus\gamma^B$ is a disjoint union of the two
cycles $\gamma^T = (p^T \cdots2^T 1^T)$ and $\gamma^B = (1^B 2^B
\cdots p^B) = (\gamma^T)^{-1}$, the condition that $\alpha$ should be
a geodesic permutation amounts to $\alpha= \alpha^T \oplus\alpha
^B$, where $\alpha^{T,B} \in\S_p$ are geodesic permutations with
respect to the cycles $\gamma^{T,B}$. We can easily show that $\#
((\alpha^T \oplus\alpha^B) \delta) = \#(\alpha^T \alpha^B)$,
where the first permutation is an element of $\S_{2p}$ and the second
one is an element of $\S_p$. Using this equality, the condition
$|\alpha\delta| = p$ implies that $\alpha^T \alpha^B = \id_p$.
When putting all these considerations together, one obtains the final
formula for the dominant term of the $p$th moment of $QZQ$:
\[
\frac{1}{n^2}\E\trace(c^2n^2QZQ)^p \sim\sum_{\id\to\alpha^T \to
\gamma^T} c^{2p-2 |\alpha^T|} \Mob(\id) = \sum_{\id\to\alpha^T
\to\gamma^T} c^{2\#\alpha^T}.
\]
Following the proof of Theorem~\ref{thm:asympt-UV}, the moments of the
Marchenko--Pastur distribution of parameter $c^2$ are easily recognized
and the convergence in moments is settled. The proof of the almost sure
convergence is more involved and can be found in the \hyperref[appm]{Appendix}.
\end{pf}

From this we deduce the following theorem, which summarizes all the
results obtained thus far in this section.

\begin{theorem}
\label{thm:asymptotics-Z}
The eigenvalues $\lambda_1\geq\cdots\geq\lambda_{n^2}$ of $Z$ are
such that:
\begin{itemize}
\item
in probability, $cn\lambda_1\to1;$
\item
almost surely,
$\frac{1}{n^2-1}\sum_{i=2}^{n^2}\delta_{c^2n^2\lambda_i}$ converges
to a free Poisson distribution with parameter $c^2$.
\end{itemize}
\end{theorem}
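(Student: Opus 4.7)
The plan is to deduce both assertions from Theorems \ref{thm:asympt-Z} and \ref{thm:asmpt-QZQ}, complemented by the Hayden--Winter lower bound on the largest eigenvalue.

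\textbf{Second bullet (bulk).} Starting from the operator identity
\[
Z - QZQ \;=\; EZ + ZE - EZE,
\]
each summand is a rank-one operator (since $E$ has rank one), so the right-hand side is Hermitian of rank at most two. Weyl's inequality for low-rank Hermitian perturbations then yields $\lambda_{i+2}(QZQ) \leq \lambda_i(Z) \leq \lambda_{i-2}(QZQ)$, and consequently the empirical measures
\[
\frac{1}{n^2-1}\sum_{i=2}^{n^2}\delta_{c^2n^2\lambda_i(Z)} \quad\text{and}\quad \frac{1}{n^2}\sum_{j=1}^{n^2}\delta_{c^2n^2\lambda_j(QZQ)}
\]
have Kolmogorov distance $O(n^{-2})$. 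Here one uses that $c^2n^2\lambda_1(Z)$ tends to infinity, so discarding the top eigenvalue of $Z$ does not affect the bulk limit. Theorem \ref{thm:asmpt-QZQ} asserts that the second measure converges almost surely to $\pi_{c^2}$, and hence so does the first.

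\textbf{First bullet, upper bound.} For any fixed $p \geq 3$, the deterministic inequality $(cn\lambda_1)^p \leq \trace((cnZ)^p)$ combined with Theorem \ref{thm:asympt-Z} gives
\[\E (cn\lambda_1)^p \;\leq\; 1 + O(n^{-1}).\]
Markov's inequality then produces $\limsup_n \P(cn\lambda_1 > 1+\varepsilon) \leq (1+\varepsilon)^{-p}$ for every $\varepsilon > 0$ and every admissible $p$; letting $p \to \infty$ one concludes that $\P(cn\lambda_1 > 1+\varepsilon) \to 0$.

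\textbf{First bullet, lower bound.} Since $E$ is a rank-one projector and $Z \geq 0$, one has $\lambda_1 \geq \trace(EZ)$. Observe that the bulk statement alone cannot reveal the correct scaling of $\lambda_1$: it only delivers $\sum_{i \geq 2}\lambda_i \to 1$ without an effective rate. To circumvent this, I would perform a direct Weingarten computation on the single-insertion quantity $\trace(EZ) = \trace(E \cdot \Phi\otimes\overline{\Phi}(E))$, which parallels but is much shorter than the proof of Theorem \ref{thm:asympt-Z}. It yields
\[
\E\,\trace(EZ) \;=\; \frac{1}{cn} + O(n^{-2}), \qquad \mathrm{Var}\bigl(\trace(EZ)\bigr) \;=\; o(n^{-2}).
\]
Chebyshev's inequality then produces $cn\,\trace(EZ) \to 1$ in probability, which yields the matching lower bound on $cn\lambda_1$.

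\textbf{Main obstacle.} The delicate point is establishing $\mathrm{Var}(\trace(EZ)) = o(n^{-2})$ with enough sharpness for Chebyshev to apply. As highlighted in the remark following Theorem \ref{thm:asympt-Z}, the model $Z$ behaves like an \emph{orthogonal} rather than unitary ensemble: moment expansions naturally contain $O(n^{-1})$ corrections rather than the $O(n^{-2})$ ones typical of unitarily invariant models, so one must track carefully which diagrammatic contributions to $\E\trace(EZ)^2$ cancel against $(\E\trace(EZ))^2$. It is precisely this phenomenon that prevents upgrading the convergence $cn\lambda_1 \to 1$ from ``in probability'' to ``almost sure''.
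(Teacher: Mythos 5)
Your treatment of the second bullet is correct and essentially the same as the paper's: the paper uses Cauchy's interlacing theorem directly on the compression $QZQ\in\M_{n^2-1}(\C)$ (eigenvalues of $Z$ and $QZQ$ interlace), whereas you deduce the same index shift from Weyl's inequality applied to the rank-$2$ Hermitian perturbation $Z-QZQ=EZ+ZE-EZE$. Both give a uniform $O(n^{-2})$ bound on the difference of the empirical distribution functions, so the bulk convergence follows from Theorem~\ref{thm:asmpt-QZQ} in either version.

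For the first bullet, however, your lower-bound argument diverges from the paper and contains a genuine gap. You correctly note $\lambda_1 \geq \trace(EZ)$, but then propose to prove $cn\,\trace(EZ)\to 1$ in probability by computing $\E\trace(EZ)$ and $\mathrm{Var}(\trace(EZ))$ and applying Chebyshev; you yourself flag the needed sharpness $\mathrm{Var}(\trace(EZ))=o(n^{-2})$ as the ``main obstacle,'' and indeed nothing in what you write establishes it. This extra machinery is unnecessary: the Hayden--Winter observation you cite in your opening sentence is \emph{deterministic}, giving $\trace(EZ)\geq 1/k$, hence $cn\lambda_1\geq 1$ (up to the negligible discrepancy $k\sim cn$). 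With that in hand, the paper's argument is much shorter and simultaneously produces the upper control: from $1\leq cn\lambda_1\leq (cn\lambda_1)^3\leq \trace\!\left((cnZ)^3\right)$ and $\E\trace\!\left((cnZ)^3\right)=1+O(n^{-1})$ (Theorem~\ref{thm:asympt-Z}), one gets $\E[cn\lambda_1]=1+O(n^{-1})$; since $cn\lambda_1-1\geq 0$ pointwise with mean tending to $0$, Markov's inequality immediately yields convergence in probability. Your fixed-$p$ Markov argument for the upper bound is also correct, but it is superseded by this one-line estimate. In short: replace your entire lower-bound paragraph (including the proposed Weingarten computation and variance control) by the deterministic Hayden--Winter inequality, and feed it into the third-moment bound; no concentration estimate on $\trace(EZ)$ is required.
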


\begin{pf}
Let $\tilde{\lambda}_1\geq\cdots\geq\tilde{\lambda}_{n^2-1}$ be
the eigenvalues of $QZQ$, seen as a matrix in $\M_{n^2-1}(\C)$. By
Cauchy's interlacing theorem (\cite{bhatia},
Corollary III.1.5), the eigenvalues of $QZQ$ and those of $Z$ are
intertwined and satisfy
\[
\lambda_1 \geq\tilde\lambda_1 \geq\lambda_2 \geq\cdots\geq
\lambda_{n^2-1} \geq\tilde\lambda_{n^2-1} \geq\lambda_{n^2}.
\]
Therefore, the second statement follows immediately from Theorem \ref
{thm:asmpt-QZQ}.

For the first statement, we have
\[
1\leq cn\lambda_1\leq c^3n^3\lambda_1^3\leq c^3n^3\lambda_1^3+\cdots+
c^3n^3\lambda_{n^2}^3,
\]
so the inequality obtains if one takes expectations. In addition, we
know from Theorem~\ref{thm:asympt-Z} that
$\E[c^3n^3Z^3]=1+O(n^{-1})$, and
therefore
\[
\E[cn\lambda_1]=1+O(n^{-1}).
\]
This proves the first statement.\vadjust{\goodbreak}
\end{pf}

An important result for quantum information theoretic purposes is the following.
\begin{proposition}
\label{prop:entropy_bicanal}
Almost surely, in the limit $n \to\iy$, the von Neumann entropy of
the matrix $Z$ satisfies
\[
H(Z) =
\cases{\displaystyle
2\log n - \frac{1}{2c^2}+o(1), & \quad   if   $c\geq1$,\cr\displaystyle
2\log(cn) - \frac{c^2}{2}+o(1), & \quad   if   $0<c<1$.
}
\]
\end{proposition}

\begin{pf}
We use the fact that $cn\lambda_1\geq1$.
Since $x\log x\leq x^3-1$ for any $x\geq1$, we have
\[
cn\lambda_1\log(cn\lambda_1)\leq(cn\lambda_1)^3-1\leq\sum
_{i=1}^{n^2} (cn\lambda_i)^3-1.
\]
Taking the expectation and using Theorem~\ref{thm:asympt-Z}, we get
\[
\E[cn\lambda_1\log(cn\lambda_1)]=O(n^{-1}).
\]
Similarly, we know by Theorem~\ref{thm:asympt-Z} that
\[
\E[\lambda_1]=O(n^{-1}).
\]
Putting this together, we obtain
\[
\E[-\lambda_1\log\lambda_1]=o(1).
\]

We are now left with evaluating
\[
\E[-\lambda_2\log\lambda_2 - \cdots- \lambda_{n^2}\log\lambda_{n^2}].
\]
This can be done exactly in the same way as in Corollary \ref
{cor-entropy}, and we then obtain the desired formula.
\end{pf}

\begin{remark}
It is important to remark here that the estimate of Propositions
\ref{prop:entropy_bicanal} and
\ref{prop:entropy_independent}
are asymptotically the same. This implies that in this scaling,
the choice $U_1=\overline{U_2}$ is irrelevant in the construction of
counterexamples to the additivity problem.
However, it remains to be checked whether this scaling indeed yields
counterexamples with high probability, and
this is not clear from our first-order asymptotics.
\end{remark}

\begin{appendix}
\section*{Appendix}\label{appm}

In this appendix we present the complete proofs of the almost sure
convergence statements in Theorems~\ref{thm:asympt-UV} and \ref
{thm:asmpt-QZQ}.

\begin{pf*}{Proof of Theorem~\ref{thm:asympt-UV}, continued (Almost
sure convergence)}
We have already proven the convergence \textit{in moments}.
To prove the almost sure convergence,\vadjust{\goodbreak} it is sufficient to show that for
each $p,$ the series of covariance of the $p$ moments is convergent. A~classical application of the Borel--Cantelli lemma then suffices to
complete the proof.

We start with the simplest term, $\E[\tracenorm
_{n^2}((c^2n^2Z)^p)]^2$. Since we need to compute its first two terms in
the asymptotic expansion in $n$, we look at the subleading term
($n^{-1}$) of $\E[\tracenorm_{n^2}((c^2n^2Z)^p)]$. Such terms come
from permutations for which the exponent $\mathcal P_n$ has value
$2(p-1)+1 = 2p-1$. Analyzing equations~(\ref{eq:UV_ineq_first})--(\ref
{eq:UV_ineq_last}), we see that the permutations which ``almost''
saturate the bound are those which verify $\id\to\alpha_U=\beta_U
\to\gamma$, $\id\to\alpha_V=\beta_V \to\gamma$ and $|\beta
_U^{-1}\beta_V|=|\alpha_U^{-1}\alpha_V|=1$. In conclusion, we have
\begin{eqnarray*}
&&\E[\tracenorm_{n^2}((c^2n^2Z)^p)]\\
&& \qquad  \sim\sum_{\id\to\alpha_U=\beta
_U=\alpha_V=\beta_V \to\gamma} c^{2\#\alpha_U} + n^{-1} \mathop{\mathop{\mathop{\sum}_{\id\to\alpha_U=\beta_U \to
\gamma}}_{
\id\to\alpha_V=\beta_V \to\gamma}}_{ |\beta_U^{-1}\beta_V|=1}
c^{\#\alpha_U+\#\alpha_V} + O(n^{-2}).
\end{eqnarray*}
Taking the square gives
%
\begin{eqnarray}
\label{eq:UV_sq}
&&\E[\tracenorm_{n^2}((c^2n^2Z)^p)]^2 \nonumber\\
&& \qquad = \biggl[\sum
_{\id\to\alpha_U=\beta_U=\alpha_V=\beta_V \to\gamma} c^{2\#
\alpha_U} \biggr]^2\nonumber
\\[-8pt]
\\[-8pt]
&& \qquad  \quad {}+ n^{-1} 2 \biggl[\sum_{\id\to\alpha_U=\beta_U=\alpha
_V=\beta_V \to\gamma} c^{2\#\alpha_U} \biggr]\cdot\biggl[\mathop{\mathop
{\mathop{\sum}_{\id\to\alpha_U=\beta_U \to\gamma}} _{\id\to
\alpha
_V=\beta_V \to\gamma}}_{ |\beta_U^{-1}\beta_V|=1} c^{\#\alpha_U+\#
\alpha_V} \biggr]\nonumber\\
&& \qquad  \quad {} + O(n^{-2}).
\nonumber
\end{eqnarray}
In order to compute the asymptotic expansion of $\E[
(\tracenorm_{n^2}((c^2n^2Z)^p )^2 )]$, we must consider two
copies of the diagram corresponding to the $p$th power of $c^2n^2Z$.
The boxes are originally connected by the permutation
\[
\gamma_2 = (p \; p-1 \; \cdots\; 2 \; 1) (2p \; 2p-1 \; \cdots\; p+2
\; p+1)\in\S_{2p}.
\]
After counting the loops, we finds an analogous formula for the mean trace,
\begin{eqnarray*}
&&\E[ (\tracenorm_{n^2}((c^2n^2Z)^p )^2) ]\\
&& \qquad  \sim
\sum_{\alpha_U, \beta_U, \alpha_V, \beta_V \in\S_{2p}}
n^{4p-4-\mathcal P_{n,2}} c^{4p-\mathcal P_{c,2}} \Mob(\alpha_U\beta
_U^{-1})\Mob(\alpha_V\beta_V^{-1}),
\end{eqnarray*}
where
\[
\mathcal P_{n,2} = |\alpha_U| + |\alpha_V| + |\gamma_2^{-1}\alpha
_U| + |\gamma_2^{-1}\alpha_V| + |\beta_U^{-1}\beta_V| + 2|\alpha
_U\beta_U^{-1}| + 2|\alpha_V\beta_V^{-1}|
\]
and
\[
\mathcal P_{c,2} = |\alpha_U| + |\alpha_V| + |\alpha_U\beta_U^{-1}|
+ |\alpha_V\beta_V^{-1}|.\vadjust{\goodbreak}
\]
Using the same inequalities and arguments as above, we find that
$\mathcal P_{n,2} \geq2|\gamma_2| = 4p-4$ with equality iff $\id
\alpha_U = \beta_U = \alpha_V = \beta_V \to\gamma_2$ is a
geodesic. Since $\gamma_2$ contains two $p$-cycles, the preceding
condition is equivalent to $\alpha_U = \beta_U = \alpha_V = \beta_V
= \alpha\oplus\alpha'$, where $\alpha\in\S_p$ and $\alpha' \in
\S\{p+1, p+2, \ldots, 2p\} \isom\S_p$ are such that $\id\to\alpha
\to\gamma$ and $\id\to\alpha' \to\gamma$ are geodesics. Since M\"obius
functions vanish, this dominating term is equal to the first term
in the asymptotic expansion~(\ref{eq:UV_sq}) of $\E[\tracenorm
_{n^2}((c^2n^2Z)^p)]^2$. The term responsible for the $n^{-1}$
contribution comes from permutations $\alpha_U, \beta_U, \alpha_V,
\beta_V \in\S_{2p}$ such that $\id\to\alpha_U=\beta_U \to\gamma
_2$, $\id\to\alpha_V=\beta_V \to\gamma_2$ and $|\beta
_U^{-1}\beta_V|=|\alpha_U^{-1}\alpha_V|=1$. Since, from the geodesic
condition, $\alpha_U = \alpha'_U \oplus\alpha''_U$ and $\alpha_V =
\alpha'_V \oplus\alpha''_V$, the condition $|\alpha_U^{-1}\alpha
_V|=1$ is equivalent to either
\[
\alpha'_U = \alpha'_V  \quad \mbox{and} \quad  |(\alpha
''_U)^{-1}\alpha''_V|=1
\]
or
\[
|(\alpha'_U)^{-1}\alpha'_V|=1  \quad \mbox{and} \quad  \alpha''_U =
\alpha''_V.
\]
Summing these contributions, we find the term in $n^{-1}$ from equation
(\ref{eq:UV_sq}). Hence, the dominating ($n^0$) and the subdominating
($n^{-1}$) terms from\break $\E[\tracenorm_{n^2}((c^2n^2Z)^p)]^2$ and $\E
[ (\tracenorm_{n^2}((c^2n^2Z)^p )^2 )]$ are
equal, which implies that the general term of the series of covariances
has order $n^{-2}$. The series is thus summable and a
Borel--Cantelli-type argument completes the proof of the almost sure
convergence from Theorem~\ref{thm:asympt-UV}.
\end{pf*}

\begin{pf*}{Proof of Theorem~\ref{thm:asmpt-QZQ}, continued (Almost
sure convergence)}
We now prove the almost sure convergence statement of Theorem \ref
{thm:asmpt-QZQ}. We use the same technique as before, showing that the
covariance series converges. The first step is to analyze the
subleading terms ($n^{-1}$) in the expression of the $p$th moment for
one copy of the channel. Recall that the exponent of $n$ was given by
the expression
\[
S(\alpha, \beta, f) = |\beta\delta| + |\alpha| + 2|\alpha\beta
^{-1}| + |f^{-1}(E)| + |\alpha\fhat^{-1}|.
\]
Using the triangle inequality $|\alpha|+ |\alpha\fhat^{-1}| \geq
|\fhat|$, we split this minimization task into two independent problems:
\[
\mbox{minimize}  \qquad |\beta\delta| + 2|\alpha\beta^{-1}|
\]
and
\[
\mbox{minimize}  \qquad |f^{-1}(E)| + |\fhat|.
\]
The $2p-2$ minimum in the second problem is reached for $f \equiv\I$;
if $f$ is different from $\I$, it follows from the above analysis that
$|f^{-1}(E)| + |\fhat| \geq2p$ and thus only $f \equiv\I$
contributes to the subleading $n^{-1}$ term. Moreover, a~parity
argument for the geodesic inequality $|\alpha|+ |\alpha\fhat^{-1}|
\geq|\fhat|$ implies that the permutation $\alpha$ must lie on the
geodesic between $\id$ and $\hat{\I} = \gamma$. Let us now describe
the couples $(\alpha, \beta) \in\S_{2p}^2$ such that $|\beta\delta
| + 2|\alpha\beta^{-1}| = p+1$. Since $|\beta\delta| + |\alpha
\beta^{-1}| \geq|\alpha\delta| \geq p$, we need to consider two cases.\vadjust{\goodbreak}

In the first case, we assume that $|\alpha\delta| = p+1$ and $\alpha
=\beta$. Since $\alpha$ is a geodesic permutation, $\alpha= \alpha
^T \oplus\alpha^B$ and the condition $|\alpha\delta| = p+1$ is
equivalent to $|\alpha^T \alpha^B|=1$. In conclusion, this case gives
a contribution of
\[
\frac{1}{n} \mathop{\mathop{\mathop{\sum}_{\id\to\alpha^T \to
\gamma^T }}_{ \id
\to\alpha^B \to\gamma^B}}_{ |\alpha^T \alpha^B|=1} c^{\#\alpha^T
+ \#\alpha^B}.
\]

In the second case, $|\alpha\delta| = p$ and $|\alpha\beta^{-1}| =
1$. This corresponds to $\alpha^T = (\alpha^B)^{-1}$, $|\alpha\beta
^{-1}| = 1$ and $|\beta\delta| = p$. Since $\beta$ is at distance 1
from $\alpha$, $\beta= \alpha(i^s \; j^t)$ for some $i, j \in\{1,
\ldots, p\}$ and $s,t \in{T,B}$. If $s=t$, then $\beta\notin\V$
and thus $|\beta\delta|\geq p,$ which is impossible. We can now
assume that $\beta= \alpha(i^T \; j^B)$ for some $i,j$. In order to
have $|\beta\delta| < p$, the permutation $\beta\delta$ must have
at least two fixed points. Using
\[
[\alpha^T \oplus(\alpha^T)^{-1} \cdot(i^T \; j^B)
](k^T) =
\cases{\displaystyle
(\alpha^T(k))^T, & \quad   if   $k \neq i$, \cr\displaystyle
((\alpha^T)^{-1}(j))^B, & \quad   if   $k = i$
}
\]
and
\[
[\alpha^T \oplus(\alpha^T)^{-1} \cdot(i^T \; j^B)
](k^B) =
\cases{\displaystyle
((\alpha^T)^{-1}(k))^B, &\quad  if   $k \neq j$, \cr\displaystyle
(\alpha^T(i))^T, &\quad  if  $ k = j$,
}
\]
we conclude that in order to get an $n^{-1}$ contribution, we must have
$\alpha^T(i) = j$. Hence, for each geodesic permutation $\alpha,$ we
can find $p$ permutations $\beta$ such that $|\alpha\beta^{-1}| = 1$
and $|\beta\delta| = p-1$. We obtain a total contribution of
[use $\Mob(\mathrm{transposition}) = -1$]
\[
-\frac{p}{n} \sum_{\id\to\alpha^T \to\gamma^T} c^{2\#\alpha^T-1}.
\]

Putting the first- and second-order contributions together, we obtain
the asymptotic expansion for the square of the expected normalized trace:
%
\begin{eqnarray}\label{eq:as_conv_QZQ}
\hspace{10pt}\E[\tracenorm_{n^2}(c^2n^2QZQ)^p]^2\nonumber
 &=& \biggl[ \sum_{\id\to\alpha
^T \to\gamma^T} c^{2\#\alpha^T} \biggr]^2\\
  &&{}+\frac{2}{n} \biggl[ \sum_{\id\to\alpha^T \to\gamma^T} c^{2\#
\alpha^T} \biggr]\\
&& \quad {}\times\biggl[ \mathop{\mathop{\mathop{\sum}_{\id\to\alpha
^T \to
\gamma^T }}_{ \id\to\alpha^B \to\gamma^B}}_{|\alpha^T \alpha
^B|=1} c^{\#\alpha^T + \#\alpha^B} - p \sum_{\id\to\alpha^T \to
\gamma^T} c^{2\#\alpha^T -1} \biggr].\nonumber
\end{eqnarray}

Let us now analyze the second term in the expression of the covariance,
$\E[(\tracenorm_{n^2}(c^2n^2QZQ)^p)^2]$. The exponent we want to
minimize in this situation~is
\[
S^{(2)}(\alpha, \beta, f) = \bigl|\beta\delta^{(2)}\bigr| + |\alpha| +
2|\alpha\beta^{-1}| + |f^{-1}(E)| + |\alpha\fhat^{-1}|,
\]
where $\alpha, \beta$ are permutations in $\S_{4p}$, and the choice
function $f\dvtx \{1, \ldots, p, p+1, \ldots, 2p\} \to\{\I, E\}$ encodes
the way the $Z$ boxes are connected. Note, however, that in this case,
the diagram under consideration has at least two connected components
since we are dealing with a product of traces. Considerations similar
to the ones in the proof of the convergence in moments lead to the
conclusion that permutations $\alpha\in\V^{(2)}$ do not contribute,
so we can restrain our minimization problem to the set $(\S_{4p}
\setminus\V^{(2)}) \times\S_{4p} \times\F_{2p}$. Using the
triangle inequality $|\alpha|+ |\alpha\fhat^{-1}| \geq|\fhat|$, we
again split our problem into two independent parts: one minimization
problem for the choice function $f$ and another for the couple $(\alpha
, \beta)$. The minimization problem for $f$ is the same as in the
single channel case, with the difference that $f$ is now defined on a
set of cardinality $2p$. The quantity $|f^{-1}(E)| + |\fhat|$ is
minimized for $f \equiv\I$ and the minimum is equal to $4p-4$. Notice
that in this case, the corresponding permutation $\hat{\I}$ has the
following cycle structure: $\hat{\I} = \gamma^{T, 1} \oplus\gamma
^{T, 2} \oplus\gamma^{B, 1} \oplus\gamma^{B, 2}$, where
\begin{eqnarray*}
\gamma^{T, 1} &=& \bigl( p^T \; (p-1)^T \; \cdots\; 1^T \bigr); \\
\gamma^{T, 2} &=& \bigl( (2p)^T \; (2p-1)^T \; \cdots\; (p+1)^T
\bigr); \\
\gamma^{B, 1} &=& ( 1^B \; 2^B \; \cdots\; p^B ); \\
\gamma^{B, 2} &=& \bigl( (p+1)^B \; (p+2)^B \; \cdots\; (2p)^B \bigr).
\end{eqnarray*}
Geodesic permutations $\id_{4p} \to\alpha\to\hat{\I}$ share the
same cyclic decompositions and we can easily find the dominating term
in this case:
\[
\E[(\tracenorm_{n^2}(c^2n^2QZQ)^p)^2] = \mathop{\mathop{\sum}_{\id_{p}
\to\alpha^{T, 1} \to\gamma^{T, 1} \isom\gamma^T }}_{ \id\to
\alpha
^{T,2} \to\gamma^{T,2}\isom\gamma^T} c^{\#\alpha^{T,1} + \#\alpha
^{T,2}} + o(1),
\]
which is the same as the first term in equation (\ref
{eq:as_conv_QZQ}). Let us now move on to the subleading term in the
asymptotic expansion of $\E[(\tracenorm_{n^2}(c^2n^2QZQ)^p)^2]$. As
in the previous case, $f \equiv\I$ and contributing couples $(\alpha
, \beta)$ are of two types:\vspace*{1pt} permutations such that $|\alpha\delta
^{(2)}| = 2p+1$ and $\alpha= \beta,$ or couples such that $|\beta
\delta^{(2)}| = 2p-1$ and $|\alpha\beta^{-1}| = 1$.

The analysis of the first situation is simpler: the cycle structure of
the geodesic permutation $\alpha$ implies that $|\alpha\delta^{(2)}|
= 2p + |\alpha^{T,1}\alpha^{B,1}| + |\alpha^{T,2}\alpha^{B,2}|$.
Hence, only one of $|\alpha^{T,1}\alpha^{B,1}|$ or $|\alpha
^{T,2}\alpha^{B,2}|$ is equal to 1, the other being 0. This
corresponds to a contribution of (we use the symmetry $1
\leftrightarrow2$ of the problem)
\[
\frac{2}{n} \biggl[ \sum_{\id\to\alpha^{T,1} \to\gamma^{T,1}}
c^{2\#\alpha^{T,1}} \biggr]\cdot\biggl[ \mathop{\mathop{\mathop{\sum}_{\id
\to
\alpha^{T,2} \to\gamma^{T,2} }}_{ \id\to\alpha^{B,2} \to\gamma
^{B,2}}}_{ |\alpha^{T,2} \alpha^{B,2}|=1} c^{\#\alpha^{T,2} + \#
\alpha^{B,2}} \biggr].
\]

The second contribution is calculated in a similar manner to the case
of a single trace. Permutations $\beta$ at distance 1 from geodesic
$\alpha= \alpha^{T, 1} \oplus\alpha^{T, 2} \oplus\alpha^{B, 1}
\oplus\alpha^{B, 2}$ such that $\alpha^{B, 1} = (\alpha^{T,
1})^{-1}$ and $\alpha^{B, 2} = (\alpha^{T, 2})^{-1}$ are of the form
$\beta= \alpha(i^s \; j^t)$. The condition $|\beta\delta^{(2)}| =
2p-1$ implies that we can choose the transposition $(i^T \; j^B)$ and
that $[\alpha^{T, 1} \oplus\alpha^{T, 2}](i) = j$. This last
condition implies that $i$ and $j$ have to be in the same half of the
set $\{1, \ldots, p, p+1, \ldots, 2p\}$ and, again using the symmetry
between the first and the second trace, we can write the final
contribution:
\[
-\frac{p}{n} \biggl[ \sum_{\id\to\alpha^{T,1} \to\gamma^{T,1}}
c^{2\#\alpha^{T,1}} \biggr]\cdot\biggl[ \sum_{\id\to\alpha^{T,2}
\to\gamma^{T,2}} c^{2\#\alpha^{T,2} -1} \biggr].
\]
Summing the leading ($n^0$) and the subleading ($n^{-1}$) contributions
and comparing to equation~(\ref{eq:as_conv_QZQ}), we find that
\[
\E[(\tracenorm_{n^2}(c^2n^2QZQ)^p)^2] - \E[\tracenorm
_{n^2}(c^2n^2QZQ)^p]^2 = O(n^{-2}).
\]
The convergence of the covariance series follows, completing the proof.
\end{pf*}
\end{appendix}

\section*{Acknowledgments}

The authors would like to thank the organizers of the workshop
\textit{Thematic Program on Mathematics in Quantum Information} at the
Fields Institute, where some of this work was done.

%

\printaddresses

\end{document}